\theoremstyle{plain}\newtheorem{theorem}{Theorem}[section]
\theoremstyle{plain}\newtheorem{lemma}[theorem]{Lemma}
\theoremstyle{definition}\newtheorem{definition}[theorem]{Definition}
\theoremstyle{remark}\newtheorem{remark}{Remark} 
\theoremstyle{definition}\newtheorem{def:and:lemma}[theorem]{Definition and Lemma}
\newcommand{\lsp}{\big \langle}
\newcommand{\rsp}{\big \rangle}
\newcommand{\im}{\operatorname{Im}}
\newcommand{\re}{\operatorname{Re}}
\newcommand{\lno}{\Big\vert\! \Big \vert}
\newcommand{\rno}{\Big\vert\! \Big \vert}
\newcommand{\no}{\vert\hspace{-1pt} \vert}
\newcommand{\wt}{\widetilde}
\newcommand{\p}[2]{\ensuremath{p^{\ensuremath{{#2}}}_{#1}}}
\newcommand{\q}[2]{\ensuremath{q^{\ensuremath{{#2}}}_{#1}}}
\newcommand{\cf}[2]{\ensuremath{\widehat {{#1}}^{\ensuremath{{#2}}}}}
\newcommand{\cfs}[3]{\ensuremath{\widehat {{#1}}_{#3}^{\ensuremath{{#2}}}}}
\title{Bogoliubov corrections and trace norm convergence for the Hartree dynamics}
\date{September 19, 2016}
\author{
David Mitrouskas\footnote{Corresponding author.\ Ludwig-Maximilians-Universit\"at, Mathematisches Institut, Theresienstr.\ 39, {80333} M\"unchen, Germany. E-mail: {\tt dmitrous@math.lmu.de}},
S\"oren Petrat\footnote{Institute of Science and Technology Austria (IST Austria), Am Campus 1, 3400 Klosterneuburg, Austria. E-mail: {\tt soeren.petrat@ist.ac.at}},
Peter Pickl\footnote{Ludwig-Maximilians-Universit\"at, Mathematisches Institut, Theresienstr.\ 39, {80333} M\"unchen, Germany. E-mail: {\tt pickl@math.lmu.de}}
}
\begin{document}

\maketitle

\begin{abstract}
We consider the dynamics of a large number $N$ of nonrelativistic bosons in the mean field limit for a class of interaction potentials that includes Coulomb interaction.\ In order to describe the fluctuations around the mean field Hartree state, we introduce an auxiliary Hamiltonian on the $N$-particle space that is very similar to the one obtained from Bogoliubov theory. We show convergence of the auxiliary time evolution to the fully interacting dynamics in the norm of the $N$-particle space. This result allows us to prove several other results:\ convergence of reduced density matrices in trace norm with optimal rate, convergence in energy trace norm, and convergence to a time evolution obtained from the Bogoliubov Hamiltonian on Fock space with expected optimal rate.\ We thus extend and quantify several previous results, e.g., by providing the physically important convergence rates, including time-dependent external fields and singular interactions, and allowing for general initial states, e.g., those that are expected to be ground states of interacting systems.
\end{abstract}

\section{Introduction}

A system of $N$ spinless bosons in nonrelativistic quantum mechanics is described by a wave function $\Psi \in L^2_s(\mathbb R^{3N})$, the subspace of square integrable functions that are symmetric under permutations of the variables $x_1,...,x_N \in \mathbb R^{3}$ (we only consider three dimensions here). We always assume that $\Psi$ is normalized, i.e., $\no \Psi \no = 1$, such that $\vert \Psi(x_1,...,x_N)\vert^2$ can be interpreted as the probability density of finding particle one at position $x_1$, particle two at position $x_2$, and so on. The time evolution of the wave function is governed by the nonrelativistic many-body Schr\"odinger equation
\begin{align}
\label{S:E}& i\partial_t \Psi_t =  H^t  \Psi_t,
\end{align}
where the Hamiltonian operator $H^t$ is of the form
\begin{align}
H^t = \sum_{i=1}^N h_{i}^t + \lambda_N \sum_{1\le i < j \le N} v(x_i-x_j). \label{HAMILTONIAN:H}
\end{align}
Here, $h_{i}^t=-\Delta_i + W_{i}^t$ denotes a one-particle operator, $\Delta_i$ is the Laplacian describing the kinetic energy of the $i$-th particle, $W_{i}^t=W^t(x_i)$ a possibly time-dependent external potential. The interaction between the gas particles is described by a real-valued function $v=v(x)$, e.g., the Coulomb potential $v(x)=1/\vert x\vert$. The coupling constant in front of the interaction will be chosen as $\lambda_N=1/(N-1)$ which ensures that the average interaction energy is of the same order as the kinetic energy, namely of order $N$. In this situation, a nontrivial behavior of the many-body system can be expected for large particle number $N$. Our goal in this work is to investigate the large $N$ limit of solutions to the Schr\"odinger equation and, in particular, the corrections to the leading-order mean field component of such solutions.

The physical setting we have in mind is that the gas is initially trapped in a confining potential $W^0$ and cooled down, such that $\Psi_0$ is close to the ground state of $H^0$. By removing or changing the external field $W^t$, the ground state of $H^0$ is in general not an eigenfunction of $H^t$ for $t>0$ anymore, so the time evolution is nontrivial. To our understanding, this is the picture behind the experiments of ultracold gases exhibiting the phenomenon of Bose-Einstein condensation, see, e.g., \cite{BlochDalibardZwerger:2008} and references therein.\footnote{Note, however, that for actual experiments the Gross-Pitaevskii limit is more relevant, which is more involved than the mean field limit we are considering in the present work.}\\

It has been established in many different settings that Hartree theory emerges as the macroscopic description of the low temperature many-body Bose gas in the mean field regime, i.e., for $N\to \infty$, $N \lambda_N \to 1$. Hartree theory is defined by the one-body Hamiltonian 
\begin{align}
 h^{t,\varphi}= h^t + v\ast \vert \varphi \vert^2 - \mu^\varphi, \hspace{0.5cm} \varphi \in L^2(\mathbb R^3),\label{HARTREE:HAMILTONIAN}
\end{align}
where $\ast$ denotes the convolution of functions on $\mathbb R^3$, and the phase factor $\mathbb R \ni \mu^\varphi = \frac{1}{2} \int dx \ \big(v\ast \vert \varphi \vert^2\big)(x) \vert \varphi(x)\vert^2$ is chosen for later convenience. In order to understand the relation between the microscopic model defined by \eqref{HAMILTONIAN:H} and Hartree theory, think of a completely factorized $N$-particle wave function $\Psi = \varphi^{\otimes N}$ for which the potential term in $H$ corresponds to a sum of identically and independently distributed random variables with probability density $\vert \varphi(x) \vert^2$. It follows from the law of large numbers that the potential felt by, e.g., the first particle, at position $x$ is given by
$$ \frac{1}{N-1} \sum_{i=2}^N v(x_i-x) \approx \frac{1}{N-1}  {\sum_{i=2}^N} \int  v(x_i-x) \vert \varphi (x_i) \vert^2 dx_i = \big(v\ast  \vert \varphi \vert^2\big) (x).$$
The $N$-particle Hamiltonian $H$ is hence expected to act as a sum of $N$ one-body Hamiltonians each given by \eqref{HARTREE:HAMILTONIAN}. In more precise terms, the Hartree Hamiltonian governs the leading order dynamics of a wave function which is initially close to a condensate $\varphi_0^{\otimes N}$, in the sense of
\begin{align}
\lim_{N\to \infty} \text{Tr}\big\vert  \gamma_{\Psi_0}^{(1)} - \vert \varphi_0 \rangle \langle \varphi_0 \vert \big\vert = 0 ~~~\Rightarrow~~~ \lim_{N\to \infty} \text{Tr}\big\vert  \gamma_{\Psi_t}^{(1)} - \vert \varphi_t \rangle \langle \varphi_t \vert \big\vert =0, \label{PROPAGATION:OF:CHAOS}
\end{align}
where $\text{Tr}$ denotes the trace, and the one-particle state $\varphi_t$ solves the nonlinear time-dependent Hartree equation
\begin{align}
i\partial_t \varphi_t =  h^{t,\varphi_t} \varphi_t \label{TIME:DEP:HARTREE:EQUATION}
\end{align}
with initial condition $\varphi_0$. The operator $\gamma_{\Psi}^{(1)}: L^2(\mathbb R^3) \to L^2(\mathbb R^3) $ is the one-body reduced density matrix of $\Psi\in L^2_s(\mathbb R^{3N})$, defined by its kernel
\begin{align*}
\gamma_{\Psi}^{(1)}(x,y) = \int  \Psi (x,x_2,...,x_N) \overline{ \Psi(y,x_2,...,x_N)}  dx_2...dx_N, \hspace{0.5cm} \text{and} \hspace{0.5cm} \vert \varphi_t \rangle \langle \varphi_t \vert= \gamma^{(1)}_{\varphi_t^{\otimes N}}.
\end{align*}
Implications like \eqref{PROPAGATION:OF:CHAOS} are referred to as propagation of chaos or persistence of condensation, and have been proven in different and very general settings, e.g., \cite{hepp:1974,GinibreVelo:1979a,
GinibreVelo:1979b,Spohn:1980,ErdosYau:2001,
RodnianskiSchlein:2007,
ElgartSchlein:2007,ErdosSchlein:2008,
FrohlichKnowlesSchwarz:2009,ChenLee:2010,pickl:2011method,
picklknowles:2010,Chen:2011}.
The question whether and for which situations factorization holds in the first place is answered by Hartree theory as well. The ground state of a weakly interacting Bose gas obeys the property of Bose-Einstein condensation: the wave function $\Psi^{(0)}$ corresponding to the lowest eigenvalue of a Hamiltonian of the form \eqref{HAMILTONIAN:H} factorizes into an $N$-fold product of a single one-particle wave function $\varphi^{(0)}$ which is determined by minimizing the nonlinear Hartree functional 
\begin{align}\label{HARTREE:FUNCTIONAL}
 \mathcal E_{h^{0,\varphi}}(\varphi) = \{ \langle \varphi, h^{0,\varphi} \varphi \rangle : \varphi \in H^1(\mathbb R^3), \no \varphi \no =1 \}.
\end{align}
The condensation property holds again in the reduced sense (being true at least in the case when there exists a unique minimizer of $\mathcal E^{h^{0,\varphi}}(\varphi)$), i.e.,
\begin{align}
\lim_{N\to \infty} \text{Tr}\big\vert  \gamma_{\Psi^{(0)}}^{(1)} - \vert \varphi^{(0)} \rangle \langle \varphi^{(0)} \vert \big\vert =0. \label{CONDENSATION}
\end{align}
It further holds for a comparison of the energies, since $E^{(0)}_N = Ne^{(0)} +o_N(N)$ where $E^{(0)}_N$ denotes the infimum of the spectrum $\sigma(H^0)$ and $e^{(0)}$ the infimum of \eqref{HARTREE:FUNCTIONAL} (the symbol $o_N(A)$ stands for terms with $o_N(A)/A \to 0$ for $N\to\infty$). The rigorous analysis of this question goes back to \cite{BenguriaLieb:1983,LiebYau:1987}. For recent results and an extensive list of references, we refer to \cite{LewinNamRougerie:2014}.

The notion of distance in \eqref{PROPAGATION:OF:CHAOS} and \eqref{CONDENSATION} is equivalent to convergence of bounded $k$-particle operators with norm of order one (for fixed $k$ when $N$ tends to $\infty$). This, in turn, is strong enough to imply a law of large numbers type result for such observables. In order to control unbounded observables, e.g., the energy or momentum, a slightly stronger statement than \eqref{PROPAGATION:OF:CHAOS} is needed. A suitable generalization is given by
\begin{align}\label{SOBOLEV:TRACE:NORM}
\lim_{N\to \infty} \text{Tr}\big\vert \sqrt{1-\Delta} \big( \gamma_{\Psi_t}^{(1)} - \vert \varphi_t \rangle \langle \varphi_t \vert \big) \sqrt{1-\Delta} \big\vert =0,
\end{align}
i.e., convergence in the so-called energy trace norm. Questions in this direction have been studied in \cite{Michelangeli:2012,Luhrmann:2012} and more recently in \cite{anapolitanos:2016}. Yet another natural notion of distance, much stronger compared to convergence in terms of reduced densities, is the $L^2$-norm on the full $N$-particle space $L^2_s(\mathbb R^{3N})$.\ In the interacting case, i.e., for $v\neq 0$, the ground state is not close to a product of one-particle wave functions, and neither does the initial product structure survive the dynamics in the $L^2$ sense. If only a single particle is not in the correct condensate wave function, there is no closeness in $L^2$-norm. On the other hand, condensation is a macroscopic phenomenon, i.e., it still holds, even if a few out of a very large number $N$ of particles are not in the condensate.\ The property of condensation, and persistence of condensation, is therefore correctly understood by means of the topology of reduced densities, e.g., in the sense of \eqref{PROPAGATION:OF:CHAOS} or \eqref{SOBOLEV:TRACE:NORM}.\ An approximation in terms of the $L^2$-distance of $\Psi_t$ or the ground state $\Psi^{(0)}$ is nevertheless highly interesting. For example, a large $N$ approximation in $L^2(\mathbb R^{3N})$ is closely connected to the analysis of low energy excitations which are relevant, e.g., for the explanation of superfluidity and other collective phenomena. It can be understood as the next-to-leading order correction to Hartree theory, and is known under the name of Bogoliubov theory \cite{Bogoliubov:1947}. 

The rigorous analysis of spectral low energy properties in terms of Bogoliubov theory for the weakly interacting Bose gas has been initiated more recently. In \cite{GiulianiSeiringer:2009,LiebSolovej:2001,
LiebSolovej:2004,Solovej:2006,YauYin:2009}, the next-to-leading order contribution $E^{Bog}$ in the ground state energy $E_N^0 = Ne^{(0)} + E^{Bog} + o_N(1)$ has been derived. Then, in \cite{Seiringer:2011}, the complete Bogoliubov theory (of the low energy spectrum and low energy eigenfunctions) was derived for the homogenous gas on the torus, which was generalized in \cite{Grech:2013,Lewin:2015b}, and further generalized to a mean field large volume limit in \cite{DerezinskiNapiorkowski:2014}. In \cite{Lewin:2015a}, Bogoliubov theory was derived also for the time-dependent problem.\ The main result of \cite{Lewin:2015a} is a full characterization of fluctuations in $\Psi_t$ around the Hartree product $\varphi_t^{\otimes N}$. It was shown that
\begin{align} \label{TIME:DEP:FLUCTUATIONS}
\lim_{N\to \infty} \lno \Psi_t - \sum_{k=0}^N \varphi^{\otimes N-k}_t \otimes_s \chi^{(k)}_t \rno = 0,
\end{align}
where $\otimes_s$ stands for the normalized symmetric tensor product, cf.\ \eqref{SYMMETRIC:PRODUCT}, and the correlation functions $(\chi^{(k)}_t)_{k\ge 0} $ solve a Schr\"odinger equation on the bosonic Fock space (constructed over orthogonal complement of the Hartree solution $\{ \varphi_t \} \subset L^2(\mathbb R^{3N})$) with $N$-independent, quadratic Hamiltonian.\\

In this work, we contribute to the understanding of \eqref{PROPAGATION:OF:CHAOS}, \eqref{SOBOLEV:TRACE:NORM} and \eqref{TIME:DEP:FLUCTUATIONS}. Besides that, we introduce a first quantized version of Bogoliubov theory. Our strategy is to first show norm convergence of $\Psi_t$ towards the solution $\widetilde \Psi_t$ of a Schr\"odinger equation with a modified and much simpler quadratic Hamiltonian $\wt H_t$. If we denote by $\p{i}{t} = \vert \varphi_t(x_i) \rangle \langle \varphi_t(x_i) \vert$ respectively by $\q{i}{t}=1-\p{i}{t}$ the orthogonal projector in the variable $x_i$ onto the subspace $\{ \varphi_t\}$ respectively its orthogonal complement, the Hamiltonian $\wt H^t$ is obtained from the original Hamiltonian \eqref{HAMILTONIAN:H},
\begin{align*}
H^t = & \sum_{i=1}^N  h_{i}^{t,\varphi_t} + \lambda_N \sum_{1\le i<j\le N}  (\p{i}{t} + \q{i}{t}) (\p{j}{t} + \q{j}{t}) \Big(  v_{ij}- v\ast \vert \varphi_t \vert_i^2 + \mu^{\varphi_t} \Big) (\p{i}{t} + \q{i}{t}) (\p{j}{t} + \q{j}{t}),
\end{align*}
by discarding all terms that contain three or four $q^t$'s.\ When acting on wave functions that are, in a certain sense, sufficiently close to the Hartree product, $\widetilde H^t$ coincides, up to terms that become small for large $N$, with the usual definition of the Bogoliubov Hamiltonian.\ After showing in Theorem \ref{THEOREM:MAIN:THEOREM:NORM:APPROXIMATION} that
\begin{align*}
\no \Psi_t - \widetilde \Psi_t \no \le \frac{ e^{C^{\varphi_t}}}{\sqrt N},
\end{align*}
where $\wt \Psi_t$ solves the Schr\"odinger equation with Hamiltonian $\widetilde H^t$, we derive
\begin{enumerate}
\item in Theorem \ref{CONVERGENCE:OF:DENSITY} that $\gamma^{(1)}_{\Psi_t}$ converges to $\vert \varphi_t \rangle \langle \varphi_t \vert$ in trace norm with optimal rate $1/N$, as well as in energy trace norm with rate $1/\sqrt N$,
\item in Theorem \ref{THM:BOG:HIERARCHY} the approximation of $\Psi_t$ in terms of correlation functions $(\chi_t^{(k)})_{k\ge 0}$ in the sense of \eqref{TIME:DEP:FLUCTUATIONS}, with expected optimal rate.
\end{enumerate}
Both results hold for a class of initial states that is expected to include ground states of interacting systems. As already noted above, those initial states have a nontrivial dynamics if external fields are removed or changed. The first point is an extension of earlier results. Convergence in trace norm with rate $1/N$ was shown in \cite{ErdosSchlein:2008,ChenLee:2010,Chen:2011,Lee:2013} for initial wave functions that are completely factorized. It is interesting to note, and this is in agreement with the previous results, that the Hartree approximation is not sufficient to prove \eqref{PROPAGATION:OF:CHAOS} with optimal rate. In \cite{anapolitanos:2016}, convergence in terms of the energy trace norm was derived by means of a compactness argument without explicit error. The characterization of $\Psi_t$ in terms of correlation functions $(\chi_t^{(k)})_{k\ge 0}$, was first studied in a mathematical rigorous way in \cite{Lewin:2015a}.\ In the present work, we derive the optimal error of this approximation which was not included in the analysis of \cite[cf. Remark 3]{Lewin:2015a}.\\

From the technical point of view, our approach consists of a generalization of the method that was used to derive the Hartree equation in \cite{pickl:2011method,picklknowles:2010}. We expect this approach to turn out stable and versatile and therefore also useful in order to derive similar results for more complicated situations, in particular for the NLS and Gross-Pitaevskii limit. Next-to-leading order corrections in $\Psi_t$ have been studied very recently in \cite{nam:2015,nam:2016} for the NLS equation with $0\le \beta <\frac{1}{2}$. Let us note that there are also very strong results about the $L^2$-approximation for states on Fock space, derived by means of the coherent state method that goes back to Hepp \cite{hepp:1974}. These results cover the weakly interacting case \cite{GrillakisMachedon:2013} as well as the NLS limit for all $\beta <1$ \cite{Boccato:2016} (a detailed list of references can be found in \cite[Section 1.2]{nam:2015}). The intial states are here coherent states in Fock space, or slight generalizations thereof. These results also give convergence in $L^2$-norm for initial $N$-particle product states, with a worse convergence rate. It is unclear whether the results also imply convergence of initial $N$-particle ground states, like we consider in this article, or that were considered in \cite{Lewin:2015a,nam:2015,nam:2016}. For a more detailed comparison, we refer to \cite[Section 3]{Lewin:2015a}.

\section{Main results}

The basic assumptions throughout our analysis are summarized in\\
\\
\textbf{Assumptions A.1.}
\begin{enumerate}
\item The external potential satisfies $W^t\in L^\infty(\mathbb R^3)$ and $\partial_t W^t \in L^\infty (\mathbb R^3)$ for all $t\ge 0$.
\item The two-particle potential $v$ is real valued and even and satisfies $v^2 \le C(1-\Delta)$.
\end{enumerate}
It follows from standard arguments that $H^t$, $t \ge 0$, is self-adjoint on $H_s^2(\mathbb R^{3N})$ and that the $N$-body propagator $U_{t-s}$, defined by $i\partial_t U_{t-s} = H^t U_{t-s}$, $U_0 = 1$, exists and satisfies $U_t \Psi_0 \in H_s^2(\mathbb R^{3N})$ for all $t$. Invoking the assumed bound on $v$, Hardy's inequality yields
\begin{align}\label{BOUNDS:ON:V:PHI}
\no v \ast \vert \varphi_t \vert^2 \no_{\infty} + \no v^2 \ast \vert \varphi_t \vert^2 \no_{\infty}  \le   C ( \no \varphi_t \no \ \no \nabla\varphi_t \no + \no \nabla\varphi_t \no^2 ) .
\end{align}
It further follows (e.g., by adapting the techniques used in \cite{chadam:1975} to our setting) that for every initial wave function $\varphi_0 \in H^2(\mathbb R^3)$, the Hartree equation admits a unique solution 
$$ \varphi_t \in  C ( [0,\infty ) ,  H^2(\mathbb R^3) ) \cap  C^1( [0,\infty) , L^2(\mathbb R^3) ).$$
\begin{remark}
A model that satisfies all listed assumptions and which we recommend to have in mind is the Bose gas with Coulomb interaction, $h = -\Delta$, $v (x) = \pm  \vert x \vert^{-1}$, where the initial trap is removed at time zero. 
\end{remark} 

We now define some operators we will use throughout this article. Let $\varphi_t$ denote the solution to the Hartree equation.
\begin{definition}\label{DEF:PROJECTORS}
For any $1\le i \le N$, we define the time-dependent projectors
\begin{align*}
\p{i}{t}:& L^2(\mathbb R^{3N}) \to L^2(\mathbb R^{3N}), \hspace{1cm} \p{k}{t} \Psi(x_1,...,x_N) =    \varphi_t (x_i) \int \overline {\varphi_t(x_i)} \Psi(x_1,...,x_N) dx_k,
\end{align*}
and $q^t_k = 1-p_k^t$.
\end{definition}
\begin{definition}
\label{DEF:H:TILDE}
We introduce for all $t\ge 0$ the Bogoliubov Hamiltonian
\begin{align}
\widetilde H^t = \sum_{i=1}^N  h^{t,\varphi_t}_{i} + \lambda_N \sum_{1\le i<j\le N} \Big[ \Big( \p{i}{t}\q{j}{t} v(x_i-x_j) \q{i}{t}\p{j}{t} + \p{i}{t}\p{j}{t} v(x_i-x_j) \q{i}{t}\q{j}{t} \Big) + \text{h.c.} \Big]		,
\end{align}
where h.c.\ stands for the hermitian conjugate of the preceding terms. $\widetilde H^t$ is self-adjoint on $H^2_s(\mathbb R^{3N})$ for all $t\ge 0$. We denote the corresponding unitary time evolution by $\wt \Psi_t = \widetilde U_t \Psi_0$ for any $\Psi_0 \in L^2(\mathbb R^{3N})$. A straightforward computation verifies the identity
\begin{align}\label{DIFFERENCE:HAMILTONIAN}
H = \wt H^t + \lambda_N \sum_{1\le i \neq j \le N} \Big[ \Big( \q{i}{t}   \q{j}{t}  \big( v_{ij}- \bar v^{\varphi_t}_j \big) \q{i}{t} \p{j}{t}   +  \q{i}{t} \q{j}{t} \big( v_{ij} - \bar v^{\varphi_t}_j +\mu^{\varphi_t} \big) \q{i}{t} \q{j}{t} \Big) + \text{h.c.} \Big]
\end{align}
where $v_{ij} = v(x_i-x_j)$, $\bar v^{\varphi_t}_j = \big( v\ast \vert \varphi_t\vert^2 \big) (x_j)$ and $\mu^{\varphi_t} = \frac{1}{2} \int  \big( v\ast \vert \varphi_t\vert^2 \big) (x) \vert \varphi_t(x)\vert^2 dx$.
\end{definition}

\noindent \textbf{Notation.}\ We use $\langle \cdot , \cdot \rangle$ for the scalar product and $\no \cdot \no$ for the corresponding norm for both, the $N$-particle and one-particle space, i.e., for $L^2(\mathbb R^{3N})$ and $L^2(\mathbb R^{3})$, respectively.\ The letter $C$ denotes a constant whose value may
change from one line to another.\ In particular, all constants are independent of $N$.

\subsection{Norm convergence of $\Psi_t$ towards $\widetilde \Psi_t$}

The main ingredient for our first main result are the following estimates for $\Psi_t$ and $\wt \Psi_t$ (all proofs are postponed to Section \ref{sec:Proofs}).
\begin{lemma}\label{A:PRIORI:ESTIMATE:PSI} Let $\varphi_t$ be the unique solution of the Hartree equation \eqref{TIME:DEP:HARTREE:EQUATION} with initial condition $\varphi_0 \in H^1(\mathbb R^3)$, $\no \varphi_0 \no = 1$, and let $\Psi_0 \in L^2_s(\mathbb R^{3N})$, $\no \Psi_0 \no=1$. Then, for $\Phi_t\in \{U_t \Psi_0, \widetilde U_t \Psi_0 \}$, there exist positive constants $C^{\varphi_t}, C_n$ such that for all $n\le N$ and all $t>0$,
\begin{align*}
\lsp    \Phi_t, \Big(\prod_{j=1}^n\q{j}{t} \Big) \Phi_t \rsp \le & e^{C^{\varphi_t} } \sum_{i=0}^n \frac{C_n}{N^{n-i}} \lsp \Psi_0, \Big(\prod_{j=1}^i \q{j}{0}  \Big)   \Psi_0 \rsp.
\end{align*}
\end{lemma}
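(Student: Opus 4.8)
The plan is to derive a Gr\"onwall-type differential inequality for the quantity $f_n(t) := \lsp \Phi_t, \big(\prod_{j=1}^n \q{j}{t}\big) \Phi_t \rsp$ and its lower-order relatives $f_i(t)$, $i \le n$, and then integrate. First I would compute $\frac{d}{dt} f_n(t)$. Since $\Phi_t$ solves $i\partial_t \Phi_t = K^t \Phi_t$ with $K^t \in \{H^t, \wt H^t\}$, and since the projectors $\p{j}{t}$ depend on $t$ through $\varphi_t$, the time derivative has two contributions: the commutator term $i\lsp \Phi_t, [K^t, \prod_j \q{j}{t}] \Phi_t \rsp$ and the term coming from $\partial_t \q{j}{t}$. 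For the latter, using $i\partial_t \p{j}{t} = [h^{t,\varphi_t}_j, \p{j}{t}]$ (a standard consequence of the Hartree equation), the single-particle part $\sum_i h^{t,\varphi_t}_i$ of $K^t$ exactly cancels the $\partial_t$ contribution, so only the interaction part of $K^t$ survives. This is the usual bookkeeping behind the Pickl method and I would carry it out for both $H^t$ and $\wt H^t$ simultaneously, since both have the same one-body part and their two-body parts consist of the same type of terms (products of $v(x_i-x_j) - \bar v^{\varphi_t}_j + \mu^{\varphi_t}$ sandwiched between $p$'s and $q$'s).

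Next I would estimate the resulting interaction commutator terms. Expanding $\prod_{j=1}^n \q{j}{t}$ against the two-body operators, each surviving term changes the number of $q$'s by at most two and typically converts a $p$ into a $q$ or vice versa; the combinatorics of which indices $i,j$ appear inside versus outside the product $\prod_{j\le n}$ produces the factors of $n$ (hence the constants $C_n$, which I expect to grow like $n!$ or worse but are harmless for fixed $n$). The key analytic input is Assumption A.1(2), $v^2 \le C(1-\Delta)$, together with the a priori control $\no \nabla \varphi_t \no \le C^{\varphi_t}$ coming from $\varphi_t \in C([0,\infty), H^2)$; via \eqref{BOUNDS:ON:V:PHI} this bounds $\no \bar v^{\varphi_t}_j \no_\infty$ and lets me control the genuinely singular term $v(x_i-x_j)$ by inserting $1 = q^t_i + p^t_i$ and using either $\no v(x_i-x_j) p^t_j \no \le \no v \ast |\varphi_t|^2 \no_\infty^{1/2}$-type bounds or Cauchy–Schwarz in the form $|\langle \Phi, A v_{ij} B \Phi\rangle| \le \no A^* \Phi\no \, \no v_{ij} B \Phi \no$ with $\no v_{ij} \psi\no^2 \le C\langle \psi, (1-\Delta_i)\psi\rangle$. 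Since the dangerous kinetic energy lands on a particle that is projected onto $\varphi_t$ on the relevant side, it reduces to $\no \nabla \varphi_t\no^2$. Collecting, I expect to obtain $\big| \frac{d}{dt} f_n(t) \big| \le C^{\varphi_t} \sum_{i=0}^{n} \frac{C_n}{N^{n-i}} f_i(t)$, i.e., a closed triangular system coupling $f_n$ to the lower $f_i$ with the advertised inverse powers of $N$ attached to the off-diagonal (number-of-$q$-changing) terms.

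Finally I would close the argument by induction on $n$. For $n=0$ the statement is trivial ($f_0 \equiv 1$). Assuming the bound for all indices $< n$, the coupled inequality for $f_n$ becomes $\frac{d}{dt} f_n(t) \le C^{\varphi_t} f_n(t) + (\text{already-controlled lower terms})$, and Gr\"onwall's lemma gives $f_n(t) \le e^{\int_0^t C^{\varphi_s} ds}\big(f_n(0) + \int_0^t \sum_{i<n} \tfrac{C_n}{N^{n-i}} f_i(s)\, ds\big)$; plugging in the inductive bounds for $f_i(s)$ and absorbing all time integrals and constants into a (possibly enlarged) $e^{C^{\varphi_t}}$ and $C_n$ yields exactly the claimed estimate. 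The main obstacle I anticipate is the careful combinatorial expansion in the second step: tracking exactly which terms in $[\text{interaction}, \prod_{j\le n}\q{j}{t}]$ preserve the number of $q$'s (these must be bounded by $C^{\varphi_t} f_n$ with no gain in $N$, so one needs the $\mu^{\varphi_t}$ and $\bar v^{\varphi_t}$ subtractions to produce cancellations or at least boundedness) versus those that change it (these gain the $1/N$ from $\lambda_N = 1/(N-1)$ only after summing over the $O(N)$ choices of the free index, so the power counting is delicate and is where the structure of $\wt H^t$ versus $H^t$ must be used uniformly).
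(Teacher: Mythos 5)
Your overall architecture coincides with the paper's: derive a triangular differential inequality $\vert \partial_t f_n\vert \le C^{\varphi_t}\sum_{i=0}^{n} N^{-(n-i)} C_n f_i$, observe that the one-body part of the generator cancels against the $\partial_t \q{j}{t}$ contribution, use the mean-field cancellation $\p{2}{t} v_{12}\p{2}{t} = \p{2}{t}\big(v\ast\vert\varphi_t\vert^2\big)(x_1)\p{2}{t}$ to annihilate the terms that change the number of $q$'s by one with $p$'s on the excited slot, and close by induction on $n$ plus Gr\"onwall. This is exactly the skeleton of the paper's argument (Lemmas \ref{TIME:DERIVATIVE:ALPHA} and \ref{UNIFORM:ROEMISCHE:1-3} feeding into the proof of Lemma \ref{A:PRIORI:ESTIMATE:PSI}).

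There is, however, a genuine gap in the step where you claim to obtain the triangular inequality by term-by-term Cauchy--Schwarz and AM--GM. Consider the pair-creation channel: summing over the $n(N-n)$ pairs with one index inside $\{1,\dots,n\}$ and one outside, the contribution $\lambda_N\, n(N-n)\,\operatorname{Im}\lsp \Phi_t, \big(\prod_{k=2}^{n}\q{k}{t}\big)\,\p{1}{t}\p{n+1}{t}\, v_{1,n+1}\,\q{1}{t}\q{n+1}{t}\big(\prod_{k=2}^{n}\q{k}{t}\big)\Phi_t\rsp$ carries an $O(n)$ prefactor with no residual power of $N$. Your Cauchy--Schwarz bounds it by $C\,n\sqrt{f_{n-1}}\sqrt{f_{n+1}}$, and no choice of the AM--GM parameter closes this: $\sqrt{f_{n-1}f_{n+1}}\le \tfrac12(\epsilon f_{n-1}+\epsilon^{-1}f_{n+1})$ with $\epsilon\sim 1$ leaves $f_{n-1}\sim N^{-(n-1)}$, a factor $N$ too large, while $\epsilon\sim 1/N$ produces $N f_{n+1}\le N f_n$, which destroys the Gr\"onwall argument; $f_{n+1}$ itself is not available from an upward induction, and log-convexity of the moments goes the wrong way ($f_n\le\sqrt{f_{n-1}f_{n+1}}$). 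This asymmetry --- $n-1$ projectors on one side of the inner product, $n+1$ on the other --- is precisely what the paper's counting-operator formalism repairs: Lemma \ref{PROPOSITION:Q:K:M:N} replaces $\prod_{j\le n}\q{j}{t}$ by $(\cf{m}{t})^{n}$, the pull-through formula (Lemma \ref{LEMMA:PULL:THROUGH:FORMULA}) attaches to the pair-creation term the explicitly small difference $(\cf{m}{t})^{n}-(\cf{\tau_{-2}m}{t})^{n}=\sum_{l}O(N^{-(n-l)})(\cf{m}{t})^{l}$, and the identity $\cf{n}{t}\,\cf{\nu}{t}=1-P_{N,0}^{\varphi_t}$ lets one split the remaining weight symmetrically so that \emph{both} Cauchy--Schwarz factors equal $\lsp\Psi,(\cf{m}{t})^{l+1}\Psi\rsp^{1/2}$ --- no cross terms between different moments ever appear. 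Without this device (or an equivalent symmetrization over the outside particle indices), the estimate you outline loses a factor $\sqrt N$, which is exactly the loss the lemma is designed to avoid. The remainder of your plan (base case $f_0\equiv 1$, induction, Gr\"onwall, absorbing constants) then goes through as in the paper.
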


\begin{remark} As \vspace{-0.07cm}a rule of thumb, the lemma tells us that (for appropriate initial wave functions) each projector $q^t$ multiplied with $\Psi_t=U_t\Psi_0$ or with $\wt \Psi_t= \widetilde U_t \Psi_0$ gives a factor $1/\sqrt{N}$.
\end{remark}
\begin{remark} The quantity $\alpha (t) = \lsp  \Psi_t,  \q{1}{t}   \Psi_t \rsp$ counts the relative number of particles outside the condensate in $\Psi_t$. In \cite{pickl:2011method,picklknowles:2010}, it was shown that $\alpha(t) \le e^{C^{\varphi_t}} N^{-1}$ for appropriate initial conditions, which was used to derive $\gamma_{\Psi_t}^{(1)} \to \vert \varphi_t \rangle \langle \varphi_t \vert$ by means of the relation (see, e.g., \cite[Lemma~2.3]{picklknowles:2010})
\begin{align}
\lsp \Psi_t, \q{1}{t} \Psi_t \rsp \le \text{Tr} \big\vert \gamma_{\Psi_t}^{(1)} - \vert \varphi_t\rangle \langle \varphi_t \vert \big\vert \le 
  \sqrt{8 \lsp \Psi_t, \q{1}{t} \Psi_t \rsp }.\label{CONVERGENCE:ALPHA:TRACE}
\end{align}
Along with some modifications, the $n=1$ case has been also used to derive the time-dependent Hartree equation for very singular potentials \cite[Section 4]{picklknowles:2010}, the Hartree equation in a large volume mean field limit \cite{deckert:2014}, the NLS equation without positivity condition \cite{pickl:2010gp_pos} as well as the Gross-Pitaevskii equation \cite{pickl:2010gp_ext}. The idea is also applicable for fermions to derive the Hartree and Hartree-Fock equations in the corresponding mean field limit \cite{Petrat:2016,BachPetrat:2016}.
\end{remark}
 
Each term in the effective two-body potential in $\wt H^t$ contains exactly two $q^t$'s and two $p^t$'s. Therefore, no mass flow is generated between even and odd sectors of the number of particles outside the Hartree state. In order to make this observation precise let us introduce the projectors onto the even and odd sectors in $L^2(\mathbb R^{3N})$ w.r.t.\ the product wave function $\varphi_t^{\otimes N}$,
\begin{align}
\cfs{f}{t}{\text{odd}}: L^2(\mathbb R^{3N}) \to L^2(\mathbb R^{3N}),& \hspace{1cm} \Psi \mapsto \cfs{f}{t}{\text{odd}} \Psi = \sum_{\substack{k=0 \\ k\ \text{odd}}}^N \Big( \prod_{i=1}^k\q{i}{t} \prod_{i=k+1}^N \p{i}{t} \Big)_{sym}\Psi, \label{DEF:F:ODD}\\
\cfs{f}{t}{\text{even}}: L^2(\mathbb R^{3N}) \to L^2(\mathbb R^{3N}),& \hspace{1cm} \Psi \mapsto \cfs{f}{t}{\text{even}} \Psi = \sum_{\substack{k=0 \\ k\ \text{even}}}^N \Big( \prod_{i=1}^k\q{i}{t} \prod_{i=k+1}^N \p{i}{t} \Big)_{sym} \Psi, \label{DEF:F:EVEN}
\end{align} 
where $(\cdot )_{sym}$ abbreviates the symmetric tensor product, cf.\ \eqref{SYMMETRIC:TENSOR:PRODUCT}.\ It follows directly that $1=\cfs{f}{t}{\text{odd}}+\cfs{f}{t}{\text{even}}$ and $\cfs{f}{t}{\text{odd}}\cfs{f}{t}{\text{even}}=\cfs{f}{t}{\text{even}}\cfs{f}{t}{\text{odd}}=0$. 
\begin{lemma}\label{LEMMA:ODD:EVEN:PART:TIME:EVOLUTION}
Let $\varphi_t\in H^1(\mathbb R^3)$ be the solution to the Hartree equation \eqref{TIME:DEP:HARTREE:EQUATION} with initial datum $\varphi_0\in H^1(\mathbb R^3)$, let $\Psi_0 \in L_s^2(\mathbb R^{3N})$ and $\wt \Psi_t = \widetilde U_t \Psi_0$.\ Then
\begin{align*}
\no \cfs{f}{t}{\text{even}}\widetilde \Psi_t \no = \no \cfs{f}{0}{\text{even}}  \Psi_0 \no, \hspace{1cm} \no \cfs{f}{t}{\text{odd}}\widetilde \Psi_t \no = \no\cfs{f}{0}{\text{odd}}  \Psi_0 \no. 
\end{align*}
\end{lemma}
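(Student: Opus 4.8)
The plan is to differentiate $\langle \wt\Psi_t, \cfs{f}{t}{\text{even}}\wt\Psi_t\rangle$ in $t$ and show that it is constant. Set $\mathcal P_t := \cfs{f}{t}{\text{even}} - \cfs{f}{t}{\text{odd}}$ (the ``parity operator''). Since $\cfs{f}{t}{\text{even}}$ is a sum of mutually orthogonal projectors, hence itself an orthogonal projection, and since $\cfs{f}{t}{\text{even}} + \cfs{f}{t}{\text{odd}} = 1$ gives $2\cfs{f}{t}{\text{even}} = 1 + \mathcal P_t$, we have $\no\cfs{f}{t}{\text{even}}\wt\Psi_t\no^2 = \langle\wt\Psi_t,\cfs{f}{t}{\text{even}}\wt\Psi_t\rangle = \tfrac12\big(\no\wt\Psi_t\no^2 + \langle\wt\Psi_t,\mathcal P_t\wt\Psi_t\rangle\big)$; together with $\no\wt\Psi_t\no = \no\Psi_0\no$ and conservation of $\langle\wt\Psi_t,\mathcal P_t\wt\Psi_t\rangle$ this yields $\no\cfs{f}{t}{\text{even}}\wt\Psi_t\no = \no\cfs{f}{0}{\text{even}}\Psi_0\no$, and the odd case is analogous. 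It is convenient to note that on $L^2_s(\mathbb R^{3N})$ the parity operator has the product form $\mathcal P_t = \bigotimes_{i=1}^N(2\p{i}{t}-1)$: expanding the tensor product and using $2\p{i}{t}-1 = \p{i}{t}-\q{i}{t}$ reproduces $\sum_{k=0}^N(-1)^k\big(\prod_{i=1}^k\q{i}{t}\prod_{i=k+1}^N\p{i}{t}\big)_{sym}$. Thus everything reduces to showing $\tfrac{d}{dt}\langle\wt\Psi_t,\mathcal P_t\wt\Psi_t\rangle = 0$.

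Since $\wt\Psi_t\in H^2_s(\mathbb R^{3N})$ solves $i\partial_t\wt\Psi_t = \wt H^t\wt\Psi_t$, and $t\mapsto\mathcal P_t$ is differentiable with each factor $2\p{i}{t}-1$ bounded and $H^2$-preserving (as $\varphi_t\in C^1([0,\infty),L^2)\cap C([0,\infty),H^2)$), one computes $\tfrac{d}{dt}\langle\wt\Psi_t,\mathcal P_t\wt\Psi_t\rangle = \langle\wt\Psi_t,\big(i[\wt H^t,\mathcal P_t] + \partial_t\mathcal P_t\big)\wt\Psi_t\rangle$. Split $\wt H^t = H_0^{\varphi_t} + \wt V^t$, where $H_0^{\varphi_t} := \sum_{i=1}^N h^{t,\varphi_t}_i$ is the one-body part and $\wt V^t$ is the effective two-body interaction in $\wt H^t$. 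Differentiating $\p{i}{t} = \vert\varphi_t\rangle\langle\varphi_t\vert$ and inserting the Hartree equation, using self-adjointness of $h^{t,\varphi_t}$, gives $\partial_t\p{i}{t} = -i[h^{t,\varphi_t}_i,\p{i}{t}]$, hence $\partial_t(2\p{i}{t}-1) = -i[h^{t,\varphi_t}_i,2\p{i}{t}-1]$; since $h^{t,\varphi_t}_i$ commutes with $2\p{j}{t}-1$ for $j\ne i$, the Leibniz rule yields $\partial_t\mathcal P_t = -i[H_0^{\varphi_t},\mathcal P_t]$. Therefore $i[\wt H^t,\mathcal P_t] + \partial_t\mathcal P_t = i[\wt V^t,\mathcal P_t]$, and the proof is complete once we show $[\wt V^t,\mathcal P_t] = 0$.

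For this I would check term by term that every summand of $\wt V^t$ --- i.e.\ $\p{i}{t}\q{j}{t}v_{ij}\q{i}{t}\p{j}{t}$, $\p{i}{t}\p{j}{t}v_{ij}\q{i}{t}\q{j}{t}$ and their hermitian conjugates --- commutes with $\mathcal P_t$. Here the product form of $\mathcal P_t$ makes this transparent: the factors $2\p{l}{t}-1$ with $l\ne i,j$ commute with the summand, while $(2\p{i}{t}-1)(2\p{j}{t}-1)$, acting from the left, meets the leftmost of the projectors in the variables $i,j$ and, acting from the right, meets the rightmost ones; using $(2p^t-1)p^t = p^t$ and $(2p^t-1)q^t = -q^t$ one sees that the summand picks up the same scalar ($+1$ or $-1$) from either side, so it commutes with $\mathcal P_t$. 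Hence $[\wt V^t,\mathcal P_t] = 0$, $\langle\wt\Psi_t,\mathcal P_t\wt\Psi_t\rangle$ is conserved, and the lemma follows. I do not expect a genuine obstacle; the only points needing (routine) care are the justification of the term-by-term commutator manipulations for the singular potential $v$ on $H^2_s(\mathbb R^{3N})$ --- controlling products such as $v_{ij}\q{i}{t}\q{j}{t}\wt\Psi_t$ via $v^2\le C(1-\Delta)$ from Assumptions A.1 --- and the differentiation under the scalar product. Conceptually, the statement is just that $\partial_t$ of the time-dependent even/odd splitting exactly cancels the one-body commutator $[H_0^{\varphi_t},\mathcal P_t]$, while the effective interaction in $\wt H^t$ changes the number of excitations only by $0$ or $\pm 2$ and hence leaves parity invariant.
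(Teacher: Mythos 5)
Your proposal is correct and rests on exactly the same mechanism as the paper's proof: the time-dependence of the projectors cancels the commutator with the one-body part $\sum_i h_i^{t,\varphi_t}$, and the effective interaction in $\widetilde H^t$ changes the excitation number only by $0$ or $\pm 2$ and hence commutes with the even/odd splitting (the paper states this as $[\widetilde H^t-\sum_i h_i^{t,\varphi_t},\cfs{f}{t}{\text{odd}}]=0$ and implements it via the pull-through formula, where $\cfs{f}{t}{\text{odd}}-\cfs{\tau_{-2}f}{t}{\text{odd}}=P_{N,1}^{\varphi_t}$ is annihilated by $\q{1}{t}\q{2}{t}$). Your parity-operator packaging $\mathcal P_t=\prod_{i=1}^N(2\p{i}{t}-1)$ is a clean equivalent reformulation, not a genuinely different route.
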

\noindent This property is important in order to obtain bounds on expectation values of one-body operators $A_1$ for which one needs to estimate correlations between the even and odd sector, e.g., terms like $\lsp \Psi, \p{1}{t} A_1 \q{1}{t} \Psi \rsp$. In particular, it is used to control the part of the wave function $\Psi_t$ that describes the few particles outside the Hartree state, see the following Lemma~\ref{LEMMA:ENERGY:ESTIMATE:PSI:TILDE}.
Lemma~\ref{LEMMA:ODD:EVEN:PART:TIME:EVOLUTION} follows easily from verifying that 
$$\Big[\wt H^t - \sum_{i=1}^N  h^{t,\varphi_t}_i , \cfs{f}{t}{\text{odd}} \Big]= 0 = \Big[\wt H^t - \sum_{i=1}^N  h^{t,\varphi_t}_i  ,\cfs{f}{t}{\text{even}} \Big].$$

Since we are after an $L^2$ approximation of $\Psi_t$, it is necessary to have good control of the behavior of all $N$ particles.\ This means that also good control of particles outside the condensate is required.\ For bounded potentials, one can use the fact that the number of such particles is small compared to $N$, and that they can therefore not disturb the other particles too much. For singular potentials, however, already a few badly behaving particles can in principle cause problems when they come close together and generate a large potential energy. That such behavior is very unlikely is due to energy conservation. In order to deal with singular potentials, the idea is thus to use energy conservation to obtain sufficient control of the regularity of the part of $\Psi_t$ which describes particles outside the Hartree product. This, in turn, leads to appropriate bounds on the potential energy of these particles.

\begin{definition}
For $\varphi \in H^1(\mathbb R^3)$, $\Psi\in H^1(\mathbb R^{3N})$, we define the mean field energy and the energy per particle w.r.t.\ $H^t$ by 
\begin{align}
  \mathcal E_{h^{t,\varphi}} : H^1(\mathbb R^3) \to \mathbb R,& \hspace{1cm} \varphi \mapsto  \mathcal E_{h^{t,\varphi}}(\varphi) \hspace{-0.05cm} = \no \nabla_1 \varphi \no^2   + \no W^t_1 \varphi \no^2 + \mu^{\varphi} ,\label{ENERGY:FUNCTIONAL:HARTREE}\\
 \mathcal E_{H^t} : H^1_s(\mathbb R^{3N}) \to \mathbb R,& \hspace{1cm}   \Psi \mapsto \mathcal E_{H^t} (\Psi)\hspace{0.08cm}  = \no  \nabla_1 \Psi \no^2 +  \no  W_1^t \Psi \no^2 
+ \frac{1}{2} \lsp \Psi , v_{12} \Psi \rsp.
\end{align}
\end{definition}

\begin{lemma}
 \label{LEMMA:ENERGY:ESTIMATE:PSI:TILDE} Let $\varphi_t\in H^2(\mathbb R^3)$ be the solution of the Hartree equation \eqref{TIME:DEP:HARTREE:EQUATION} with initial datum $\varphi_0 \in H^2(\mathbb R^3)$, $\no\varphi_0\no =1$, and let $\Psi_0 \in H^{1}_s(\mathbb R^{3N})$, $\no \Psi_0 \no =1$ and $\Psi_t = U_t \Psi_0$, $\widetilde \Psi_t = \widetilde U_t \Psi_0$. Then there exists a positive constant $C^{\varphi_t}$ such that for all $t\ge 0$,
\begin{align*}
 \no \nabla_1 \q{1}{t} \Psi_t \no^2 \le  & C \no \Psi_t - \wt \Psi_t \no^2 + C \vert \mathcal E_{H^0}( \Psi_0) - \mathcal E_{h^{0,\varphi_0}}(\varphi_0) \vert +  C^{\varphi_t}  \no \cfs{f}{0}{\text{odd}}  \Psi_0 \no^2 \\
 & \hspace{5cm} +  e^{C^{\varphi_t}} \Big( N \no \q{1}{0}  \q{2}{0} \Psi_0 \no^2 + \no \q{1}{0} \Psi_0 \no^2  +\frac{1}{N} \Big).
\end{align*}
\end{lemma}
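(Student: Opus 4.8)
The plan is to bound $\no \nabla_1 \q{1}{t}\Psi_t\no^2$ by first localizing to the ``few particles outside the condensate'' and then using conservation of energy under the true flow together with the norm closeness $\no\Psi_t-\wt\Psi_t\no$ supplied by the surrounding analysis. First I would write $\q{1}{t}\Psi_t$ and expand the kinetic energy of $\Psi_t$ into a Hartree part plus a correction. Using the symmetry of $\Psi_t$, $\mathcal E_{H^t}(\Psi_t) = \no\nabla_1\Psi_t\no^2+\no W_1^t\Psi_t\no^2+\tfrac12\langle\Psi_t,v_{12}\Psi_t\rangle$, and inserting $1=\p{1}{t}+\q{1}{t}$ around $\nabla_1$ (note $[\nabla_1,\q{1}{t}]$ is controlled since $\varphi_t\in H^2$, giving terms of size $\no\q{1}{t}\Psi_t\no$ times $\no\nabla\varphi_t\no$), one isolates $\no\nabla_1\q{1}{t}\Psi_t\no^2$ on one side. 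All the ``$\p{1}{t}$-parts'' of the kinetic energy combine into (essentially) $N\no\nabla\varphi_t\no^2$, which is exactly what $N$ times the Hartree energy per particle $\mathcal E_{h^{t,\varphi_t}}(\varphi_t)$ contributes, once one also accounts for the potential terms $\no W_1^t\Psi_t\no^2$ and $\tfrac12\langle\Psi_t,v_{12}\Psi_t\rangle$ split along $p$'s and $q$'s.

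Next I would bring in the two conservation laws. The true energy is conserved up to the explicit time-dependence of $W^t$: $\mathcal E_{H^t}(\Psi_t) = \mathcal E_{H^0}(\Psi_0) + \int_0^t \langle\Psi_s,(\partial_s W^s)_1\Psi_s\rangle\,ds$, and by Assumption A.1(i) the integrand is bounded by $\no\partial_s W^s\no_\infty$, contributing a harmless $C^{\varphi_t}$-type term (here I am folding all purely $\varphi_t$- and $t$-dependent growth into the constant $C^{\varphi_t}$, as the statement allows). Similarly the Hartree energy per particle satisfies the analogous identity with $\mathcal E_{h^{0,\varphi_0}}(\varphi_0)$. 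Subtracting, the ``diagonal'' $\p{}{}$-contributions cancel against $N\,\mathcal E_{h^{t,\varphi_t}}(\varphi_t)$ up to the error $C|\mathcal E_{H^0}(\Psi_0)-\mathcal E_{h^{0,\varphi_0}}(\varphi_0)|$, which is one of the terms on the right-hand side. What remains to control are the cross terms and the ``at least one $q^t$'' potential-energy terms: expressions of the form $\langle\Psi_t,\p{1}{t}\p{2}{t}v_{12}\q{1}{t}(\cdots)\Psi_t\rangle$ and $\langle\Psi_t,(\text{two }q\text{'s})v_{12}(\cdots)\Psi_t\rangle$, which must be shown to be either small or absorbable into $\tfrac12\no\nabla_1\q{1}{t}\Psi_t\no^2$ (using $v^2\le C(1-\Delta)$ from Assumption A.1(ii), which turns $v_{12}$ into something bounded by $\no\nabla_2\Psi_t\no$ or $\no\nabla_1\q{1}{t}\Psi_t\no$ after Cauchy--Schwarz). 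This is where Lemma~\ref{A:PRIORI:ESTIMATE:PSI} enters: it gives $\langle\Psi_t,\q{1}{t}\q{2}{t}\Psi_t\rangle \lesssim e^{C^{\varphi_t}}(\no\q{1}{0}\q{2}{0}\Psi_0\no^2 + N^{-1}\no\q{1}{0}\Psi_0\no^2 + N^{-2})$ and $\langle\Psi_t,\q{1}{t}\Psi_t\rangle\lesssim e^{C^{\varphi_t}}(\no\q{1}{0}\Psi_0\no^2+N^{-1})$, so after multiplying by the combinatorial factor $\sim N$ coming from the $\sum_{i<j}$ one arrives exactly at the $e^{C^{\varphi_t}}(N\no\q{1}{0}\q{2}{0}\Psi_0\no^2+\no\q{1}{0}\Psi_0\no^2+N^{-1})$ term. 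For the genuinely mixed even/odd cross terms $\langle\Psi_t,\p{1}{t}(\cdots)\q{1}{t}\Psi_t\rangle$ I would use the idea behind Lemma~\ref{LEMMA:ODD:EVEN:PART:TIME:EVOLUTION}: such a term forces $\Psi_t$ to have overlap with both $\cfs{f}{t}{\text{odd}}$ and $\cfs{f}{t}{\text{even}}$ sectors, so it is controlled by $\no\cfs{f}{t}{\text{odd}}\Psi_t\no\cdot\no\cfs{f}{t}{\text{even}}\Psi_t\no$; writing $\Psi_t = \wt\Psi_t + (\Psi_t-\wt\Psi_t)$, the $\wt\Psi_t$-piece has a conserved odd-sector norm (Lemma~\ref{LEMMA:ODD:EVEN:PART:TIME:EVOLUTION}), equal to $\no\cfs{f}{0}{\text{odd}}\Psi_0\no$, and the difference-piece is bounded by $\no\Psi_t-\wt\Psi_t\no$ — exactly the first two terms on the right-hand side.

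The main obstacle, I expect, is the bookkeeping for the potential-energy terms with singular $v$: one must carefully split $\tfrac12\langle\Psi_t,v_{12}\Psi_t\rangle$ by inserting $(\p{1}{t}+\q{1}{t})(\p{2}{t}+\q{2}{t})$ on both sides, match the ``$\p\p v\p\p$'' term against the Hartree potential energy per particle (accounting for the correction between $\langle\varphi_t\otimes\varphi_t,v_{12}\varphi_t\otimes\varphi_t\rangle$ and $\langle\Psi_t,\p{1}{t}\p{2}{t}v_{12}\p{1}{t}\p{2}{t}\Psi_t\rangle$, whose difference is again a $q$-weighted quantity controlled by Lemma~\ref{A:PRIORI:ESTIMATE:PSI}), and then show every remaining term is bounded using only $v^2\le C(1-\Delta)$, Hardy-type bounds \eqref{BOUNDS:ON:V:PHI}, $\varphi_t\in H^2$, and the a priori estimates, \emph{without} reintroducing an uncontrolled full kinetic energy $\no\nabla_1\Psi_t\no^2$ — only $\no\nabla_1\q{1}{t}\Psi_t\no^2$ (with a small prefactor, to be absorbed) and $\no\nabla\varphi_t\no$ may appear. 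Handling the commutators $[\nabla_1,\q{1}{t}]$ and $[\nabla_1,\p{1}{t}]$ cleanly (they produce $|\nabla\varphi_t\rangle\langle\varphi_t|$-type operators, hence factors of $\no\nabla\varphi_t\no$ and one extra $\q{1}{t}$ or $\p{1}{t}$) is the delicate point that makes everything close with the stated right-hand side.
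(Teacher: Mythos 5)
Your overall architecture coincides with the paper's: decompose the energy per particle along $p^t$'s and $q^t$'s, isolate $\no \nabla_1 \q{1}{t}\Psi_t\no^2$, treat the mixed $p$--$q$ terms with the odd/even-sector structure, absorb the singular potential via $v^2\le C(1-\Delta)$, and close with energy conservation and Lemma~\ref{A:PRIORI:ESTIMATE:PSI} (the paper merely routes the argument through the auxiliary energy $\mathcal E_{\wt H^t}$ before comparing with $\mathcal E_{H^t}$, which is organizational). However, two of your steps, as written, do not reach the stated bound. The first concerns the time-dependence of $W^t$: you bound the integrand $\lsp \Psi_s, \dot W^s_1 \Psi_s\rsp$ by $\no \dot W^s\no_\infty$ and declare it ``a harmless $C^{\varphi_t}$-type term, as the statement allows.'' The statement does not allow it: every term on the right-hand side carries a small factor (under Assumptions A.2 the whole right side is $O(\no\Psi_t-\wt\Psi_t\no^2+1/N)$, which is exactly what the Gr\"onwall argument for Theorem~\ref{THEOREM:MAIN:THEOREM:NORM:APPROXIMATION} consumes), so an additive $O(1)$ contribution is fatal. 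One must differentiate the \emph{difference} $\mathcal E_{H^t}(\Psi_t)-\mathcal E_{h^{t,\varphi_t}}(\varphi_t)$, obtain $\lsp\Psi_t,\dot W_1^t\Psi_t\rsp-\langle\varphi_t,\dot W^t\varphi_t\rangle$, split the first term with $\p{1}{t}+\q{1}{t}$, and let the $pp$ piece cancel $\langle\varphi_t,\dot W^t\varphi_t\rangle$ up to $\no\q{1}{t}\Psi_t\no^2$; you set up the subtraction but discard its benefit by estimating the $\Psi$-side integrand on its own.

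The second gap is the decisive one. Your bound for the mixed terms $\lsp\Psi_t,\p{1}{t}A_1\q{1}{t}\Psi_t\rsp$ by $\no\cfs{f}{t}{\text{odd}}\Psi_t\no\cdot\no\cfs{f}{t}{\text{even}}\Psi_t\no$ is only \emph{linear} in the small quantity, since $\no\cfs{f}{t}{\text{even}}\Psi_t\no=O(1)$; after $\no\cfs{f}{t}{\text{odd}}\Psi_t\no\le\no\cfs{f}{0}{\text{odd}}\Psi_0\no+\no\Psi_t-\wt\Psi_t\no$ you obtain first powers, whereas the lemma asserts squares. Under Assumptions A.2 this is the difference between an error of order $1/\sqrt N$ and the claimed $1/N$ --- precisely the improvement over \cite[Lemma~4.6]{picklknowles:2010} that this lemma exists to provide, as the remark following it emphasizes. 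The missing ingredient is estimate \eqref{QAP:COMMUTATION}: writing $\p{1}{t}A_1\q{1}{t}=\cfs{f}{t}{\text{odd}}\p{1}{t}A_1\q{1}{t}\cfs{f}{t}{\text{even}}+\cfs{f}{t}{\text{even}}\p{1}{t}A_1\q{1}{t}\cfs{f}{t}{\text{odd}}$ and exploiting the surviving $\q{1}{t}$ together with the $\widehat n^{\varphi_t}\widehat\nu^{\varphi_t}$ shift trick, one pairs the odd-sector norm with $\no\q{1}{t}\Psi_t\no$ (up to a $1/\sqrt N$ correction) rather than with an $O(1)$ factor, so that Cauchy--Schwarz and AM--GM yield $C\no\cfs{f}{t}{\text{odd}}\wt\Psi_t\no^2+C\no\Psi_t-\wt\Psi_t\no^2+C^{\varphi_t}(\no\q{1}{t}\Psi_t\no^2+1/N)$, all quadratic. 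The remainder of your sketch --- matching $ppvpp$ against the Hartree potential energy, absorbing $\tfrac1N\no v_{12}\q{1}{t}\q{2}{t}\Psi_t\no^2$ into the kinetic term at the price of $N\no\q{1}{t}\q{2}{t}\Psi_t\no^2$, and converting the time-$t$ projector norms to time zero via Lemma~\ref{A:PRIORI:ESTIMATE:PSI} and Lemma~\ref{LEMMA:ODD:EVEN:PART:TIME:EVOLUTION} --- is consistent with the paper's proof.
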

\begin{remark} Similar regularity properties of $\q{1}{t}\Psi_t$ have been used in the derivation of the Gross-Pitaevskii equation \cite{pickl:2010gp_ext} or the Hartree equation for very singular potentials \cite{picklknowles:2010}, both in the sense of reduced density matrices. The main difference here is that the explicit error on the r.h.s.\ is of order $1/N$ instead of $1/\sqrt{N}$ as, e.g., in \cite[Lemma~4.6]{picklknowles:2010}. This improvement comes for the price of having the additional terms $\no \Psi_t - \wt \Psi_t \no^2 $ and $\no \cfs{f}{0}{\text{odd}}  \Psi_0 \no^2$. The reason for the better explicit error is the result from Lemma \ref{LEMMA:ODD:EVEN:PART:TIME:EVOLUTION}, i.e., the fact that $\wt H^t$ does not couple the odd and even sectors of the wave function. Correlations between odd and even parts, in particular terms like $\lsp \Psi_t, \p{1}{t} \Delta_1 \q{1}{t} \Psi_t \rsp$, lead to the convergence rate $1/\sqrt{N}$ in \cite[Lemma~4.6]{picklknowles:2010}. That such correlations are not created during time evolution is, of course, not a priori known for the full Hamiltonian $H^t$.
\end{remark}

From now on, let $\varphi_t$ be the solution to the Hartree equation \eqref{TIME:DEP:HARTREE:EQUATION} with initial condition $\varphi_0\in L^2(\mathbb R^3)$, $\no \varphi_0\no=1$, let $\Psi_t = U_t \Psi_0$ and $\wt \Psi_t = \widetilde U_t \Psi_0$ with $\Psi_0\in L^2_s(\mathbb R^{3N})$, $\no \Psi_0 \no=1$, and let the initial wave functions $\varphi_0$, $\Psi_0$ satisfy\\
\\
\noindent \textbf{Assumptions A.2.}\label{A.2} \label{ASSUMPTIONS:A_2}
\begin{enumerate}
\item $\varphi_0 \in H^2(\mathbb R^2)$, $\Psi_0\in H^1(\mathbb R^{3N})$, and $\vert {\mathcal E}_{H^0}(\Psi_0) - \mathcal E_{h^{0,\varphi_0}}(\varphi_0) \vert \le C N^{-1}$ \label{A.2.1}
\item $ \lsp \Psi_0 , \Big(\prod_{i=1}^n \q{i}{0} \Big) \Psi_0 \rsp \le C N^{-n}$ for $n=1,2,3$, \label{A.2.2}
\item $\no \cfs{f}{0}{\text{odd}}  \Psi_0 \no \le  CN^{-\frac{1}{2}}$.\label{A.2.3}
\end{enumerate}
\begin{remark} Note that instead of \hyperref[A.2.3]{A.2.3}, one could equivalently assume that the even part of the wave function is initially small, i.e., $\no \cfs{f}{0}{\text{even}}  \Psi_0 \no \le CN^{-\frac{1}{2}}$. This would lead to the exact same results with all proofs being completely analogous (therefore we restrict ourselves to \hyperref[A.2.3]{A.2.3}).
\end{remark}

\begin{theorem}
\label{THEOREM:MAIN:THEOREM:NORM:APPROXIMATION} Let $\varphi_0, \Psi_0$ satisfy Assumptions \hyperref[A.2]{A.2}.\ Then there exists a time-dependent constant $C^{\varphi_t}>0$ such that for all $t\ge 0$,
\begin{align}
\no \Psi_t - \widetilde \Psi_t \no  \le \frac{e^{ C^{\varphi_t} }}{ \sqrt{N}}.\label{NORM:BOUND:MAIN:THEOREM}
\end{align}
\end{theorem}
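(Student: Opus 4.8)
The plan is to run a Gr\"onwall argument on the quantity $\delta(t) := \no \Psi_t - \wt\Psi_t\no^2$, exploiting that $\Psi_t$ and $\wt\Psi_t$ solve Schr\"odinger equations with the same free part $\sum_i h^{t,\varphi_t}_i$ but with Hamiltonians differing by the operator displayed in \eqref{DIFFERENCE:HAMILTONIAN}, all of whose terms carry two $q^t$'s on one side. First I would compute $\tfrac{d}{dt}\delta(t)$. Writing $\Psi_t - \wt\Psi_t = (U_t - \wt U_t)\Psi_0$ and using $i\partial_t(\Psi_t - \wt\Psi_t) = H^t\Psi_t - \wt H^t\wt\Psi_t = \wt H^t(\Psi_t - \wt\Psi_t) + (H^t - \wt H^t)\Psi_t$, the self-adjointness of $\wt H^t$ kills the diagonal contribution and leaves
\begin{align*}
\tfrac{d}{dt}\delta(t) = 2\,\im \lsp \Psi_t - \wt\Psi_t, (H^t - \wt H^t)\Psi_t \rsp.
\end{align*}
(One must also track the time-derivative of the projectors $p^t, q^t$ hidden in $\wt H^t$; since $\wt U_t$ is the propagator generated by $\wt H^t$ this is already accounted for, so no extra terms appear here.) By \eqref{DIFFERENCE:HAMILTONIAN}, $(H^t - \wt H^t)\Psi_t$ is a sum (over $i\neq j$, times $\lambda_N = 1/(N-1)$) of terms of the shape $q^t_i q^t_j(v_{ij} - \bar v^{\varphi_t}_j)q^t_i p^t_j$, $q^t_i q^t_j(v_{ij} - \bar v^{\varphi_t}_j + \mu^{\varphi_t})q^t_i q^t_j$, and their hermitian conjugates.

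The core of the argument is to bound each of these terms by $C^{\varphi_t}(\delta(t) + 1/N)$. I would split $\lsp \Psi_t - \wt\Psi_t, \cdot\rsp$ and use Cauchy--Schwarz, putting the operator onto whichever side is more convenient. The three-$q$-type terms (the hermitian conjugates $q^t_i p^t_j(\cdots)q^t_i q^t_j$ and $q^t_i q^t_j(\cdots)q^t_i q^t_j$) are the easy ones: acting on $\Psi_t$ they produce at least two, morally three, factors $q^t$, and by Lemma~\ref{A:PRIORI:ESTIMATE:PSI} together with Assumptions \hyperref[A.2]{A.2}.\ref{A.2.2} each such factor is worth $1/\sqrt N$; combined with the interaction bound $v^2 \le C(1-\Delta)$ from Assumptions A.1 and the a priori $H^1$-control of $q^t_1\Psi_t$ from Lemma~\ref{LEMMA:ENERGY:ESTIMATE:PSI:TILDE}, one gets a gain beating the $N^2$ from the double sum and the $\lambda_N^{-1}$ worth only $N$. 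The genuinely delicate term is $\lsp \Psi_t - \wt\Psi_t, \, \lambda_N\sum_{i\neq j} q^t_i q^t_j(v_{ij} - \bar v^{\varphi_t}_j)q^t_i p^t_j\,\Psi_t\rsp$, which on $\Psi_t$ carries only two $q^t$'s; here one factor $q^t$ must be extracted from $\Psi_t - \wt\Psi_t$ instead. The trick, exactly as in \cite{pickl:2011method,picklknowles:2010}, is that $v_{ij} - \bar v^{\varphi_t}_j$ has, in a suitable sense, vanishing ``diagonal'' when sandwiched between $p^t_j$'s, so that after commuting projectors one effectively sees the combination $p^t_i p^t_j(v_{ij} - \bar v^{\varphi_t}_j)p^t_i q^t_j$ acting, whose sum over $j$ telescopes and produces the needed extra $q^t$ or an extra $1/\sqrt N$ via the odd/even decoupling of Lemma~\ref{LEMMA:ODD:EVEN:PART:TIME:EVOLUTION}. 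Controlling the potential energy of the badly-behaved particles requires the regularity estimate of Lemma~\ref{LEMMA:ENERGY:ESTIMATE:PSI:TILDE}, which itself feeds back the term $C\no\Psi_t - \wt\Psi_t\no^2 = C\delta(t)$ — this is consistent with Gr\"onwall since it only enlarges the constant in front of $\delta$.

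Assembling the estimates yields a differential inequality $\tfrac{d}{dt}\delta(t) \le C^{\varphi_t}\bigl(\delta(t) + N^{-1}\bigr)$ with $\delta(0) = 0$ (same initial data), where the time-dependent constant $C^{\varphi_t}$ depends only on $\no\varphi_t\no_{H^1}$, hence on $\no\varphi_0\no_{H^2}$ and the model data through the Hartree well-posedness theory. Gr\"onwall's lemma then gives $\delta(t) \le N^{-1}(e^{C^{\varphi_t}} - 1)/C^{\varphi_t} \le e^{C^{\varphi_t}}/N$, whence $\no\Psi_t - \wt\Psi_t\no \le e^{C^{\varphi_t}}/\sqrt N$ after relabeling the constant. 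The main obstacle, and where all the real work sits, is the single-$q^t$ term: one must carefully commute the projectors $p^t, q^t$ past $v_{ij}$, exploit the mean-zero structure of $v_{ij} - \bar v^{\varphi_t}_j$ against $p^t_j$, and invoke both Lemma~\ref{LEMMA:ODD:EVEN:PART:TIME:EVOLUTION} (to handle the $p^t$--$q^t$ cross terms that would otherwise only be $O(1)$) and Lemma~\ref{LEMMA:ENERGY:ESTIMATE:PSI:TILDE} (to tame $v_{ij}$ via $v^2 \le C(1-\Delta)$ on the non-condensate part) in order to extract the full power $1/N$ rather than $1/\sqrt N$.
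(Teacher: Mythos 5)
Your proposal follows the paper's proof essentially step for step: the same Duhamel/Gr\"onwall argument on $\no \Psi_t-\wt\Psi_t\no^2$, the same splitting of $H^t-\wt H^t$ via \eqref{DIFFERENCE:HAMILTONIAN}, the a priori bounds of Lemma~\ref{A:PRIORI:ESTIMATE:PSI} for the terms with three or four $q^t$'s acting on $\Psi_t$, Lemma~\ref{LEMMA:ENERGY:ESTIMATE:PSI:TILDE} together with $v^2\le C(1-\Delta)$ for the singular four-$q$ term, and the odd/even decoupling for the delicate three-$q$ term, closed by Gr\"onwall with $\delta(0)=0$. One correction on the step you flag as the hard one: there is no mean-field cancellation or telescoping available for $\q{i}{t}\q{j}{t}(v_{ij}-\bar v^{\varphi_t}_j)\q{i}{t}\p{j}{t}$ — the identity $\p{j}{t}v_{ij}\p{j}{t}=\p{j}{t}\bar v^{\varphi_t}_i\p{j}{t}$ is what kills term (I) in the counting-functional computation (Lemma~\ref{UNIFORM:ROEMISCHE:1-3}), but it plays no role here, and commuting projectors does not turn this term into $\p{i}{t}\p{j}{t}(\cdots)\p{i}{t}\q{j}{t}$. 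The operative mechanism is purely the parity argument you mention as the alternative: this operator maps the even sector to the odd one and vice versa, so inserting $1=\cfs{f}{t}{\text{odd}}+\cfs{f}{t}{\text{even}}$ on both sides annihilates the diagonal blocks, and the off-diagonal blocks carry the small factor $\no\cfs{f}{t}{\text{odd}}\wt\Psi_t\no=\no\cfs{f}{0}{\text{odd}}\Psi_0\no\le CN^{-1/2}$ from Lemma~\ref{LEMMA:ODD:EVEN:PART:TIME:EVOLUTION} and Assumption \hyperref[A.2.3]{A.2.3} (with $\no\cfs{f}{t}{\text{odd}}\Psi_t\no$ controlled by $\no\cfs{f}{t}{\text{odd}}\wt\Psi_t\no+\no\Psi_t-\wt\Psi_t\no$); this is exactly what is packaged in \eqref{QQAQP:COMMUTATION}. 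Also note that the paper first reduces $2\im\lsp\Psi_t-\wt\Psi_t,(H^t-\wt H^t)\Psi_t\rsp$ to $-2\im\lsp\wt\Psi_t,(H^t-\wt H^t)\Psi_t\rsp$ using that $H^t-\wt H^t$ is self-adjoint, which puts a wave function with known $q$-bounds in both slots and streamlines the estimates, and that the constant $C^{\varphi_t}$ ends up depending on $\no\Delta\varphi_t\no$ (not just the $H^1$ norm) through Lemma~\ref{LEMMA:ENERGY:ESTIMATE:PSI:TILDE}.
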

\begin{remark} It can be verified along the lines of the proof that the constant that appears on the r.h.s.\ is given in terms of a polynomial $P: \mathbb R \to \mathbb R^+$ s.t.\ $C^{\varphi_t} = \int_0^t P(\no \Delta \varphi_s \no) ds$.
\end{remark}

\begin{remark}\label{IDEA:OF:PROOF} To obtain convergence in \eqref{NORM:BOUND:MAIN:THEOREM} for bounded potentials is very straightforward. The idea is to apply Duhamel's formula leading to
\begin{align*}
\no \Psi_t - \widetilde \Psi_t \no^2  = & \Big\vert 2 \int_0^t   \im \lsp \Psi_s, ( H^s - \wt H^s ) \wt \Psi_s \rsp ds  \Big\vert,
\end{align*}
and then use that $\lambda_N=1 / (N-1)$, together with the fact that each term in  $H^s - \wt H^s$ contains three or four $q^s$'s, cf.\ \eqref{DIFFERENCE:HAMILTONIAN}. For a bounded potential, convergence then follows immediately from Lemma \ref{A:PRIORI:ESTIMATE:PSI}.
\end{remark}
\begin{remark}\label{remark:ground:state}
One example of a wave function that satisfies Assumptions \hyperref[A.2]{A.2} is the complete product state $\Psi_0=\varphi_0^{\otimes N}$, which is, however, very far from the ground state of an interacting system in the $L^2$-norm sense. As we already mentioned in the introduction, the picture behind the experimental realization of Bose-Einstein condensation is that one starts with the ground state of a trapped gas and removes the trap or disturbs the gas in another way, and then observes the nontrivial time evolution of the particles.\ It would be a nice completion of the overall argument if one could show that the ground state of a Hamiltonian $H^0$ of the form \eqref{HAMILTONIAN:H} satisfies all three conditions from Theorem \ref{THEOREM:MAIN:THEOREM:NORM:APPROXIMATION}. In \cite{Lewin:2015b}, e.g., it was shown that the next-to-leading order corrections to the ground state wave function $\Psi^{(0)}$ can be described in terms of a family of correlation functions $(\chi^{(k)})_{k\ge 0}$ in the same sense as was explained for the time-dependent problem in \eqref{TIME:DEP:FLUCTUATIONS}.\ It was shown that $(\chi^{(k)})_{k\ge 0}$, interpreted as an element of the bosonic Fock space, is quasi free.\ The quasi free property is similar but not equivalent to conditions \hyperref[A.2.2]{A.2.2} and \hyperref[A.2.3]{A.2.3}, and the general results from \cite{Lewin:2015b} are not sufficient to show that \hyperref[A.2.2]{A.2.2} and \hyperref[A.2.3]{A.2.3} hold quantitatively for the situation that we have in mind. In a forthcoming work, we answer this question affirmatively in a more restricted setting, namely for the homogeneous gas on the torus, i.e., for $\mathbb R^{3N}$ replaced by $\mathbb T^{3N}$ with $\vert \mathbb T\vert <\infty$, and $W^0=0$. We expect the ground state to satisfy Assumptions \hyperref[A.2]{A.2} also in a more general setting, in particular for the initially trapped gas, for which it is technically much more difficult to analyze the ground state properties and to show that the static (or initial) approximation is as good as the one of the dynamics.
\end{remark}
\subsection{Trace norm convergence}

Using the previous theorem together with Lemmas \ref{A:PRIORI:ESTIMATE:PSI} and \ref{LEMMA:ODD:EVEN:PART:TIME:EVOLUTION}, one can show that $\gamma_{\Psi_t}^{(1)} \to \vert \varphi_t \rangle \langle \varphi_t \vert$ with rate $1/N$. We emphasize that by means of controlling only the relative number of particles in the condensate instead of the full $L^2$-approximation of $\Psi_t$, trace norm convergence of the reduced density can only be shown with error $\propto 1 / \sqrt N$,  \vspace{-0.05cm}cf.\ Remark \ref{REMARK:CONV:RATE}. It follows also straightforwardly, using in addition Lemma \ref{LEMMA:ENERGY:ESTIMATE:PSI:TILDE}, that $\gamma_{\Psi_t}^{(1)}$ is close to $\vert \varphi_t \rangle \langle \varphi_t \vert$ in terms of the energy trace distance.

\begin{theorem}
\label{CONVERGENCE:OF:DENSITY} Let $\varphi_0, \Psi_0$ satisfy Assumptions \hyperref[A.2]{A.2}.\ Then there exists a time-dependent constant $C^{\varphi_t}$ such that for all $t\ge 0$,
\end{theorem}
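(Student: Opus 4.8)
\noindent\emph{Proof idea.} By the introduction the two asserted bounds are $\text{Tr}\big\vert\gamma^{(1)}_{\Psi_t} - \vert\varphi_t\rangle\langle\varphi_t\vert\big\vert\le e^{C^{\varphi_t}}/N$ and $\text{Tr}\big\vert\sqrt{1-\Delta}\,\big(\gamma^{(1)}_{\Psi_t} - \vert\varphi_t\rangle\langle\varphi_t\vert\big)\sqrt{1-\Delta}\big\vert\le e^{C^{\varphi_t}}/\sqrt N$. Set $p^t=\vert\varphi_t\rangle\langle\varphi_t\vert$ on $L^2(\mathbb R^3)$. For the first bound I would decompose $\gamma^{(1)}_{\Psi_t}-p^t$ into its four blocks relative to $1=p^t+q^t$ and estimate each in trace norm. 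The $p^tp^t$-block equals $-\lsp\Psi_t,\q{1}{t}\Psi_t\rsp p^t$ and the $q^tq^t$-block is $q^t\gamma^{(1)}_{\Psi_t}q^t\ge 0$, so both have trace norm $\lsp\Psi_t,\q{1}{t}\Psi_t\rsp$, which by Lemma \ref{A:PRIORI:ESTIMATE:PSI} with $n=1$ and \hyperref[A.2.2]{A.2.2} is at most $e^{C^{\varphi_t}}/N$; here one already exploits more than the crude inequality \eqref{CONVERGENCE:ALPHA:TRACE}, which would only yield the rate $N^{-1/2}$. The $p^tq^t$-block is rank one, hence has trace norm $\no q^t\gamma^{(1)}_{\Psi_t}\varphi_t\no=\sup_{\no B\no\le 1}\big\vert\lsp\Psi_t,\p{1}{t}B_1\q{1}{t}\Psi_t\rsp\big\vert$, the supremum over bounded operators $B$ on $L^2(\mathbb R^3)$; this is the delicate term, and it is where the full statement of Theorem \ref{THEOREM:MAIN:THEOREM:NORM:APPROXIMATION}, and not merely control of $\lsp\Psi_t,\q{1}{t}\Psi_t\rsp$, enters.

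For this off-diagonal term I would use permutation symmetry of $\Psi_t$ to write $\lsp\Psi_t,\p{1}{t}B_1\q{1}{t}\Psi_t\rsp=\tfrac1N\lsp\Psi_t,\sum_{i=1}^N\p{i}{t}B_i\q{i}{t}\Psi_t\rsp$, insert $1=\cfs{f}{t}{\text{even}}+\cfs{f}{t}{\text{odd}}$ on both sides, and observe that $\sum_i\p{i}{t}B_i\q{i}{t}$ decreases the number of particles outside the condensate by exactly one, hence interchanges the even and odd sectors, so that only the two mixed terms remain. The smallness of the odd sector is transported from $\widetilde\Psi_t$, where Lemma \ref{LEMMA:ODD:EVEN:PART:TIME:EVOLUTION} together with \hyperref[A.2.3]{A.2.3} gives $\no\cfs{f}{t}{\text{odd}}\widetilde\Psi_t\no=\no\cfs{f}{0}{\text{odd}}\Psi_0\no\le CN^{-1/2}$, to $\Psi_t$ itself via $\no\Psi_t-\widetilde\Psi_t\no\le e^{C^{\varphi_t}}N^{-1/2}$ (Theorem \ref{THEOREM:MAIN:THEOREM:NORM:APPROXIMATION}); combined with $\no\q{1}{t}\Psi_t\no\le e^{C^{\varphi_t}}N^{-1/2}$ and with the $n=2$ bound of Lemma \ref{A:PRIORI:ESTIMATE:PSI} for $\lsp\Psi_t,\q{1}{t}\q{2}{t}\Psi_t\rsp$ (using \hyperref[A.2.2]{A.2.2}), the two mixed terms become $O(N^{-1})$, giving $\text{Tr}\big\vert\gamma^{(1)}_{\Psi_t}-p^t\big\vert\le e^{C^{\varphi_t}}/N$.

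For the energy trace norm I would run the same block decomposition on $\sqrt{1-\Delta}\,(\gamma^{(1)}_{\Psi_t}-p^t)\sqrt{1-\Delta}$; the extra difficulty is that $\sqrt{1-\Delta}$ does not commute with $p^t$ and $q^t$, so $\sqrt{1-\Delta}\,p^t\sqrt{1-\Delta}$ itself has nonzero off-diagonal blocks which must be cancelled against the corresponding blocks of $\sqrt{1-\Delta}\,\gamma^{(1)}_{\Psi_t}\sqrt{1-\Delta}$. After these cancellations the surviving quantities reduce to $\no\q{1}{t}\Psi_t\no$, $\no\nabla\varphi_t\no$ and $\no\nabla_1\q{1}{t}\Psi_t\no$; the last is precisely what Lemma \ref{LEMMA:ENERGY:ESTIMATE:PSI:TILDE} controls, and inserting Theorem \ref{THEOREM:MAIN:THEOREM:NORM:APPROXIMATION} and Assumptions \hyperref[A.2]{A.2} (in particular $\vert\mathcal E_{H^0}(\Psi_0)-\mathcal E_{h^{0,\varphi_0}}(\varphi_0)\vert\le CN^{-1}$) into that lemma yields $\no\nabla_1\q{1}{t}\Psi_t\no^2\le e^{C^{\varphi_t}}/N$, whence the bound $e^{C^{\varphi_t}}/\sqrt N$. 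Here the rate $N^{-1/2}$ is sharp, because the cross terms pairing the order-one factor $\nabla\varphi_t$ coming from $(1-p^t)\sqrt{1-\Delta}\,\varphi_t$ with the order-$N^{-1/2}$ factor $\nabla_1\q{1}{t}\Psi_t$ do not cancel. I expect the off-diagonal trace-norm term above to be the main obstacle; its treatment genuinely uses all three of Theorem \ref{THEOREM:MAIN:THEOREM:NORM:APPROXIMATION}, Lemma \ref{LEMMA:ODD:EVEN:PART:TIME:EVOLUTION} and Lemma \ref{A:PRIORI:ESTIMATE:PSI}.
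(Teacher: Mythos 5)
Your proposal is correct and follows essentially the same route as the paper: the paper writes the trace norm as $\sup_{\no A_1\no\le 1}\vert\operatorname{Tr}(A_1(\gamma^{(1)}_{\Psi_t}-\vert\varphi_t\rangle\langle\varphi_t\vert))\vert$ and inserts $1=\p{1}{t}+\q{1}{t}$ around $A_1$, which is the dual formulation of your block decomposition, with the diagonal blocks controlled by Lemma \ref{A:PRIORI:ESTIMATE:PSI} and the off-diagonal blocks handled exactly by your even/odd-sector mechanism (this is the content of \eqref{QAP:COMMUTATION} in Lemma \ref{Q:COMMUTATION}, which feeds in Theorem \ref{THEOREM:MAIN:THEOREM:NORM:APPROXIMATION} and Lemma \ref{LEMMA:ODD:EVEN:PART:TIME:EVOLUTION} precisely as you describe). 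The energy trace norm is likewise treated by the same insertion of $\p{1}{t}+\q{1}{t}$ around $\sqrt{1-\Delta_1}\,A_1\sqrt{1-\Delta_1}$, with the cross terms bounded by $\no\nabla_1\q{1}{t}\Psi_t\no$ via Lemma \ref{LEMMA:ENERGY:ESTIMATE:PSI:TILDE}, giving the rate $1/\sqrt N$ as you indicate.
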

\vspace{-0.9cm}
\begin{align}\label{CONVERGENCE:OF:DENSITY:a}
\text{Tr} \big\vert \gamma^{(1)}_{\Psi_t} - \vert \varphi_t \rangle \langle \varphi_t \vert \big\vert & \le \frac{e^{C^{\varphi_t}}}{N},\\
\text{Tr} \big\vert \sqrt{1-\Delta} \big( \gamma^{(1)}_{\Psi_t} - \vert \varphi_t \rangle \langle \varphi_t \vert \big) \sqrt{1-\Delta} \big\vert & \le \frac{e^{C^{\varphi_t}}}{\sqrt N}.
\end{align}
\begin{remark}For a proof of the first statement, we actually require less regularity of $\varphi_0$ than stated in Assumptions \hyperref[A.2]{A.2}. To this end note that one finds a similar norm approximation as in Theorem \ref{THEOREM:MAIN:THEOREM:NORM:APPROXIMATION}, using an effective Hamiltonian defined by
\begin{align} \label{H:TILDE:4:Q}
 \widetilde H^t + \lambda_N \sum_{1\le i\neq j\le N} \q{i}{t} \q{j}{t} \big( v_{ij} - \bar v_i^{\varphi_t} \big) \q{i}{t} \q{j}{t}.
\end{align}
In this case, Lemma \ref{LEMMA:ENERGY:ESTIMATE:PSI:TILDE} together with Assumption \hyperref[A.2.1]{A.2.1} are not needed to derive a result analogous to \eqref{NORM:BOUND:MAIN:THEOREM}, and therefore it is sufficient to assume $\varphi_0 \in H^1(\mathbb R^3)$.\ We omit further details since the indicated argument can be readily verified along the steps of the proof of Theorem~\ref{THEOREM:MAIN:THEOREM:NORM:APPROXIMATION} when $\widetilde H^t$ is replaced by \eqref{H:TILDE:4:Q}.
\end{remark}
\begin{remark}\label{REMARK:CONV:RATE} The vanishing of the r.h.s.\ in \eqref{CONVERGENCE:OF:DENSITY:a} for large $N$ is a long known result, see references below \eqref{PROPAGATION:OF:CHAOS}. The optimal rate, to our knowledge, has only been derived for initial conditions equal to the full Hartree product \cite{ErdosSchlein:2008,ChenLee:2010,Chen:2011,Lee:2013}.\ Theorem \ref{CONVERGENCE:OF:DENSITY} holds for more general wave functions, and in particular, we expect it again to hold for the ground state of a trapped, interacting system (recall Remark \ref{remark:ground:state}). That it is not possible to improve the error further, can be inferred from \eqref{CONVERGENCE:ALPHA:TRACE}. The l.h.s.\ converging faster than $1/N$ would imply the wave function $\Psi_t$ to be close to the state $\varphi_t^{\otimes N}$ in $L^2$ sense.\ The latter is known to be false for interacting systems, as can be inferred, e.g., from \eqref{NORM:BOUND:MAIN:THEOREM}.
\end{remark}
\begin{remark} Following the argument from the proof of \eqref{CONVERGENCE:OF:DENSITY:a}, one can show as well that for any fixed integer $k$, the $k$-particle reduced density, $\gamma_{\Psi}^{(k)} : L^2(\mathbb R^{3k}) \to L^2(\mathbb R^{3k})$, defined by its kernel
\begin{align*}
\gamma_{\Psi}^{(k)}(x_1,...,x_k,y_1,...,y_k) = \int  \Psi_t(x_1,...,x_k,x_{k+1}...x_N) \overline{\Psi_t (y_1,...,y_k,x_{k+1}...x_N}) dx_{k+1}...dx_{N},
\end{align*}
converges to the $k$-fold product of the Hartree density, i.e.,
\begin{align*}
\text{Tr} \big\vert \gamma^{(k)}_{\Psi_t} - \vert \varphi_t \rangle \langle \varphi_t \vert^{\otimes k} \big\vert & \le \frac{e^{C_k^{\varphi_t}}}{N}.
\end{align*} 
\end{remark}

\subsection{Bogoliubov corrections on Fock space\label{BOG:DE:GENNES:CORRECTIONS}}

We define the set of correlation functions $ (\wt \chi^{(k)} )_{k=0}^N$, $ \wt \chi^{(k)} \in \q{1}{t}\otimes...\otimes \q{k}{t}L^{2}_s(\mathbb R^{3k})  \equiv \mathcal H^{(k),t}_s$, by
\begin{align}
\wt \chi^{(k)}_t(x_1,...,x_k)= \sqrt{\frac{N!}{k!(N-k)!}} \left(\prod_{i=1}^k\q{i}{t}\right) \int \left(\prod_{i=k+1}^N \overline{\varphi_t(x_i)}\right) \widetilde \Psi_t(x_1,...,x_N) \,dx_{k+1}\ldots dx_N.\label{TIDLE:CHI:K:DEFINITION}
\end{align}
By means of the partition $1 = \sum_{k=0}^N ( \q{1}{t}...\q{k}{t} \p{k+1}{t}...\p{N}{t})_{sym}$, c.f.\ Definition \ref{P:FAMILY:OPERATORS}, one can show that the following time-dependent decomposition of $\wt \Psi_t$, in terms of $\varphi_t$ and the correlation functions $\wt \chi_t^{(k)}$, holds as an identity at all times,
\begin{align}
\widetilde \Psi_t  = \sum_{k=0}^N \varphi_t^{\otimes N-k} \otimes_s  \wt \chi_t^{(k)} . \label{DECOMPOSITION:PSI:TILDE}
\end{align}
Here, $\otimes_s$ stands for the normalized symmetric tensor product between $\psi^{(l)} \in L^2(\mathbb R^{3l})$ and $\psi^{(k)} \in L^2(\mathbb R^{3k})$ defined by
\begin{align}\label{SYMMETRIC:PRODUCT}
 \psi^{(l)}  \otimes_s \psi^{(k)} = \frac{1}{\sqrt{k!l!(k+l)!}} \sum_{\sigma \in P_{k+l}} \psi^{(l)}(x_{\sigma(1)},...,x_{\sigma(l)})  \psi^{(k)}(x_{\sigma(l+1)},...,x_{\sigma(l+k)}).
\end{align}
It follows from \eqref{TIDLE:CHI:K:DEFINITION} that the $\wt \chi^{(k)}_t$ are orthogonal to $\varphi_t$ at all times, as well as that \vspace{-0.19cm} $\no \wt \chi^{(k)}_t\no_{\mathcal H_s^{(k),t}}^2$ equals the probability of finding exactly $k$ particles in $\wt \Psi_t$ which are not in the condensate wave function. The idea of decomposing an $N$-particle wave function according to \eqref{DECOMPOSITION:PSI:TILDE} and to study the thereby defined $k$-particle correlation functions was introduced in \cite{Lewin:2015b} where it was used to fully characterize the low energy spectrum (eigenvalues and eigenvectors) of the Bose gas in the mean field limit. The idea was then used to study the time evolution in \cite{Lewin:2015a} for the mean field limit and similarly in \cite{nam:2015,nam:2016} for the NLS scaling.\\

We next introduce the coupled hierarchy of Bogoliubov equations determining the time evolution of an infinite set of correlation functions which we denote by $(\chi^{(k)}_t)_{k\ge 0}$, $\chi^{(k)}_t(x_1,...,x_k) \in \mathcal H_s^{(k),t}$,
\begin{align}\label{Bog_hierarchy}
i\partial_t   \chi^{(0)}_t = &  \frac{1}{\sqrt{2}}\int \int \overline{ K^{(2),t}(x,y)}  \chi^{(2)}_t(x,y)  dxdy   , \nonumber\\
i\partial_t    \chi^{(1)}_t(x_1) = &  \Big( h^{\varphi_t}_1 + K^{(1),t}_1 \Big) \chi^{(1)}_t(x_1)  + \frac{\sqrt{6}}{2}\int \int    \overline{ K^{(2),t}(x,y) }  \chi_t^{(3)}(x_1,x,y) dxdy ,\nonumber\\
  i\partial_t \chi^{(k)}_t(x_1,...,x_k) = & \sum_{i=1}^k \Big( h^{\varphi_t}_i + K^{(1),t}_i \Big) \chi^{(k)}_t(x_1,...,x_k)\nonumber\\
  &  +  \frac{1}{2\sqrt{k(k-1)}} \sum_{1\le i<j\le k} K^{(2),t}(x_i,x_j) \chi^{(k-2)}_t(x_1,...,x_k \backslash x_i \backslash x_j)\nonumber\\
  & + \frac{\sqrt{(k+1)(k+2)}}{2} \int \int   \overline{K^{(2),t}(x,y)} \chi_t^{(k+2)}(x_1,...,x_k,x,y) dxdy
\end{align}
for all $k\ge 2$. Here, $K^{(1),t} : L^2(\mathbb R^3) \to L^2(\mathbb R^3)$ is given by $ K^{(1),t} = q^t \wt K^{(1),t} q^t$ where $\wt K^{(1),t} : L^2(\mathbb R^3) \to L^2(\mathbb R^3)$ is defined via its integral kernel $\wt K^{(1),t}(x,y) = \overline{\varphi_t(y)} v(x-y) \varphi_t(x)$. Further, $K^{(2),t}: L^2(\mathbb R^3)\otimes L^2(\mathbb R^3) \to q^t L^2(\mathbb R^3)\otimes q^t L^2(\mathbb R^3)$ with  $K^{(2),t} = q^t\otimes q^t \wt K^{(2),t}$ where  $\wt K^{(2),t}(x,y) = v(x-y) \varphi_t(x) \varphi_t(y)$.
\begin{remark} [Equivalence to Bogoliubov theory on Fock space] Interpreting $(\chi^{(k)}_t)_{k\ge 0} = \mathbf{\chi}_t$ as an element of the time-dependent Fock space over the excitations around the Hartree state $\varphi_t$, i.e., $\mathcal F_s ( \mathcal H^{(1),t} ) = \bigoplus_{n=0}^\infty  \mathcal H_s^{(n),t},$ the above hierarchy \eqref{Bog_hierarchy} is equivalent to the Schr\"odinger equation
\begin{align}\label{BOG:SCHROEDINGER:EQ}
i \partial_t \mathbf{ \chi}_t = \mathbb H_{Bog}^t \mathbf{ \chi }_t
\end{align}
on $\mathcal F_s ( \mathcal H^{(1),t} )$. The quadratic Hamiltonian $\mathbb H_{Bog}^t$ is nonparticle conserving and given by 
\begin{align*}
\mathbb H_{Bog}^t =  & \int  a^*_x \big( h^{t,\varphi_t}_x + K^{(1),t}_x \big) a_x \ dx + \frac{1}{2}\int \int \Big[ K^{(2),t}(x,y) a^*_x a^*_y   +  \overline{ K^{(2),t}(x,y)} a_x a_y \Big]  dxdy,
\end{align*}
where the operator-valued distributions $a^*_x$, $a_x$ are defined as
\begin{align*}
(a^*_x \chi)^{(k)}(x_1,...,x_k) = & \frac{1}{\sqrt k} \sum_{i=1}^k \delta(x_i-x) \chi^{(k-1)}(x_1,...x_k \backslash x_i), \\
(a_x \chi)^{(k)}(x_1,...,x_k) = & \sqrt{k+1} \int  \chi^{(k+1)}(x_1,...x_k,x) dx .
\end{align*}
The decomposition of $\wt \Psi_t$ in \eqref{DECOMPOSITION:PSI:TILDE} corresponds then to a partial isometry given by
\begin{align*}
\mathbb U^{\varphi_t}: L^2_s(\mathbb R^{3N}) \to  \mathcal F_s ( \mathcal H^{(1),t} ) , \hspace{1cm} \Psi\mapsto \mathbb U^{\varphi_t}\Psi = \chi_t^{(0)} \oplus \chi_t^{(1)} \oplus ... \oplus \chi_t^{(N)} \oplus 0 \oplus 0 \oplus ...,
\end{align*}
where the $\chi^{(k)} \in \mathcal H_s^{(k),t}$ are defined as in \eqref{TIDLE:CHI:K:DEFINITION} with $\wt \Psi_t$ replaced by $\Psi$.  For more details, we refer the reader to \cite{Lewin:2015a} and \cite{nam:2015}.
\end{remark}
\begin{remark}[Equivalence to Bogoliubov theory for density matrices] Yet another way to understand the Bogoliubov hierarchy was considered in \cite{nam:2015}, motivated by ideas from \cite{GrillakisMachedon:2013}. If one defines the density matrices $\gamma_t : \mathcal H^{(1),t} \to \mathcal H^{(1),t}$ and $\alpha_t : \overline{ \mathcal  H^{(1),t} } \to \mathcal H^{(1),t}$ by 
$$\langle f, \gamma_t g\rangle = \lsp \chi_t , a^*(g) a(f) \chi_t \rsp_{\mathcal F_s(\mathcal H^{(1),t})}, \hspace{1cm} \langle f, \alpha_t \overline{g} \rangle = \lsp \chi_t, a(g) a(f) \chi_t \rsp_{\mathcal F_s(\mathcal H^{(1),t})}, $$ 
the hierarchy \eqref{Bog_hierarchy} is equivalent to the pair of coupled equations for $\gamma_t$ and $\alpha_t$,
\begin{align*}
i \partial_t \gamma_t & =  \big( h^{t,\varphi_t} + K^{(1),t} \big) \gamma_t - \gamma_t \big( h^{t,\varphi_t} + K^{(1),t} \big) + K^{(2),t} \alpha_t - \alpha_t^{*} K^{(2),t*}, \\
 i \partial_t \alpha_t & =  \big( h^{t,\varphi_t} + K^{(1),t} \big) \alpha_t + \alpha_t \big( h^{t,\varphi_t} + K^{(1),t} \big)^{\text T} + K^{(2),t} +  K^{(2),t} \gamma_t^{\text{T}} + \gamma_t K^{(2),t}.
\end{align*}
\end{remark}
\begin{remark} Well-posedness of the Bogliubov hierarchy \eqref{Bog_hierarchy}, or equivalently of \eqref{BOG:SCHROEDINGER:EQ} has been shown in \cite[Section 4.3]{Lewin:2015a}. The main difficulty is the time-dependence of $\mathbb H_{Bog}^t$, \vspace{-0.1cm} and one essential ingredient is to show that $K^{(2)}_t$ is a Hilbert-Schmidt operator. This corresponds to the physical fact that only a finite number of correlations (particles in $\mathcal F_s(\mathcal H^{(1),t})$) is created during time evolution.
\end{remark}

Our last goal is to show that the corrections to the Hartree product in $\Psi_t$ are effectively described by the solutions of the Bogoliubov hierarchy, i.e.,
\begin{align}
\lim_{N\to \infty} \lno \Psi_t - \sum_{k=0}^N \varphi_t^{\otimes N-k} \otimes_s \chi_t^{(k)} \rno = 0. \label{GOAL:SECTION:BOG:CORRECTIONS}
\end{align}
To this end, it remains to show that $\lim_{N\to\infty} \wt \chi_t^{(k)} = \chi_t^{(k)}$.
\begin{lemma}\label{LEMMA:CHI:TILDE:CHI:BOG} Let $\varphi_0, \Psi_0$ satisfy Assumptions \hyperref[A.2]{A.2}.\ If $(\widetilde \chi_t^{(k)})_{k=0}^N$ is given by \eqref{TIDLE:CHI:K:DEFINITION}, and $(\chi_t^{(k)})_{k\ge 0}$ solves the Bogoliubov hierarchy \eqref{Bog_hierarchy} with initial condition $(\chi_0^{(k)} = \wt \chi^{(k)}_0)_{k=0}^N$ and $(\chi_0^{(k)}=0) _{k \ge N+1} $, then there exists a time-dependent constant $C^{\varphi_t}>0$ such that for all $t\ge 0$,
\begin{align*}
\sum_{k=0}^N \no \wt \chi_t^{(k)} - \chi_t^{(k)} \no^2_{\mathcal H^{(k),t}_s} \le \frac{e^{C^{\varphi_t}}}{N}.
\end{align*}
\end{lemma}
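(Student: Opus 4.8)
\emph{Setup and reduction to a Gr\"onwall estimate.} The natural arena is the excitation Fock space $\mathcal F_s(\mathcal H^{(1),t})$. Write $\mathbf{\wt\chi}_t := \mathbb U^{\varphi_t}\wt\Psi_t = (\wt\chi^{(k)}_t)_{k=0}^N\oplus 0\oplus\cdots$ and let $\mathbf{\chi}_t := (\chi^{(k)}_t)_{k\ge0}$ be the solution of \eqref{BOG:SCHROEDINGER:EQ}. Since $\wt\Psi_0 = \widetilde U_0\Psi_0=\Psi_0$, the prescribed data satisfy $\mathbf{\chi}_0 = \mathbb U^{\varphi_0}\Psi_0 = \mathbf{\wt\chi}_0$, and the assertion is exactly $\no \mathbf{\wt\chi}_t - \mathbf{\chi}_t\no_{\mathcal F}^2\le e^{C^{\varphi_t}}/N$. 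As $\mathbb U^{\varphi_t}$ depends on $t$ only through $\varphi_t$ and $\wt\Psi_t$ solves $i\partial_t\wt\Psi_t = \wt H^t\wt\Psi_t$, the vector $\mathbf{\wt\chi}_t$ obeys $i\partial_t\mathbf{\wt\chi}_t = \mathbb G^t\mathbf{\wt\chi}_t$ with $\mathbb G^t := \mathbb U^{\varphi_t}\wt H^t(\mathbb U^{\varphi_t})^* + i(\partial_t\mathbb U^{\varphi_t})(\mathbb U^{\varphi_t})^*$. The plan is to show $\mathbb G^t = \mathbb H^t_{Bog} + \mathcal R^t$ with a controlled remainder, then to Gr\"onwall: setting $D_t := \mathbf{\wt\chi}_t - \mathbf{\chi}_t$ one gets $i\partial_t D_t = \mathbb H^t_{Bog}D_t + \mathcal R^t\mathbf{\wt\chi}_t$, and since $\mathbb H^t_{Bog}$ is self-adjoint (cf.\ the discussion after \eqref{Bog_hierarchy}) with $D_t$ in its form domain — $\mathbf{\wt\chi}_t$ because $\wt\Psi_t\in H^1$ by Assumption \hyperref[A.2.1]{A.2.1} and $H^1$-boundedness of $\widetilde U_t$, and $\mathbf{\chi}_t$ by well-posedness of \eqref{Bog_hierarchy} — one has $\im\lsp D_t,\mathbb H^t_{Bog}D_t\rsp = 0$, hence $\partial_t\no D_t\no^2 = 2\im\lsp D_t,\mathcal R^t\mathbf{\wt\chi}_t\rsp \le \no D_t\no^2 + \no\mathcal R^t\mathbf{\wt\chi}_t\no^2$. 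With $D_0=0$ everything reduces to proving $\no\mathcal R^t\mathbf{\wt\chi}_t\no\le e^{C^{\varphi_t}}/\sqrt N$.

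\emph{Step 1: expansion of $\mathbb G^t$.} Conjugating $\wt H^t$ sector by sector via $1 = \sum_k(\q{1}{t}\cdots\q{k}{t}\p{k+1}{t}\cdots\p{N}{t})_{sym}$ is the computation carried out in \cite{Lewin:2015a} for the full Hamiltonian, but much shorter here: every interaction term of $\wt H^t$ carries exactly two $p^t$'s and two $q^t$'s, hence produces only an $a^*a$ term and an $a^*a^*+aa$ term, with combinatorial weights $\sqrt{(N-\mathcal N)(N-\mathcal N-1)}/(N-1)$ and $(N-\mathcal N)/(N-1)$ that differ from $1$ by $O(\mathcal N/N)$. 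The one-body part $\sum_i h^{t,\varphi_t}_i$ yields $d\Gamma(\q{}{t}h^{t,\varphi_t}\q{}{t})$, terms linear in $a$, $a^*$, and an order-$N$ scalar; inserting the Hartree equation $i\partial_t\varphi_t = h^{t,\varphi_t}\varphi_t$, the linear and scalar pieces cancel exactly against $i(\partial_t\mathbb U^{\varphi_t})(\mathbb U^{\varphi_t})^*$, which is precisely the role of the phase $\mu^{\varphi_t}$ in \eqref{HARTREE:HAMILTONIAN}. What remains is $\mathbb G^t = \mathbb H^t_{Bog} + \mathcal R^t$, where $\mathcal R^t\mathbf{\wt\chi}_t$ is a sum of three families: (i) \emph{prefactor corrections} $\theta_1(\mathcal N)\,d\Gamma(K^{(1),t})\mathbf{\wt\chi}_t$ and $\theta_2(\mathcal N)\,\tfrac12\!\int\!\!\int K^{(2),t}(x,y)\,a^*_xa^*_y\,dx\,dy\,\mathbf{\wt\chi}_t + \text{h.c.}$, with $|\theta_1(m)|+|\theta_2(m)|\le Cm/N$ for $0\le m\le N$; (ii) a \emph{cut-off term} whose $\mathcal F$-norm is $\le C(\no K^{(1),t}\no_{\mathrm{op}} + \no K^{(2),t}\no_{\mathrm{HS}})(N+1)\,\no\mathbf 1_{\{N-1,N\}}\mathbf{\wt\chi}_t\no$, present because $\wt H^t$ can neither raise the excitation number above $N$ nor act by $a^*a$, $a^*a^*$ when too few condensate particles remain, whereas $\mathbb H^t_{Bog}$ can; and (iii) a norm-bounded residue of size $O(1/N)$ collecting lower-order combinatorial errors.

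\emph{Step 2: the remainder estimate.} Two ingredients. First, \eqref{BOUNDS:ON:V:PHI} gives $\no K^{(1),t}\no_{\mathrm{op}}\le C(\no\varphi_t\no\,\no\nabla\varphi_t\no + \no\nabla\varphi_t\no^2)$ and $\no K^{(2),t}\no^2_{\mathrm{HS}}\le\no v^2\ast|\varphi_t|^2\no_\infty\no\varphi_t\no^2 < \infty$, so the interaction operators in $\mathcal R^t$ are bounded by powers of $\mathcal N+1$. Second, Lemma \ref{A:PRIORI:ESTIMATE:PSI} together with Assumption \hyperref[A.2.2]{A.2.2} gives $\lsp\wt\Psi_t,(\mathcal N^t)^j\wt\Psi_t\rsp\le e^{C^{\varphi_t}}$ for $j = 1,2,3$, where $\mathcal N^t = \sum_{i=1}^N\q{i}{t}$ corresponds to the Fock-space number operator $\mathcal N$ under $\mathbb U^{\varphi_t}$; hence also, crudely, $\lsp\wt\Psi_t,(\mathcal N^t)^4\wt\Psi_t\rsp\le N\lsp\wt\Psi_t,(\mathcal N^t)^3\wt\Psi_t\rsp\le e^{C^{\varphi_t}}N$. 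For (i), $\no\theta_2(\mathcal N)\!\int\!\!\int K^{(2),t}a^*a^*\mathbf{\wt\chi}_t\no\le\tfrac{C}{N}\no K^{(2),t}\no_{\mathrm{HS}}\no(\mathcal N+1)^2\mathbf{\wt\chi}_t\no\le\tfrac{C}{N}e^{C^{\varphi_t}}\sqrt N = e^{C^{\varphi_t}}/\sqrt N$, and likewise for the other three; for (ii), the operator inequality $\mathbf 1_{\{\mathcal N^t\ge N-1\}}\le(N-1)^{-3}(\mathcal N^t)^3$ gives $\no\mathbf 1_{\{N-1,N\}}\mathbf{\wt\chi}_t\no^2\le(N-1)^{-3}\lsp\wt\Psi_t,(\mathcal N^t)^3\wt\Psi_t\rsp\le e^{C^{\varphi_t}}/N^3$, so this term is $\le e^{C^{\varphi_t}}/\sqrt N$; (iii) is $\le C/N$. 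Thus $\no\mathcal R^t\mathbf{\wt\chi}_t\no\le e^{C^{\varphi_t}}/\sqrt N$, and the Gr\"onwall argument above closes the proof. (One may instead, in the Gr\"onwall step, move one factor $(\mathcal N+1)^{1/2}$ onto $D_t$ and use that $\mathbf{\chi}_t$ also has a uniformly bounded first $\mathcal N$-moment, propagated along the Bogoliubov flow from $\mathbf{\chi}_0 = \mathbf{\wt\chi}_0$; this reaches the same bound using only $\mathcal N$-moments up to order three.)

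\emph{Main obstacle.} The only substantial step is Step 1: carrying out the conjugation $\mathbb U^{\varphi_t}\wt H^t(\mathbb U^{\varphi_t})^* + i(\partial_t\mathbb U^{\varphi_t})(\mathbb U^{\varphi_t})^*$ carefully enough to see that everything beyond $\mathbb H^t_{Bog}$ organizes into the three harmless families above — in particular that the potentially dangerous linear-in-$a$ and order-$N$-scalar terms genuinely cancel by virtue of the Hartree equation and the choice of $\mu^{\varphi_t}$, and that the surviving prefactor corrections are $O(\mathcal N/N)$ rather than $O(1)$. Once $\mathcal R^t$ is identified, Steps 1's bookkeeping aside, the remaining estimates are routine given \eqref{BOUNDS:ON:V:PHI} and Lemma \ref{A:PRIORI:ESTIMATE:PSI}.
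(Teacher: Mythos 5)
Your proposal is correct and follows essentially the same route as the paper: the paper writes out, in first-quantized form, the exact hierarchy satisfied by $(\wt\chi^{(k)}_t)_{k=0}^N$ (which is precisely the Bogoliubov hierarchy with prefactors $\tfrac{N-k}{N-1}$, $\tfrac{\sqrt{(N-k+2)(N-k+1)}}{N-1}$, $\tfrac{\sqrt{(N-k)(N-k-1)}}{N-1}$ and a cutoff at $k=N$), cancels the self-adjoint parts, and controls the prefactor and cutoff errors via the third moment $\lsp\wt\Psi_t,(\cf{m}{t})^3\wt\Psi_t\rsp\le e^{C^{\varphi_t}}/N^3$ from Lemma \ref{A:PRIORI:ESTIMATE:PSI} before applying Gr\"onwall --- exactly your decomposition of $\mathcal R^t$ into $O(\mathcal N/N)$ prefactor corrections, a cutoff term of size $e^{C^{\varphi_t}}/N^{3/2}$ per sector, and lower-order residue, merely phrased in second-quantized language. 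The one step you defer, the explicit conjugation in your Step 1, is carried out in the paper as a ``direct calculation'' yielding precisely the structure you predict.
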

\begin{remark}
We emphasize that all elements of the tuple $(\wt \chi_t^{(k)})_{k=0}^N$ depend explicitly on $N$ whereas the sequence $(\chi_t^{(k)})_{k\ge 0}$ is $N$-independent.
\end{remark}
A quantitative version of \eqref{GOAL:SECTION:BOG:CORRECTIONS} follows as a simple corollary of Theorem \ref{THEOREM:MAIN:THEOREM:NORM:APPROXIMATION} and the previous lemma.
\begin{theorem}\label{THM:BOG:HIERARCHY}
Let $\varphi_0, \Psi_0$ satisfy Assumptions \hyperref[A.2]{A.2}. If $(  \chi_t^{(k)})_{k\ge 0}$ solves the Bogoliubov hierarchy \eqref{Bog_hierarchy} with initial condition
\begin{align*}
\chi^{(k)}_0(x_1,...,x_k)= \sqrt{\frac{N!}{k!(N-k)!}} \left(\prod_{i=1}^k\q{i}{0}\right) \int \left(\prod_{i=k+1}^N \overline{\varphi_0(x_i)}\right) \Psi_0(x_1,...,x_N) \,dx_{k+1}\ldots dx_N
\end{align*}
for $0\le k \le N$, and $(\chi_0^{(k)}=0)_{k\ge N+1}$, then there exists a time-dependent constant $C^{\varphi_t}>0$ such that for all $t\ge 0$,
\begin{align*}
\lno \Psi_t - \sum_{k=0}^N \varphi_t^{\otimes N-k} \otimes_s \chi^{(k)}_t \rno \le \frac{e^{C^{\varphi_t}}}{\sqrt{N}}.
\end{align*}
\end{theorem}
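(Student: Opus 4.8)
The plan is to obtain the estimate as a direct consequence of the norm approximation in Theorem~\ref{THEOREM:MAIN:THEOREM:NORM:APPROXIMATION} and the correlation-function estimate in Lemma~\ref{LEMMA:CHI:TILDE:CHI:BOG}, with the decomposition identity \eqref{DECOMPOSITION:PSI:TILDE} serving as the bridge. First I would insert $\wt\Psi_t$ and split by the triangle inequality,
\begin{align*}
\lno \Psi_t - \sum_{k=0}^N \varphi_t^{\otimes N-k} \otimes_s \chi^{(k)}_t \rno \le \no \Psi_t - \wt\Psi_t \no + \lno \wt\Psi_t - \sum_{k=0}^N \varphi_t^{\otimes N-k} \otimes_s \chi^{(k)}_t \rno .
\end{align*}
The first term is bounded by $e^{C^{\varphi_t}}/\sqrt N$ by Theorem~\ref{THEOREM:MAIN:THEOREM:NORM:APPROXIMATION}, so it remains to control the second one.

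For the second term I would use the identity \eqref{DECOMPOSITION:PSI:TILDE}, $\wt\Psi_t = \sum_{k=0}^N \varphi_t^{\otimes N-k}\otimes_s \wt\chi_t^{(k)}$, to rewrite it as $\lno \sum_{k=0}^N \varphi_t^{\otimes N-k}\otimes_s(\wt\chi_t^{(k)} - \chi_t^{(k)})\rno$. The key point is that the reconstruction map $(\psi^{(k)})_{k=0}^N \mapsto \sum_{k=0}^N \varphi_t^{\otimes N-k}\otimes_s\psi^{(k)}$ acts isometrically from $\bigoplus_{k=0}^N \mathcal H_s^{(k),t}$ into $L^2_s(\mathbb R^{3N})$, being precisely the restriction of the partial isometry $\mathbb U^{\varphi_t,*}$. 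Indeed, both $\wt\chi_t^{(k)}$ (which by \eqref{TIDLE:CHI:K:DEFINITION} carries the projection $\prod_{i\le k}\q{i}{t}$ in front) and $\chi_t^{(k)}$ (which, by construction of the Bogoliubov hierarchy \eqref{Bog_hierarchy}, lies in $\mathcal H_s^{(k),t}$) are symmetric and orthogonal to $\varphi_t$ in each variable; expanding the symmetric products with the normalization of \eqref{SYMMETRIC:PRODUCT}, every cross term between the $k$- and $j$-summands with $j\neq k$ contains a partial pairing $\langle\varphi_t,\cdot\rangle$ against an excited coordinate and therefore vanishes, while the diagonal terms reproduce $\no\psi^{(k)}\no^2_{\mathcal H_s^{(k),t}}$. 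Hence
\begin{align*}
\lno \sum_{k=0}^N \varphi_t^{\otimes N-k}\otimes_s(\wt\chi_t^{(k)} - \chi_t^{(k)})\rno^2 = \sum_{k=0}^N \no \wt\chi_t^{(k)} - \chi_t^{(k)} \no^2_{\mathcal H_s^{(k),t}} \le \frac{e^{C^{\varphi_t}}}{N},
\end{align*}
the final inequality being exactly Lemma~\ref{LEMMA:CHI:TILDE:CHI:BOG}; note that its hypothesis on the initial correlation functions coincides with the one assumed here, since $\wt U_0 = 1$ gives $\chi_0^{(k)} = \wt\chi_0^{(k)}$. Combining the two bounds and absorbing the constants into $e^{C^{\varphi_t}}$ yields the claim.

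Since the substantial analytic input has already been supplied by Theorem~\ref{THEOREM:MAIN:THEOREM:NORM:APPROXIMATION} and Lemma~\ref{LEMMA:CHI:TILDE:CHI:BOG}, I do not expect a genuine obstacle here; the only point needing care is the isometry property of the reconstruction map, which reduces to the elementary orthogonality computation sketched above (this is also implicit in the construction of $\mathbb U^{\varphi_t}$ and in the interpretation of $\no\wt\chi_t^{(k)}\no^2$ as the probability of $k$ excited particles). If desired, one could add that the discarded tail $\sum_{k>N}\no\chi_t^{(k)}\no^2$ is itself $O(1/N)$, via the Hilbert--Schmidt bound on $K^{(2),t}$ which controls $\sum_k k\,\no\chi_t^{(k)}\no^2$ uniformly in time; but this is not required for the statement as phrased, which compares $\Psi_t$ directly with the truncated sum $\sum_{k=0}^N$.
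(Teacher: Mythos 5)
Your proposal is correct and follows exactly the paper's own argument: triangle inequality with $\wt\Psi_t$ inserted, Theorem~\ref{THEOREM:MAIN:THEOREM:NORM:APPROXIMATION} for the first term, and the isometry of the decomposition \eqref{DECOMPOSITION:PSI:TILDE} combined with Lemma~\ref{LEMMA:CHI:TILDE:CHI:BOG} for the second. The paper simply states the identity $\lno \wt\Psi_t - \sum_k \varphi_t^{\otimes N-k}\otimes_s\chi_t^{(k)}\rno^2 = \sum_k \no\wt\chi_t^{(k)}-\chi_t^{(k)}\no^2_{\mathcal H_s^{(k),t}}$ using $\no\varphi_t\no=1$, whereas you spell out the orthogonality computation behind it, which is a fine addition.
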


\section{Proofs\label{sec:Proofs}}

We first state a technical lemma from which the proofs of the theorems then follow easily. We defer its proof to Section \ref{PROOFS:OF:LEMMATA}.\ It can essentially be read as estimates for terms like $\vert \lsp \Phi, \big[ q_1^{\varphi} , A_1 \big] \wt \Phi \rsp \vert$,
where $A_1$ is a one-particle operator and $\Phi, \wt \Phi$ symmetric wave functions. To have control of such terms is important in order to use each of the $q^t$'s that are available in the terms that need to be estimated.

\begin{lemma}\label{Q:COMMUTATION} Let $\varphi \in L^2(\mathbb R^3)$, $\Phi, \wt \Phi \in L_s^2(\mathbb R^{3N})$, and $p^{\varphi}_i =  \vert \varphi(x_i) \rangle \langle \varphi(x_i) \vert $, $q_i^{\varphi} = 1- p^{\varphi}_i$ as in Definition \eqref{DEF:PROJECTORS} and $\widehat f_{\text{odd}}^\varphi$, $\widehat f_{\text{even}}^\varphi$ as in \eqref{DEF:F:ODD} and \eqref{DEF:F:EVEN}. 

\begin{enumerate}
\item Let $A_1$ be an operator on $L^2(\mathbb R^3)$ with $\no A_1 p^{\varphi}_1 \no_{op} < C $. Then
\begin{align}
\big\vert \lsp \Phi, q^{\varphi}_1 A_1p^{\varphi}_1 \Phi \rsp \big\vert + \big\vert \lsp \Phi, p^{\varphi}_1 A_1q^{\varphi}_1 \Phi \rsp \big\vert  \le C \no \Phi- \wt \Phi \no^2  + C \no \widehat f_{\text{odd}}^\varphi \wt \Phi \no^2 + C^{\varphi} \Big( \no q^{\varphi}_1  \Phi \no^2 + \frac{1}{N} \Big). \label{QAP:COMMUTATION}
\end{align} 
\item Let $v=v(x)$ satisfy $v^2 \le C(1-\Delta)$. Then
\begin{align} 
\big\vert \lsp \wt \Phi,  q^{\varphi}_1 q^{\varphi}_2 v(x_1-x_2) q^{\varphi}_1 q^{\varphi}_2 \Phi \rsp \big\vert  \le \frac{C N  \no  q^{\varphi}_1 q^{\varphi}_2 q^{\varphi}_3 \wt \Phi \no^2}{2} +  \frac{\no \nabla_1 q_1^{\varphi} \Phi \no^2}{2N} + \frac{\no q_1^{\varphi} \Phi \no^2}{2N}. \label{QQVQQ:COMMUTATION}
\end{align} 
\item Let $A_{12} = A_{21}$ be an operator on $L^2(\mathbb R^3) \otimes L^2(\mathbb R^3)$ with $\no A_{12}p^{\varphi}_{2} \no_{op}<C$. Then
\begin{align}
& \big\vert \lsp  \Phi,  \big( q^{\varphi}_1 q^{\varphi}_2 A_{12} q^{\varphi}_1 p^{\varphi}_2 + \text{h.c.} \big) \wt \Phi \rsp \big\vert \nonumber\\
& \hspace{2cm} \le \frac{C \no \Phi - \wt \Phi \no^2}{N} + \frac{C \no \widehat f_{\text{odd}}^\varphi \wt \Phi \no^2}{N} + C^{\varphi_t} N \Big( \no  q^{\varphi}_1 q^{\varphi}_2 q^{\varphi}_3 \Phi \no^2  + \no  q^{\varphi}_1 q^{\varphi}_2 q^{\varphi}_3 \wt \Phi \no^2  \Big). \label{QQAQP:COMMUTATION} 
\end{align}
\end{enumerate}
\end{lemma}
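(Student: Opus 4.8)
The plan is to treat the three statements separately, in each case exploiting the parity decomposition $1 = \cf{f}{\varphi}_{\text{odd}} + \cf{f}{\varphi}_{\text{even}}$ together with the observation that $q^{\varphi}_1$ shifts the parity of the number of excited particles by one. For part (1), I would begin by inserting the resolution of the identity in terms of $\cf{f}{\varphi}_{\text{odd}}$ and $\cf{f}{\varphi}_{\text{even}}$ on both sides of $q^{\varphi}_1 A_1 p^{\varphi}_1$. Since $q^{\varphi}_1 A_1 p^{\varphi}_1$ changes the parity (one $q$, one $p$, so it connects the $n$-excitation sector to the $n{\pm}1$-excitation sector), only the cross terms $\cf{f}{\varphi}_{\text{even}} \cdots \cf{f}{\varphi}_{\text{odd}}$ survive. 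The key algebraic point is then a commutator shift $A_1 p^{\varphi}_1 \cf{f}{\varphi}_{\text{odd}} = \cf{f}{\varphi}_{\text{even}} A_1 p^{\varphi}_1 + (\text{error supported near the boundary sectors})$; the boundary error is of order $1/N$ after using $\no A_1 p^{\varphi}_1 \no_{op} < C$ and symmetry. This reduces the estimate to $|\lsp \cf{f}{\varphi}_{\text{even}}\Phi, q^{\varphi}_1 A_1 p^{\varphi}_1 \cf{f}{\varphi}_{\text{odd}}\Phi\rsp|$, which by Cauchy–Schwarz is bounded by $\no \cf{f}{\varphi}_{\text{odd}}\Phi\no \cdot \no q^{\varphi}_1 A_1 p^{\varphi}_1 \Phi\no \le C \no \cf{f}{\varphi}_{\text{odd}}\Phi\no\, \no q^{\varphi}_1 \Phi\no$. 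Finally I replace $\cf{f}{\varphi}_{\text{odd}}\Phi$ by $\cf{f}{\varphi}_{\text{odd}}\wt\Phi$ at the cost of $\no \Phi - \wt\Phi\no$, and use $2ab \le a^2 + b^2$ (with an appropriate $\varphi$-dependent weight on the $\no q^{\varphi}_1\Phi\no^2$ term) to obtain the stated form.

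For part (2), the idea is purely an energy/regularity bound: I would write $v(x_1-x_2) = q^{\varphi}_3 v(x_1-x_2) q^{\varphi}_3 + (1 - q^{\varphi}_3 \text{ insertions})$, or more directly use symmetry to introduce a third projector. Using $v^2 \le C(1-\Delta) \le C(1 - \Delta_1)$ as an operator inequality, one estimates $|\lsp \wt\Phi, q^{\varphi}_1 q^{\varphi}_2 v_{12} q^{\varphi}_1 q^{\varphi}_2 \Phi\rsp|$ by Cauchy–Schwarz as $\no v_{12} q^{\varphi}_1 q^{\varphi}_2 \wt\Phi\no \cdot \no q^{\varphi}_1 q^{\varphi}_2 \Phi\no$ after suitable placement, then bounds one factor using $\no v_{12}\, \cdot\,\no \le C \no \sqrt{1-\Delta_1}\,\cdot\,\no$ and absorbs the resulting $\no \nabla_1 q^{\varphi}_1 \Phi\no$ and $\no q^{\varphi}_1 \Phi\no$ with a factor $1/N$, paying back $CN \no q^{\varphi}_1 q^{\varphi}_2 q^{\varphi}_3 \wt\Phi\no^2$ from the third $q$ (obtained by symmetrizing and relabeling, so that one of the four projectors on the $\wt\Phi$ side gets a "spare" index). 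The bookkeeping with which three of the four available $q$'s end up on $\wt\Phi$ and the Young-type splitting of the $CN$ versus $1/N$ weights is the routine but slightly fiddly part here.

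Part (3) combines the mechanisms of (1) and (2): the operator $q^{\varphi}_1 q^{\varphi}_2 A_{12} q^{\varphi}_1 p^{\varphi}_2$ has three $q$'s and one $p$, so it is again parity-changing, and I would again insert $\cf{f}{\varphi}_{\text{even/odd}}$ and keep only the cross term, picking up a factor $\no \cf{f}{\varphi}_{\text{odd}}\wt\Phi\no$ (hence, after the $\Phi - \wt\Phi$ swap, the terms $\no\Phi-\wt\Phi\no^2/N$ and $\no\cf{f}{\varphi}_{\text{odd}}\wt\Phi\no^2/N$, the extra $1/N$ coming from $\lambda_N$-type counting built into the three-$q$ structure and symmetry). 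The three surviving $q$'s on each side are then exactly what produces the $\no q^{\varphi}_1 q^{\varphi}_2 q^{\varphi}_3 \Phi\no^2$ and $\no q^{\varphi}_1 q^{\varphi}_2 q^{\varphi}_3 \wt\Phi\no^2$ terms, with the factor $N$ compensating the combinatorial loss when one symmetrizes the index of the single $p$. The main obstacle throughout — and the place where one must be careful — is the parity-shift commutator step: one has to verify that moving $\cf{f}{\varphi}_{\text{even/odd}}$ through the one- and two-body operators produces only boundary errors of the claimed order $1/N$ (equivalently, that the "number of excitations" operator commutes with these insertions up to controlled remainders at the top sector $k \sim N$), since it is this cancellation of even–odd correlations that is responsible for the improved powers of $N$ relative to the earlier literature.
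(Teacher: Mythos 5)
Your overall architecture does match the paper's: the odd/even decomposition with only cross terms surviving in parts (1) and (3), Cauchy--Schwarz combined with $v^2\le C(1-\Delta)$ in part (2), a Young-type splitting of the $N$-weights, and the final swap of $\no \widehat f^{\varphi}_{\text{odd}}\Phi\no$ for $\no \widehat f^{\varphi}_{\text{odd}}\wt\Phi\no$ at the cost of $\no\Phi-\wt\Phi\no^2$. But there is a genuine gap at the one step that carries the whole lemma. The parity decomposition of $\lsp\Phi, q_1^{\varphi}A_1p_1^{\varphi}\Phi\rsp$ produces \emph{two} cross terms, and only one of them has the form you treat: in $\lsp\widehat f^{\varphi}_{\text{even}}\Phi, q_1^{\varphi}A_1p_1^{\varphi}\widehat f^{\varphi}_{\text{odd}}\Phi\rsp$ the $q_1^{\varphi}$ faces the even sector and the bounded operator $A_1p_1^{\varphi}$ faces the odd sector, so Cauchy--Schwarz yields the product $\no q_1^{\varphi}\Phi\no\,\no\widehat f^{\varphi}_{\text{odd}}\Phi\no$ of two small factors. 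In the other cross term, $\lsp\widehat f^{\varphi}_{\text{odd}}\Phi, q_1^{\varphi}A_1p_1^{\varphi}\widehat f^{\varphi}_{\text{even}}\Phi\rsp$, the $q_1^{\varphi}$ is adjacent to $\widehat f^{\varphi}_{\text{odd}}\Phi$, and a direct Cauchy--Schwarz gives only $\no q_1^{\varphi}\widehat f^{\varphi}_{\text{odd}}\Phi\no\cdot C \le C\min\big(\no q_1^{\varphi}\Phi\no,\no\widehat f^{\varphi}_{\text{odd}}\Phi\no\big)$, i.e.\ a single factor of smallness and hence a $1/\sqrt N$-type bound where a $1/N$-type bound is claimed. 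The paper's resolution is to insert $1-P_{N,0}^{\varphi}=\widehat n^{\varphi}\widehat\nu^{\varphi}$ next to $q_1^{\varphi}$ and redistribute via the pull-through formula (Lemma \ref{LEMMA:PULL:THROUGH:FORMULA}): $\widehat\nu^{\varphi}$ stays put and exactly cancels the smallness of the adjacent $q_1^{\varphi}$, so that $\no q_1^{\varphi}\widehat\nu^{\varphi}\widehat f^{\varphi}_{\text{odd}}\Phi\no=\no\widehat f^{\varphi}_{\text{odd}}\Phi\no$, while $\widehat n^{\varphi}$, transported to the other side as $\widehat{\tau_{-1}n}^{\varphi}$, recreates the smallness there via $\no\widehat{\tau_{-1}n}^{\varphi}\widehat f^{\varphi}_{\text{even}}\Phi\no^2\le\no q_1^{\varphi}\Phi\no^2+1/N$. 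This also corrects your account of where the $+1/N$ comes from: the parity exchange under the pull-through formula is exact (the only boundary term is $P_{N,0}^{\varphi}$, which is annihilated by $q_1^{\varphi}$), so the $1/N$ is the weight-shift error $\tau_{-1}n$ versus $n$, not a top-sector $k\sim N$ effect.

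The same device is what is missing in parts (2) and (3) where you say the third $q$ is obtained by ``symmetrizing and relabeling'': only two $q$'s are adjacent to $\wt\Phi$ in $\lsp\wt\Phi, q_1^{\varphi}q_2^{\varphi}v_{12}q_1^{\varphi}q_2^{\varphi}\Phi\rsp$, and the upgrade to $\no q_1^{\varphi}q_2^{\varphi}q_3^{\varphi}\wt\Phi\no$ is again the insertion $1-P_{N,0}^{\varphi}=\widehat m^{\varphi}\widehat\mu^{\varphi}$: the factor $\widehat m^{\varphi}=\frac1N\sum_kq_k^{\varphi}$, commuted to one side, supplies the third projector by symmetry, while $\widehat\mu^{\varphi}$ on the other side is absorbed into $\no v_{12}q_1^{\varphi}q_2^{\varphi}\widehat\mu^{\varphi}\,\cdot\,\no\le C\big(\no\nabla_1q_1^{\varphi}\,\cdot\,\no+\no q_1^{\varphi}\,\cdot\,\no\big)$; part (3) needs both this and the parity mechanism simultaneously. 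Without making these insertions explicit, the claimed powers of $N$ in all three estimates cannot be reached, so the proposal as written does not close.
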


\subsection{Proofs of Theorems \ref{THEOREM:MAIN:THEOREM:NORM:APPROXIMATION}, \ref{CONVERGENCE:OF:DENSITY} and \ref{THM:BOG:HIERARCHY} \label{PROOF:OF:THEOREMS}}

\begin{proof}[Proof of Theorem \ref{THEOREM:MAIN:THEOREM:NORM:APPROXIMATION}] Our goal is to estimate the time derivative of $\no \Psi_t - \widetilde \Psi_t \no^2$ in terms of itself and a small error, i.e., we are after a bound of the type
\begin{align*}
\partial_t \no \Psi_t - \widetilde \Psi_t \no^2 \le C \no \Psi_t - \widetilde \Psi_t \no^2 + \frac{C^{\varphi_t}}{N}.
\end{align*}
Then, Gr\"onwall's Lemma, together with $\Psi_{t=0} = \wt \Psi_{t=0} = \Psi_0$, implies \eqref{NORM:BOUND:MAIN:THEOREM} (note that by a standard density argument the following calculations hold for all $\Psi_0\in L^2_s(\mathbb{R}^{3N})$).

Using that $\wt H^t$ is self-adjoint together with \eqref{DIFFERENCE:HAMILTONIAN} and the symmetry of $\Psi_t$ and $\widetilde \Psi_t$ , we find
\begin{align}
\partial_t \no \Psi_t - \wt \Psi_t \no^2 
  = & 2 \im  \lsp \Psi_t- \wt \Psi_t,  H^t \Psi_t -  \wt H^t  \wt \Psi_t \rsp \nonumber \\
  = & 2 \im  \lsp \Psi_t- \wt \Psi_t,  \big( H^t - \wt H^t \big)  \Psi_t \rsp - 2\im \lsp \Psi_t-\wt \Psi_t,  \wt H^t \big( \Psi_t-\wt \Psi_t \big) \rsp \nonumber \\   
  = &  - 2  \im \lsp \wt \Psi_t, (H^t-\wt H^t)  \Psi_t \rsp  \nonumber\\
  = & -4 N \im \lsp \wt \Psi_t,  \big( \q{1}{t} \q{2}{t} (v_{12} - \bar v_1^{\varphi_t} ) \q{1}{t} \p{2}{t} + \text{h.c.} \big)  \Psi_t \rsp \label{GROENWALL:ESTIMATE:NORM:DIFFERENCE:1} \\
  & - 4 N \im \lsp \wt \Psi_t,  \q{1}{t} \q{2}{t}  v_{12} ,  \q{1}{t} \q{2}{t} \Psi_t \rsp  + 4N \lsp \wt \Psi_t  ,  \q{1}{t} \q{2}{t} (\bar v_1^{\varphi_t} - \mu^{\varphi_t}) \q{1}{t} \q{2}{t}  \Psi_t \rsp.  \label{GROENWALL:ESTIMATE:NORM:DIFFERENCE:2}
\end{align}
For the first line, we use \eqref{BOUNDS:ON:V:PHI} from which it follows that $\no v_{12} \p{2}{t} \no_{op}\le C$, and then apply \eqref{QQAQP:COMMUTATION}. Then with Lemmas \ref{A:PRIORI:ESTIMATE:PSI} (for $n\le 3$) and \ref{LEMMA:ODD:EVEN:PART:TIME:EVOLUTION}, together with Assumptions \hyperref[A.2]{A.2}, we obtain
\begin{align*}
\vert \eqref{GROENWALL:ESTIMATE:NORM:DIFFERENCE:1} \vert \le & C \no \Psi_t - \wt \Psi_t \no^2  +  C \no  \cfs{f}{t}{\text{odd}} \wt \Psi_t \no^2  + C^{\varphi_t} N^2 \Big( \no \q{1}{t} \q{2}{t} \q{3}{t} \Psi_t \no^2  + \no \q{1}{t}\q{2}{t}\q{3}{t} \wt \Psi_t \no^2  \Big) \\
\le & C \no \Psi_t - \wt \Psi_t \no^2 + \frac{C^{\varphi_t} }{N}.
\end{align*}
For the first term in the second line, we proceed by means of \eqref{QQVQQ:COMMUTATION}, and then apply Lemmas \ref{A:PRIORI:ESTIMATE:PSI} and \ref{LEMMA:ENERGY:ESTIMATE:PSI:TILDE},
\begin{align*}
N \big\vert \lsp \wt \Psi_t,  \q{1}{t} \q{2}{t}  v_{12} ,  \q{1}{t} \q{2}{t} \Psi_t \rsp \big\vert \le  \frac{C N^2 \no   \q{1}{t} \q{2}{t}  \q{2}{t} \wt \Psi \no^2}{2} + \frac{C \no \nabla_1 \q{1}{t} \Psi_t \no^2}{2} \le C \no \Psi_t - \wt \Psi_t \no^2 + \frac{C^{\varphi_t} }{N}.
\end{align*}
The last term is small since $\no \bar v^{\varphi_t} \no_{op}\le C$, and thus via Lemma \ref{A:PRIORI:ESTIMATE:PSI} together with Assumption \hyperref[A.2.2]{A.2.2},
\begin{align*}
N \big\vert \lsp \wt \Psi_t  ,  \q{1}{t} \q{2}{t} (\bar v_1^{\varphi_t} - \mu^{\varphi_t}) \q{1}{t} \q{2}{t}  \Psi_t \rsp \big\vert \le \frac{C^{\varphi_t}}{N}.
\end{align*}
\end{proof}

\begin{proof}[Proof of Theorem \ref{CONVERGENCE:OF:DENSITY}]
We start from the fact that
\begin{align*}
\text{Tr} \big\vert \gamma^{(1)}_{\Psi_t} - \vert \varphi_t \rangle \langle \varphi_t \vert   \big\vert= \underset{\no A_1 \no \le 1}{\text{sup}} \big\vert \text{Tr} \big( A_1 \big( \gamma^{(1)}_{\Psi_t} -  \vert \varphi_t \rangle \langle \varphi_t \vert   \big) \big) \big\vert
\end{align*}
where the supremum is taken over compact operators $A_1$ acting on $L^2(\mathbb R^{3})$ with norm smaller or equal to one. Inserting the identity $1=\p{1}{t} + \q{1}{t}$ around $A_1$, we find
\begin{align}
\big\vert \text{Tr} \big( A_1 \big( \gamma^{(1)}_{ \Psi_t} - \vert \varphi_t \rangle \langle \varphi_t \vert   \big\vert \big) \big) \big\vert =   & \big\vert \lsp \Psi_t, A_1  \Psi_t\rsp - \lsp \varphi_t , A_1 \varphi_t  \rsp \big\vert \nonumber\\
\le    & \big\vert \lsp  \Psi_t, \p{1}{t} A_1 \p{1}{t}  \Psi_t\rsp - \lsp \varphi_t, A_1 \varphi_t  \rsp \big\vert +  \big\vert \lsp  \Psi_t, \q{1}{t} A_1 \q{1}{t} \Psi_t\rsp \big\vert \nonumber \\
& + \big\vert \lsp \Psi_t, \p{1}{t} A_1 \q{1}{t}  \Psi_t\rsp \big\vert + \big\vert \lsp \Psi_t, \q{1}{t} A_1 \p{1}{t}  \Psi_t\rsp \big\vert.\label{TRACE:NORM:CROSS:TERM}
\end{align}
The first term is small,
\begin{align*}
\big\vert \lsp  \Psi_t, \p{1}{t} A_1 \p{1}{t}  \Psi_t\rsp - \lsp \varphi_t, A_1 \varphi_t \rsp \big\vert = \vert \langle \varphi_t A_1 \varphi_t \rangle \vert \ \big\vert \lsp  \Psi_t, \big( \p{1}{t} - 1 \big)  \Psi_t\rsp \big\vert \le\frac{e^{C^{\varphi_t}}}{N},
\end{align*}
and so is the second term in the first line, both by means of Lemma \ref{A:PRIORI:ESTIMATE:PSI}. For the second line, we can use \eqref{QAP:COMMUTATION} in order to find
\begin{align*}
\vert \eqref{TRACE:NORM:CROSS:TERM} \vert \le C \no \Psi_t - \wt \Psi_t \no^2 + C^{\varphi_t} \Big( \no \q{1}{t}  \Psi_t \no^2 +   \no  \cfs{f}{t}{ \text{odd} }\wt  \Psi_t \no^2 + \frac{1}{N}\Big) \le \frac{e^{ {C^{\varphi_t}}}}{N},
\end{align*}
where the last step follows from Lemmas \ref{A:PRIORI:ESTIMATE:PSI}, \ref{LEMMA:ODD:EVEN:PART:TIME:EVOLUTION} and Assumptions \hyperref[A.2]{A.2}.\\

The second result follows from the estimate in Lemma \ref{LEMMA:ENERGY:ESTIMATE:PSI:TILDE}, together with Lemma \ref{A:PRIORI:ESTIMATE:PSI}. We start again from 
\begin{align*}
& \text{Tr} \big\vert \sqrt{1-\Delta} \big( \gamma^{(1)}_{\Psi_t} - \vert \varphi_t \rangle \langle \varphi_t \vert \big) \sqrt{1-\Delta} \big\vert = \underset{\no A_1 \no \le 1}{\text{sup}} \big\vert \text{Tr} \big( A_1  \sqrt{1-\Delta}  \big( \gamma^{(1)}_{\Psi_t} -  \vert \varphi_t \rangle \langle \varphi_t \vert   \big)  \sqrt{1-\Delta}  \big) \big\vert,
\end{align*}
where the supremum is taken over all compact operators with norm less or equal than one, and compute
\begin{align}
& \text{Tr} \big( A_1  \sqrt{1-\Delta}  \big( \gamma^{(1)}_{\Psi_t} -  \vert \varphi_t \rangle \langle \varphi_t \vert   \big)  \sqrt{1-\Delta}  \big) \nonumber\\
&= \lsp \Psi_t, \p{1}{t} \sqrt{1-\Delta_1}  A_1 \sqrt{1-\Delta_1} \p{1}{t} \Psi_t\rsp - \langle \varphi_t, \sqrt{1-\Delta}  A_1 \sqrt{1-\Delta}  \varphi_t\rangle \label{LINE:1:SOBOLEV:TRACE} \\
& + \lsp \Psi_t, \q{1}{t} \sqrt{1-\Delta_1}  A_1 \sqrt{1-\Delta_1} \p{1}{t} \Psi_t\rsp + \lsp \Psi_t, \p{1}{t} \sqrt{1-\Delta_1}  A_1 \sqrt{1-\Delta_1} \q{1}{t} \Psi_t\rsp \nonumber\\
& + \lsp \Psi_t, \q{1}{t} \sqrt{1-\Delta_1}  A_1 \sqrt{1-\Delta_1} \q{1}{t} \Psi_t\rsp.\nonumber
\end{align}
The first line,
\begin{align*}
\vert \eqref{LINE:1:SOBOLEV:TRACE} \vert = \langle \varphi_t, \sqrt{1-\Delta}  A_1 \sqrt{1-\Delta}  \varphi_t\rangle \no \q{1}{t} \Psi_t \no^2 \le \frac{e^{C^{\varphi_t}}}{N}
\end{align*}
since $\no \sqrt {1- \Delta_1} \p{1}{t} \no_{op}^2 \le \langle \varphi_t, (1-\Delta_1) \varphi_t\rangle$. The first term in the second line,
\begin{align*}
\big\vert \lsp \Psi_t, \q{1}{t} \sqrt{1-\Delta_1}  A_1 \sqrt{1-\Delta_1} \p{1}{t} \Psi_t\rsp \big\vert \le C \Big( \no \q{1}{t} \Psi_t \no + \no \nabla_1\q{1}{t} \Psi_t \no  \Big) \no \sqrt {1- \Delta_1} \p{1}{t} \no_{op} \le \frac{e^{C^{\varphi_t}}}{\sqrt N},
\end{align*}
since $\no \sqrt{1-\Delta_1} \q{1}{t} \Psi_t \no^2 \le \no \q{1}{t} \Psi_t \no^2 + \no \nabla_1 \q{1}{t} \Psi_t \no^2$. The second term in the second line is estimated in exactly the same way. For the term in the third line, we find
\begin{align*}
\lsp \Psi_t, \q{1}{t} \sqrt{1-\Delta_1}  A_1 \sqrt{1-\Delta_1} \q{1}{t} \Psi_t\rsp \le C \no \sqrt{1-\Delta_1} \q{1}{t} \Psi_t \no^2 \le \frac{e^{C^{\varphi_t}}}{N},
\end{align*}
which proves the estimate.
\end{proof}

\begin{proof}[Proof of Theorem \ref{THM:BOG:HIERARCHY}] Using the triangle inequality and Theorem \ref{THEOREM:MAIN:THEOREM:NORM:APPROXIMATION}, we know that 
\begin{align*}
\lno \Psi_t - \sum_{k=0}^N \varphi_t^{\otimes N-k} \otimes_s \chi^{(k)}_t \rno \le \frac{e^{C^{\varphi_t}}}{\sqrt{N}} + \no \widetilde \Psi_t- \sum_{k=0}^N \varphi_t^{\otimes N-k} \otimes_s \chi^{(k)}_t \no,
\end{align*}
where $\widetilde \Psi_t = \widetilde U_t\Psi_0$. Then, with Lemma \ref{LEMMA:CHI:TILDE:CHI:BOG} and $\no \varphi_t \no=1$, it follows that
\begin{align*}
\lno \widetilde \Psi_t- \sum_{k=0}^N \varphi_t^{\otimes N-k} \otimes_s \chi^{(k)}_t \lno^2 =   & \sum_{k=0}^N \no \widetilde\chi^{(k)}_t -  \chi^{(k)}_t \no_{\mathcal H^{t,k}_s}^2  \le \frac{e^{C^{\varphi_t}}}{N}.
\end{align*}
\end{proof}

\subsection{Preliminaries for proofs of the remaining lemmas}

We summarize some necessary definitions and preliminary assertions that are needed to prove the remaining lemmas. Readers familiar with the method that was introduced in \cite{pickl:2011method} can skip this section.

\begin{definition}\label{P:FAMILY:OPERATORS}
Define the time-dependent projectors $ ( P_{N,k}^{\varphi_t} ) _{k=0}^N $, $P_{N,k}^{\varphi_t}: L^2(\mathbb R^{3N}) \to L^2(\mathbb R^{3N})$ by 
\begin{align}\label{SYMMETRIC:TENSOR:PRODUCT}
P_{N,k}^{\varphi_t} = \Big( \prod_{i=1}^k\q{i}{t} \prod_{i=k+1}^N \p{i}{t} \Big)_{sym} =\sum_{\{ a\in \{0,1\}^N:\sum_{i}a_i=N \}} \prod_{i=1}^N (\q{i}{t} )^{a_i} (\p{i}{t})^{1-a_i}.
\end{align}
\end{definition}
\noindent The following properties hold:
\begin{enumerate}
\item $P_{N,k}^{\varphi_t}$ is an orthogonal projector,
\item $P_{N,k}^{\varphi_t} P_{N,l}^{\varphi_t} = \delta_{kl} P_{N,k}^{\varphi_t},$
\item $1 = \sum_{k= 0 }^N P_{N,k}^{\varphi_t},$
\item $[\p{l}{t}, P_{N,k}^{\varphi_t}]=0= [\q{l}{t} ,P_{N,k}^{\varphi_t} ]$.
\end{enumerate}

\begin{definition}
We call any function $f:\{0,1,...,N\} \to \mathbb R_0^+$ a weight function (or  simply weight) and define the linear combination of weighted projectors w.r.t.\ the weight $f$ by $\widehat f^{\varphi_t} \equiv \cf{f}{t}$,
\begin{align}
\cf{f}{t}: L^2(\mathbb R^{3N}) \to L^2(\mathbb R^{3N}), \hspace{1cm} \cf{f}{t} \Psi =\sum_{k=0}^N f(k)  P_{N,k}^{\varphi_t}\Psi. \label{COUNTING:OPERATOR}
\end{align}
For any integer $\vert d\vert \le N$, we define the shift operator $\tau_{d}$ by
\begin{align*}
 &\tau_d f :\{0,1,...,N\}\to \mathbb R_0^+, \hspace{1cm}  ( \tau_d f ) (k) = \begin{cases} 0 \ \ \ &\text{for} \ k+d < 0, \\
f(k+d)\ \  &\text{for} \ \ 0\le k+d\le N ,\\
0 \ \ \  &\text{for} \ N < k+d . \end{cases} 
\end{align*}
\end{definition}
\noindent It is straightforward to see that 
\begin{enumerate}
\item $[
\cf{f}{t} , \p{k}{t} ]=0 =[\cf{f}{t} , \q{k}{t} ]  , \hspace{0.5cm}[\cf{f}{t} ,P_{N,k}^{\varphi_t}]=0$,
\item $\cf{g}{t} \cf{f}{t} = \cf{f}{t} \cf{g}{t} = \cf{(fg)}{t} \label{PROPERTY:ALGEBRA:STRUCTURE}$ for any two weights $f,g$.
\end{enumerate}
We shall make frequent use of the weight functions
\begin{align}\label{IMPORTANT:WEIGHTS}
& m(k) = \frac{k}{N},\hspace{1cm}   n(k) = \sqrt{\frac{k}{N}}. 
\end{align}
They satisfy two important properties, namely
\begin{enumerate}
\item 
\begin{align}
\frac{1}{N}\sum_{k=1}^N \q{k}{t} =   \sum_{k=1}^N \frac{k}{N} P^{\varphi_t}_{N,k} = \cf{m}{t}, \label{PROPERTY:SUM:Q:EQUAL:M}
\end{align}
\item $$ \cf{\tau_d m}{t} = (\cf{\tau_d n}{t} )(\cf{\tau_d n}{t} ). $$
\end{enumerate}
We further introduce the weight functions
\begin{align}
\mu(k) = \begin{cases} 0  \ &\text{for}\  k=0 , \\
 \frac{N}{k}\  &\text{for} \ 1\le k \le N, \\
\end{cases} \hspace{1cm} \nu(k) = \begin{cases} 0  \ & \text{for}\  k=0  , \\
 \sqrt{\frac{N}{k}}\  & \text{for} \ 1\le k \le N, \\
\end{cases}
\end{align}
which satisfy 
\begin{enumerate}
\item [3.] 
\begin{align}
\cfs{m}{t}{} \cfs{\mu}{t}{} = 1 - P_{N,0}^{\varphi_t}, \hspace{1cm} \cfs{n}{t}{} \cfs{ \nu}{t}{} = 1  - P_{N,0}^{\varphi_t}, \label{PROPERTY:N:INVERSE:N}  
\end{align}
\item[4.] $$  \cf{\tau_d\mu}{t} = (\cf{\tau_d\nu}{t})(\cf{\tau_d\nu}{t}).$$
\end{enumerate}
Let us also note that the above definition is in agreement with \eqref{DEF:F:ODD} and \eqref{DEF:F:EVEN} if we set
$$ f_{\text{(even)}}(k) = \begin{cases} 1 \ &\text{for} \ k \ \text{even},\\
0 \ &\text{for} \ k \ \text{odd}, \end{cases} \hspace{1cm} f_{\text{(odd)}}(k) = 1-f_{\text{(even)}}(k).$$
It follows immediately that  $\cfs{f}{t}{(\text{odd})}  + \cfs{f}{t}{(\text{even})} =1$ and $\cfs{f}{t}{(\text{odd})}  \cfs{f}{t}{(\text{even})} =  \cfs{f}{t}{(\text{even})}  \cfs{f}{t}{(\text{odd})} = 0$.
\begin{lemma}\label{PROPOSITION:Q:K:M:N}
Let $\Psi, \wt \Psi \in L^2_s(\mathbb R^{3N})$ and $n+k\le N$. Then there exists a positive constant $C_{n,k}$ such that
\begin{align*}
\lsp \Psi , \Big(\prod_{i=1}^{k+n} \q{i}{t}\Big) \wt \Psi \rsp = & C_{n,k} \lsp \Psi  , (\cf{m}{t})^n \Big(\prod_{i=1}^k \q{i}{t}\Big) \wt \Psi_{\cdot} \rsp.
\end{align*}
\end{lemma}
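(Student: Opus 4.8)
The plan is to insert the resolution of the identity $1=\sum_{l=0}^{N}P_{N,l}^{\varphi_t}$ from Definition~\ref{P:FAMILY:OPERATORS} and thereby reduce the claim to a sector-by-sector combinatorial identity. The operators $\prod_{i=1}^{k+n}\q{i}{t}$, $\prod_{i=1}^{k}\q{i}{t}$ and $\cf{m}{t}$ all commute with every $P_{N,l}^{\varphi_t}$, and the $P_{N,l}^{\varphi_t}$ are orthogonal projectors summing to $1$; hence, writing $1=\sum_{l}P_{N,l}^{\varphi_t}$ next to $\Psi$ and using $P_{N,l}^{\varphi_t}P_{N,l'}^{\varphi_t}=\delta_{ll'}P_{N,l}^{\varphi_t}$, the left-hand side becomes $\sum_{l}\lsp\Psi,P_{N,l}^{\varphi_t}\big(\prod_{i=1}^{k+n}\q{i}{t}\big)\wt\Psi\rsp$, while the right-hand side becomes $C_{n,k}\sum_{l}(l/N)^{n}\lsp\Psi,P_{N,l}^{\varphi_t}\big(\prod_{i=1}^{k}\q{i}{t}\big)\wt\Psi\rsp$ upon using $\cf{m}{t}P_{N,l}^{\varphi_t}=(l/N)P_{N,l}^{\varphi_t}$. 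So it suffices to understand $\lsp\Psi,P_{N,l}^{\varphi_t}\big(\prod_{i=1}^{j}\q{i}{t}\big)\wt\Psi\rsp$ for fixed $l$ and $j\in\{k,k+n\}$.

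This per-sector quantity I would compute by brute force. Writing $P_{N,l}^{\varphi_t}$ as the sum, over all index sets $A\subseteq\{1,\dots,N\}$ with $|A|=l$, of $\prod_{i\in A}\q{i}{t}\prod_{i\notin A}\p{i}{t}$, and using $(\q{i}{t})^{2}=\q{i}{t}$ and $\q{i}{t}\p{i}{t}=0$, the factor $\prod_{i=1}^{j}\q{i}{t}$ annihilates precisely those summands whose set $A$ misses some element of $\{1,\dots,j\}$ and leaves the others unchanged, so $\big(\prod_{i=1}^{j}\q{i}{t}\big)P_{N,l}^{\varphi_t}=\sum_{|A|=l,\,A\supseteq\{1,\dots,j\}}\prod_{i\in A}\q{i}{t}\prod_{i\notin A}\p{i}{t}$. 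By permutation symmetry of $\Psi$ and $\wt\Psi$, the number $\lsp\Psi,\prod_{i\in A}\q{i}{t}\prod_{i\notin A}\p{i}{t}\wt\Psi\rsp$ does not depend on $A$ as long as $|A|=l$; comparing the $\binom{N-j}{l-j}$ surviving terms with the $\binom{N}{l}$ terms of $P_{N,l}^{\varphi_t}$ itself then gives
\begin{align*}
\lsp\Psi,P_{N,l}^{\varphi_t}\Big(\prod_{i=1}^{j}\q{i}{t}\Big)\wt\Psi\rsp=\frac{\binom{N-j}{l-j}}{\binom{N}{l}}\,\lsp\Psi,P_{N,l}^{\varphi_t}\wt\Psi\rsp=\frac{l(l-1)\cdots(l-j+1)}{N(N-1)\cdots(N-j+1)}\,\lsp\Psi,P_{N,l}^{\varphi_t}\wt\Psi\rsp ,
\end{align*}
with the binomial coefficient read as $0$ when $l<j$.

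It then remains to put $j=k$ and $j=k+n$ into this formula and match coefficients, which is where I expect the only real work to be. One has to exhibit a constant $C_{n,k}$ (independent of $l$, and for the applications of this lemma one wants it also bounded in $N$) for which $\binom{N-k-n}{l-k-n}/\binom{N}{l}$ equals $C_{n,k}\,(l/N)^{n}\binom{N-k}{l-k}/\binom{N}{l}$ for every $l$; equivalently, the falling factorial $(l-k)(l-k-1)\cdots(l-k-n+1)$ must relate to $l^{n}$ by an $l$-independent factor. The delicate points here are the behaviour at the boundary sectors $l<k+n$, where the $j=k+n$ prefactor vanishes, and keeping track of the $N$-dependence of $C_{n,k}$ (one uses $(l-k)(l-k-1)\cdots(l-k-n+1)\le l^{n}$ and, once $N\ge 2(n+k)$, $(N-k)(N-k-1)\cdots(N-k-n+1)\ge (N/2)^{n}$). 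Summing the matched sector identity over $l$ and undoing the first step then yields the assertion.
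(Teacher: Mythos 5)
Your computation is correct and, modulo packaging, it is the argument the paper intends: the paper's entire proof of this lemma is the one-sentence remark that it ``follows easily by using symmetry of $\Psi$ and $\wt\Psi$ together with \eqref{PROPERTY:SUM:Q:EQUAL:M}'', i.e.\ by symmetrizing the extra $n$ projectors into copies of $\frac{1}{N}\sum_{j}\q{j}{t}=\cf{m}{t}$, and your sector decomposition together with the standard identity $\lsp\Psi,P_{N,l}^{\varphi_t}\big(\prod_{i=1}^{j}\q{i}{t}\big)\wt\Psi\rsp=\frac{l(l-1)\cdots(l-j+1)}{N(N-1)\cdots(N-j+1)}\lsp\Psi,P_{N,l}^{\varphi_t}\wt\Psi\rsp$ just makes the same combinatorics explicit. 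The obstruction you hit at the matching step is genuine and you are right to dwell on it: since the falling factorial $(l-k)(l-k-1)\cdots(l-k-n+1)$ is not a constant multiple of $l^{n}$ unless $n=1$ and $k=0$, no $l$-independent constant $C_{n,k}$ can turn the displayed relation into an identity for general $\Psi,\wt\Psi$ (take $N=2$, $k=0$, $n=2$: the left-hand side only sees the $l=2$ sector, while $(\cf{m}{t})^{2}$ also weights $l=1$). The statement should be read, as it is in fact used in the proof of Lemma \ref{A:PRIORI:ESTIMATE:PSI}, as the inequality ``$\le$'' with $C_{n,k}$ uniformly bounded in $N$, which is exactly what your bounds $(l-k)\cdots(l-k-n+1)\le l^{n}$ and $N(N-1)\cdots(N-k-n+1)\ge (N/2)^{k+n}$ for $N\ge 2(k+n)$ deliver. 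The one step you should make explicit: summing the per-sector inequalities over $l$ requires the weights $\lsp\Psi,P_{N,l}^{\varphi_t}\wt\Psi\rsp$ to be nonnegative, which holds in the only case the lemma is applied, namely $\Psi=\wt\Psi$, where they equal $\no P_{N,l}^{\varphi_t}\Psi\no^{2}$; for $\Psi\neq\wt\Psi$ one must instead take absolute values and apply Cauchy--Schwarz sector by sector. With that addendum your argument is complete, and rather more careful than the paper's.
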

\noindent The proof follows easily by using symmetry of $\Psi$ and $\wt \Psi$ together with \eqref{PROPERTY:SUM:Q:EQUAL:M}.
\begin{lemma}
\label{LEMMA:PULL:THROUGH:FORMULA} Let $P_{12}^{(0)} = \p{1}{t}\p{2}{t}$, $P_{12}^{(1)} = \p{1}{t} \q{2}{t}+\q{1}{t} \p{2}{t}$ and $P_{12}^{(2)} = \q{1}{t} \q{2}{t}$, and let $f$ be an arbitrary weight function, and $A_{12}$ any operator on  $L^2(\mathbb R^{3}) \otimes L^2(\mathbb R^{3})$. Then the following commutation rule (pull through formula) holds for $0\le i,j\le 2$:
\begin{align}
 P_{12}^{(i)} A_{12} P_{12}^{(j)}  \cf{f}{t} =  \cf{\tau_{j-i}f}{t}   P_{12}^{(i)} A_{12} P_{12}^{(j)}  \label{PULL:THROUGH:FORMULA}.
\end{align}
\end{lemma}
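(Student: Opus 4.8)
The plan is to reduce the identity to a decomposition of the occupation projectors $P_{N,k}^{\varphi_t}$ according to how many of the first two particles lie outside the condensate. First I would establish
\begin{align*}
P_{N,k}^{\varphi_t} = \sum_{l=0}^{2} P_{12}^{(l)}\, R_{k-l},
\end{align*}
where $R_m$ denotes the analogue of $P_{N,m}^{\varphi_t}$ built only from the projectors $p_i^t,q_i^t$ with $3\le i\le N$ (and, by convention, $R_m=0$ for $m\notin\{0,\ldots,N-2\}$, as well as $P_{N,k}^{\varphi_t}=0$ for $k\notin\{0,\ldots,N\}$). This follows directly from the definition \eqref{SYMMETRIC:TENSOR:PRODUCT}: splitting the multi-index $a\in\{0,1\}^N$ into $(a_1,a_2)$ and $(a_3,\ldots,a_N)$, and summing the two-particle factor over $a_1+a_2=l$, reproduces exactly $P_{12}^{(0)}=p_1^tp_2^t$, $P_{12}^{(1)}=p_1^tq_2^t+q_1^tp_2^t$, $P_{12}^{(2)}=q_1^tq_2^t$ for $l=0,1,2$.

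Next I would exploit the orthogonality $P_{12}^{(l)}P_{12}^{(l')}=\delta_{ll'}P_{12}^{(l)}$ together with the fact that each $R_m$ acts only on the variables $x_3,\ldots,x_N$ and hence commutes with $P_{12}^{(i)}$, with $P_{12}^{(j)}$, and with $A_{12}$. From the decomposition one reads off $P_{12}^{(j)}P_{N,k}^{\varphi_t}=P_{12}^{(j)}R_{k-j}$ and $P_{N,k'}^{\varphi_t}P_{12}^{(i)}=P_{12}^{(i)}R_{k'-i}$, so inserting these and commuting $R_{k-j}$ past $A_{12}$ gives
\begin{align*}
P_{12}^{(i)} A_{12} P_{12}^{(j)} P_{N,k}^{\varphi_t} = R_{k-j}\, P_{12}^{(i)} A_{12} P_{12}^{(j)} = P_{N,k+i-j}^{\varphi_t}\, P_{12}^{(i)} A_{12} P_{12}^{(j)}.
\end{align*}
The boundary cases are consistent: if $k+i-j\notin\{0,\ldots,N\}$ the right-hand side vanishes, and the left-hand side vanishes as well because the corresponding $R_{k-j}$ is zero (equivalently, $q_1^tp_1^t=0$ already forces e.g.\ $P_{12}^{(2)}P_{N,0}^{\varphi_t}=0$).

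Finally I would multiply by $f(k)$ and sum over $k$ from $0$ to $N$. Writing $\cf{f}{t}=\sum_{k}f(k)P_{N,k}^{\varphi_t}$ and reindexing $m=k+i-j$ converts the weight into $f(m+(j-i))=(\tau_{j-i}f)(m)$, the values out of range being precisely those annihilated by the convention $P_{N,m}^{\varphi_t}=0$; collecting terms gives $\cf{\tau_{j-i}f}{t}\, P_{12}^{(i)}A_{12}P_{12}^{(j)}$, which is \eqref{PULL:THROUGH:FORMULA}. I expect the only delicate point to be this bookkeeping at the endpoints — checking that the truncation built into the shift operator $\tau_d$ matches the vanishing of $P_{N,k}^{\varphi_t}$ outside $\{0,\ldots,N\}$ — while everything else is a short manipulation with commuting projectors.
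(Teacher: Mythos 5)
Your proof is correct: the decomposition $P_{N,k}^{\varphi_t}=\sum_{l=0}^{2}P_{12}^{(l)}R_{k-l}$, the orthogonality of the $P_{12}^{(l)}$, and the commutation of $R_m$ with operators in the variables $x_1,x_2$ give exactly the shift $k\mapsto k+i-j$, and your endpoint bookkeeping matches the truncation built into $\tau_d$. The paper itself omits the proof and refers to \cite[Lemma~3.10]{picklknowles:2010}, where essentially this same argument is carried out, so your route coincides with the intended one.
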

\noindent The proof is straightforward, see, e.g., \cite[Lemma~3.10]{picklknowles:2010}.
\begin{definition}
We define the so called counting functional w.r.t.\ $\varphi_t$ and with weight $f$ by
\begin{align*}
\lsp \cdot, \cf{f}{t} \cdot \rsp : \ L^2({\mathbb R^{3N}}) \to \mathbb R_0^+, \hspace{1cm}
\Psi \mapsto   \lsp \Psi , \cf{f}{t} \Psi \rsp.
\end{align*}
For any $\Psi \in L^2(\mathbb R^{3N})$, the mapping $t\to \lsp \Psi, \cf{f}{t} \Psi \rsp$ is time-differentiable with derivative
\begin{align}\label{TIME:DERIVATIVE:CF}
 \partial_t \lsp \Psi, \cf{f}{t} \Psi \rsp = - i \lsp \Psi, \Big[ \sum_{i=1}^N h^{\varphi_t}_i, \cf{f}{t} \Big] \Psi \rsp.
\end{align} 
\end{definition}

\begin{lemma}\label{TIME:DERIVATIVE:ALPHA} Let
\begin{align}
& Z^{\varphi_t}(x_1-x_2) = \frac{1}{2} \Big( v(x_1-x_2) - 
\big(v\ast \vert \varphi_t\vert^2\big)(x_1) - \big(v\ast \vert \varphi_t\vert^2\big)(x_2) \Big),
\end{align}
and, for any $\Psi\in L^2_s(\mathbb R^{3N})$, let
\end{lemma}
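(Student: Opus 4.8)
The plan is to differentiate the counting functional $t\mapsto\langle\Psi_t,\widehat f^{\varphi_t}\Psi_t\rangle$ directly along the flow and to reorganise the resulting commutator by means of the algebraic relation between $H^t$ and the Hartree one-body operators together with the pull-through formula, Lemma~\ref{LEMMA:PULL:THROUGH:FORMULA}. I will carry this out for $\Psi_t=U_t\Psi_0$; for $\widetilde\Psi_t=\widetilde U_t\Psi_0$ every step is identical with $H^t$ replaced by $\widetilde H^t$. The first step is the differentiation. Since $\widehat f^{\varphi_t}$ is bounded and depends on $t$ only through $\varphi_t$, and since $\partial_t p_i^t=-i[h_i^{\varphi_t},p_i^t]$ follows from the Hartree equation \eqref{TIME:DEP:HARTREE:EQUATION}, the Leibniz rule applied to the definition of $P_{N,k}^{\varphi_t}$ gives $\partial_t\widehat f^{\varphi_t}=-i[\sum_i h_i^{\varphi_t},\widehat f^{\varphi_t}]$, i.e.\ \eqref{TIME:DERIVATIVE:CF}. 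Combining this with $i\partial_t\Psi_t=H^t\Psi_t$, the product rule and self-adjointness of $H^t$ on $H^2_s(\mathbb R^{3N})$, one arrives at
\[
\partial_t\langle\Psi_t,\widehat f^{\varphi_t}\Psi_t\rangle=i\big\langle\Psi_t,\big[H^t-\sum_{i=1}^N h_i^{\varphi_t},\ \widehat f^{\varphi_t}\big]\Psi_t\big\rangle ,
\]
the manipulations being legitimate thanks to the regularity $\Psi_t\in H^2_s(\mathbb R^{3N})$ and $\varphi_t\in C^1([0,\infty),L^2)\cap C([0,\infty),H^2)$.

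Next I would exploit the algebra. Using $h_i^{\varphi_t}=h_i^t+\bar v_i^{\varphi_t}-\mu^{\varphi_t}$ (cf.\ \eqref{HARTREE:HAMILTONIAN}) and $\lambda_N(N-1)=1$, a short symmetrisation of the sums yields the identity
\[
H^t-\sum_{i=1}^N h_i^{\varphi_t}=\lambda_N\!\!\sum_{1\le i\neq j\le N}\!\!\big(Z^{\varphi_t}(x_i-x_j)+\mu^{\varphi_t}\big) .
\]
The scalar $\mu^{\varphi_t}$ commutes with $\widehat f^{\varphi_t}$ and drops out of the commutator, and by permutation symmetry of $\Psi_t$ and of $\widehat f^{\varphi_t}$ all pairs contribute equally, so that
\[
\partial_t\langle\Psi_t,\widehat f^{\varphi_t}\Psi_t\rangle=iN\big\langle\Psi_t,\big[Z^{\varphi_t}(x_1-x_2),\ \widehat f^{\varphi_t}\big]\Psi_t\big\rangle .
\]
I would then insert $1=P_{12}^{(0)}+P_{12}^{(1)}+P_{12}^{(2)}$ on both sides of $Z^{\varphi_t}(x_1-x_2)$ and move $\widehat f^{\varphi_t}$ past $P_{12}^{(i)}Z^{\varphi_t}(x_1-x_2)P_{12}^{(j)}$ using Lemma~\ref{LEMMA:PULL:THROUGH:FORMULA}; since $\widehat f^{\varphi_t}$ commutes with each $P_{12}^{(i)}$ this produces
\[
\big[Z^{\varphi_t}(x_1-x_2),\widehat f^{\varphi_t}\big]=\sum_{0\le i\neq j\le 2}\big(\widehat{\tau_{j-i} f}^{\varphi_t}-\widehat f^{\varphi_t}\big)\,P_{12}^{(i)}Z^{\varphi_t}(x_1-x_2)P_{12}^{(j)} ,
\]
the diagonal $i=j$ terms vanishing because $\tau_0 f=f$.

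It then remains to collect the six surviving terms into the claimed form. Splitting $P_{12}^{(1)}=p_1^tq_2^t+q_1^tp_2^t$, using that $Z^{\varphi_t}$ is real and even and that $\Psi_t$ is symmetric, and pairing each $(i,j)$ term with its adjoint $(j,i)$ term, the right-hand side is rearranged into a combination of terms of the three types $\langle\Psi_t,(\widehat{\tau_{-1} f}^{\varphi_t}-\widehat f^{\varphi_t})\,q_1^tp_2^t\,Z^{\varphi_t}(x_1-x_2)\,p_1^tp_2^t\,\Psi_t\rangle$ (one particle leaving the condensate), $\langle\Psi_t,(\widehat{\tau_{-2} f}^{\varphi_t}-\widehat f^{\varphi_t})\,q_1^tq_2^t\,Z^{\varphi_t}(x_1-x_2)\,p_1^tp_2^t\,\Psi_t\rangle$ (two particles leaving) and $\langle\Psi_t,(\widehat{\tau_{-1} f}^{\varphi_t}-\widehat f^{\varphi_t})\,q_1^tq_2^t\,Z^{\varphi_t}(x_1-x_2)\,q_1^tp_2^t\,\Psi_t\rangle$, together with their hermitian conjugates, which is the asserted identity. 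For $\widetilde\Psi_t$ one repeats the computation with $\widetilde H^t-\sum_i h_i^{\varphi_t}$ (cf.\ \eqref{DIFFERENCE:HAMILTONIAN}) in place of $\lambda_N\sum_{i\neq j}Z^{\varphi_t}(x_i-x_j)$; since the summands of $\widetilde H^t$ already carry two $p^t$'s and two $q^t$'s, only the second type of term survives.

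The main obstacle here is organisational rather than analytic: the proof is in essence a direct computation, and the two points requiring care are the rigorous justification of the differentiation in the first step --- which is where self-adjointness of $H^t$ on $H^2_s$ and the stated $H^2$-regularity of $\Psi_t$ and $\varphi_t$ enter --- and the bookkeeping of the truncated shifts $\tau_d f$ near the endpoints $k=0$ and $k=N$. The latter is in fact harmless for the identity itself, since these boundary values are built into the definition of $\tau_d$ and any transition $P_{12}^{(j)}\to P_{12}^{(i)}$ that would push the number of excitations outside $\{0,\dots,N\}$ annihilates $\Psi_t$; no nontrivial estimate is involved, the lemma being the exact identity from which the subsequent Gr\"onwall-type bounds (such as those in Lemma~\ref{A:PRIORI:ESTIMATE:PSI}) are derived.
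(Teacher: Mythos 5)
Your proposal is correct and follows essentially the same route as the paper: differentiate the counting functional using \eqref{TIME:DERIVATIVE:CF} and the Schr\"odinger equation, reduce to the single pair $(x_1,x_2)$ by symmetry, insert $1=(\p{1}{t}+\q{1}{t})(\p{2}{t}+\q{2}{t})$ on both sides of $Z^{\varphi_t}(x_1-x_2)$, and apply the pull-through formula of Lemma~\ref{LEMMA:PULL:THROUGH:FORMULA}, with the diagonal terms cancelling and only the $(\text{II})$-type term surviving for $\wt\Psi_t$. The only cosmetic difference is that you make the identity $H^t-\sum_i h_i^{t,\varphi_t}=\lambda_N\sum_{i\neq j}\big(Z^{\varphi_t}(x_i-x_j)+\mu^{\varphi_t}\big)$ explicit, which the paper leaves implicit in its first display.
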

\vspace{-0.6cm}
\begin{align*}
\text{(I)}_{f,\Psi} &=   4 N \im \lsp {\Psi}, \p{1}{t}  \p{2}{t} Z^{\varphi_t}(x_1-x_2) \q{1}{t}  \p{2}{t}  \big( \cf{f}{t} - \cf{\tau_{-1}f}{t} \big)  {\Psi} \rsp , \nonumber\\
\text{(II)}_{f,\Psi} & =  2 N \im  \lsp \Psi, \p{1}{t}  \p{2}{t}  Z^{\varphi_t}(x_1-x_2)  \q{1}{t}  \q{2}{t} \big( \cf{f}{t} - \cf{\tau_{-2}f}{t} \big)  {\Psi} \rsp   , \nonumber\\
\text{(III)}_{f,\Psi} & = 4 N \im  \lsp {\Psi}, \q{1}{t}  \p{2}{t}  Z^{\varphi_t}(x_1-x_2) \q{1}{t} \q{2}{t} \big( \cf{f}{t} - \cf{\tau_{-1} f}{t}{} \big)  {\Psi} \rsp, \nonumber
\end{align*}
\textit{with $f$ being any weight function. Let further $\Psi_0\in L^2(\mathbb R^{3N})$. Then,}
\begin{enumerate}
\item[\textit{1.}] \textit{for $\Psi_t=U_t \Psi_0$, we have}
\end{enumerate}
\begin{align}
\partial_t \lsp  \Psi_t , \cf{f}{t} \Psi_t \rsp  = \text{(I)}_{f,\Psi_t} +\text{(II)}_{f,\Psi_t} +\text{(III)}_{f,\Psi_t}.
\end{align}
\begin{enumerate}
\item[\textit{2.}] \textit{for $\widetilde \Psi_t= \widetilde U_t \Psi_0$, we have}
\end{enumerate}
\begin{align}
\partial_t \lsp  \widetilde \Psi_t , \cf{f}{t} \widetilde \Psi_t \rsp  =  ( \text{II})_{f,\wt \Psi_t}.
\end{align}
\begin{lemma} \label{UNIFORM:ROEMISCHE:1-3} Let $m(k)= \frac{k}{N}$ as in \eqref{IMPORTANT:WEIGHTS}. It holds that for any $\Psi \in L^2_s(\mathbb R^{3N})$
\end{lemma}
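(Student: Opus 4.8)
The plan is to estimate the three terms $(\text{I})_{m,\Psi}$, $(\text{II})_{m,\Psi}$, $(\text{III})_{m,\Psi}$ separately, arriving at a bound of the form $|(\text{I})_{m,\Psi}|+|(\text{II})_{m,\Psi}|+|(\text{III})_{m,\Psi}|\le C^{\varphi_t}\big(\lsp\Psi,\cf{m}{t}\Psi\rsp+N^{-1}\no\Psi\no^2\big)$; fed into Lemma~\ref{TIME:DERIVATIVE:ALPHA} and Gr\"onwall this gives the $n=1$ case of Lemma~\ref{A:PRIORI:ESTIMATE:PSI}. The two structural facts that make this work are: $(\text{I})$ vanishes identically, and after commuting the weight differences $m-\tau_{-2}m$ and $m-\tau_{-1}m$ through the two-body operators they become genuinely of order $1/N$ with no residual $O(1)$ boundary pieces.

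For $(\text{I})$: since $\p{2}{t}v(x_1-x_2)\p{2}{t}=(v\ast\vert\varphi_t\vert^2)(x_1)\,\p{2}{t}$, the $v_{12}$ and the $(v\ast\vert\varphi_t\vert^2)(x_1)$ parts of $Z^{\varphi_t}$ cancel in $\p{1}{t}\p{2}{t}Z^{\varphi_t}(x_1-x_2)\q{1}{t}\p{2}{t}$, while the $(v\ast\vert\varphi_t\vert^2)(x_2)$ part dies against $\p{1}{t}\q{1}{t}=0$; hence the operator, and with it $(\text{I})_{m,\Psi}$, is zero. For $(\text{II})$ and $(\text{III})$ I would first use $\p{1}{t}\q{1}{t}=0$ and $\p{2}{t}\q{2}{t}=0$ to discard the corresponding counterterms of $Z^{\varphi_t}$, so that the surviving two-body operators are $\p{1}{t}\p{2}{t}v_{12}\q{1}{t}\q{2}{t}$ and (for $(\text{III})$, after symmetrizing $\q{1}{t}\p{2}{t}$ via the symmetry of $\Psi$ and of $Z^{\varphi_t}$) $P_{12}^{(1)}v_{12}\q{1}{t}\q{2}{t}$ plus bounded multiplication operators $(v\ast\vert\varphi_t\vert^2)(x_i)$. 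In each of these $v_{12}$ sits next to a projector $\p{i}{t}$, so its operator norm is controlled via $\no v_{12}\p{2}{t}\no_{op}^2=\no v^2\ast\vert\varphi_t\vert^2\no_{\infty}\le C^{\varphi_t}$ and $\no v\ast\vert\varphi_t\vert^2\no_{\infty}\le C^{\varphi_t}$ (Assumption~A.1 together with \eqref{BOUNDS:ON:V:PHI}); this is the only point where regularity of $\varphi_t$ enters.

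Next I would invoke the pull-through formula (Lemma~\ref{LEMMA:PULL:THROUGH:FORMULA}): $P_{12}^{(0)}A_{12}P_{12}^{(2)}\cf{(m-\tau_{-2}m)}{t}=\cf{\tau_2(m-\tau_{-2}m)}{t}P_{12}^{(0)}A_{12}P_{12}^{(2)}$ with $\tau_2(m-\tau_{-2}m)=\tfrac2N\mathbf{1}_{\{k\le N-2\}}$, and likewise $P_{12}^{(1)}A_{12}P_{12}^{(2)}\cf{(m-\tau_{-1}m)}{t}=\cf{\tau_1(m-\tau_{-1}m)}{t}P_{12}^{(1)}A_{12}P_{12}^{(2)}$ with $\tau_1(m-\tau_{-1}m)=\tfrac1N\mathbf{1}_{\{k\le N-1\}}$. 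Then I would expand $\Psi=\sum_{k=0}^N\Psi_k$, $\Psi_k=P_{N,k}^{\varphi_t}\Psi$, use that the two-body operator maps the $k$-th sector into the $(k-2)$-th (resp.\ $(k-1)$-th), and estimate sector by sector with Cauchy--Schwarz and the identities $\no\q{1}{t}\q{2}{t}\Psi_k\no=\sqrt{k(k-1)/(N(N-1))}\,\no\Psi_k\no\le (2k/N)\no\Psi_k\no$ and $\no\q{1}{t}\Psi_k\no=\sqrt{k/N}\,\no\Psi_k\no$. A final Cauchy--Schwarz in $k$, using $\sum_k k\no\Psi_k\no^2=N\lsp\Psi,\cf{m}{t}\Psi\rsp$ and $\sum_k(k+2)\no\Psi_k\no^2=N\lsp\Psi,\cf{m}{t}\Psi\rsp+2\no\Psi\no^2$, then closes the estimate for $(\text{II})$ and $(\text{III})$.

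The main obstacle is the bookkeeping in this last step. One has to place the weight operators on the correct side of the inner product (before versus after pull-through) and track the sector of every factor, so that each surviving factor carries either the explicit $1/N$ from the pulled-through weight or the factor $\sqrt{k(k-1)/(N(N-1))}$ from $\q{1}{t}\q{2}{t}$ acting in the $k$-th sector; only then does the sum over $k$ telescope against $\sum_k k\no\Psi_k\no^2=N\lsp\Psi,\cf{m}{t}\Psi\rsp$ and yield a bound genuinely controlled by $\lsp\Psi,\cf{m}{t}\Psi\rsp$ rather than by $\lsp\Psi,\cf{m}{t}\Psi\rsp^{1/2}$. The clean cancellation of the $(v\ast\vert\varphi_t\vert^2)$ counterterms of $Z^{\varphi_t}$ — which guarantees that $v_{12}$ never appears without an adjacent $\p{i}{t}$ — is the other delicate point, but it is dictated by the definition of $Z^{\varphi_t}$.
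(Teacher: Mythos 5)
Your treatment of the case $n=1$ is correct, and it follows essentially the same route as the paper: the identity $\p{2}{t}v_{12}\p{2}{t}=(v\ast\vert\varphi_t\vert^2)(x_1)\p{2}{t}$ kills $(\text{I})$ exactly as in the paper's proof; the counterterms of $Z^{\varphi_t}$ are disposed of in the same way (with, as you note, one bounded multiplication operator surviving in $(\text{III})$); and the pull-through of the weight difference followed by Cauchy--Schwarz and the symmetry identities $\no\q{1}{t}\q{2}{t}\Psi_k\no^2=\tfrac{k(k-1)}{N(N-1)}\no\Psi_k\no^2$, $\sum_k k\no\Psi_k\no^2=N\lsp\Psi,\cf{m}{t}\Psi\rsp$ is exactly what the paper does, only phrased there through the auxiliary weight operators $\widehat n^{\varphi_t},\widehat\nu^{\varphi_t},\widehat\mu^{\varphi_t}$ rather than through an explicit sector decomposition $\Psi=\sum_k\Psi_k$. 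That difference is purely cosmetic.

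The genuine gap is that the lemma is not about $(\text{I})_{m,\Psi}$, $(\text{II})_{m,\Psi}$, $(\text{III})_{m,\Psi}$ but about $(\text{I})_{m^n,\Psi}$, $(\text{II})_{m^n,\Psi}$, $(\text{III})_{m^n,\Psi}$ for arbitrary $n$, with the bound $\sum_{l=0}^{n}C^{\varphi_t}N^{-(n-l)}\lsp\Psi,(\cf{m}{t})^{l}\Psi\rsp$. This generality is not decorative: the proof of Lemma \ref{A:PRIORI:ESTIMATE:PSI} runs an induction in $n$ on exactly this estimate, and Theorem \ref{THEOREM:MAIN:THEOREM:NORM:APPROXIMATION} and Lemma \ref{LEMMA:CHI:TILDE:CHI:BOG} need it up to $n=3$ (to control $N^2\no\q{1}{t}\q{2}{t}\q{3}{t}\Psi_t\no^2$ and $N^2\lsp\wt\Psi_t,(\cf{m}{t})^3\wt\Psi_t\rsp$). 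For $n\ge2$ the pulled-through weight is no longer the constant $d/N$: one has $\tau_d\big(m^n-\tau_{-d}m^n\big)(k)=((k+d)/N)^n-(k/N)^n=\sum_{l=0}^{n-1}\binom{n}{l}(d/N)^{n-l}(k/N)^{l}$, so a residual polynomial weight $(k/N)^{l}$ remains and must be split as $(\cf{m}{t})^{l/2}$ onto each side of the inner product (equivalently, as $(k/N)^{l/2}((k\pm d)/N)^{l/2}$ in your sector language, at the cost of binomial shift corrections) before Cauchy--Schwarz; this is what produces the full sum over $l$ in the stated bound rather than just the two terms $l=0,1$. Your scheme extends to this, but the extension is not automatic bookkeeping-free, and as written the proposal proves only the $n=1$ instance of the statement.
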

\vspace{-0.1cm}
\begin{align*}
(\text{I})_{m^n, \Psi} = 0, \hspace{1cm} \vert ( \text{II})_{m^n, \Psi} \vert + \vert ( \text{III})_{m^n,\Psi} \vert \le & \sum_{l=0}^{n} \frac{C^{\varphi_t} \lsp \Psi, (\cf{m}{t})^{l}\Psi \rsp}{N^{n-l}}.
\end{align*}
 
\subsection{Proofs of Lemmas \ref{A:PRIORI:ESTIMATE:PSI},   \ref{LEMMA:ODD:EVEN:PART:TIME:EVOLUTION}, \ref{LEMMA:ENERGY:ESTIMATE:PSI:TILDE}, \ref{LEMMA:CHI:TILDE:CHI:BOG}, \ref{Q:COMMUTATION}, \ref{TIME:DERIVATIVE:ALPHA} and \ref{UNIFORM:ROEMISCHE:1-3}
\label{PROOFS:OF:LEMMATA}}
 
We begin with
\begin{proof}[Proof of Lemma \ref{Q:COMMUTATION}]

\noindent We denote by $\widehat f^{\varphi}$ the counting functional defined as in \eqref{COUNTING:OPERATOR} with $\varphi_t$ replaced by $\varphi$.\\
\\
1. We recall that $\widehat n^{\varphi} \widehat \nu^{\varphi} q_1^{\varphi} \Phi = q_1^{\varphi} \Phi$, and use $\widehat f^{\varphi}_{\text{odd}} p_1^{\varphi}  A_{1} q_1^{\varphi}  \widehat f^{\varphi}_{\text{odd}} = 0 = \widehat f^{\varphi}_{\text{even}} p_1^{\varphi}  A_{1} q_1^{\varphi}  \widehat f^{\varphi}_{\text{even}} $ which follows from the pull through formula and $\widehat f^{\varphi}_{\text{even}} \widehat f^{\varphi}_{\text{odd}} = 0$. Then,
\begin{align*}
& \big\vert  \lsp  \Phi, p_1^{\varphi} A_{1}  q_1^{\varphi}  \Phi \rsp \big\vert  \\
& =  \vert  \lsp \widehat f^{\varphi}_{\text{odd}} \Phi,   p_1^{\varphi}  A_1 q_1^{\varphi} \widehat f^{\varphi}_{\text{even}}  \Phi \rsp +  \lsp \widehat f^{\varphi}_{\text{even}}  \Phi,   p_1^{\varphi}  A_1    q_1^{\varphi} \widehat f^{\varphi}_{\text{odd}}  \Phi \rsp 
\big\vert \nonumber\\
& =  \big\vert  \lsp \widehat f^{\varphi}_{\text{odd}}  \Phi, p_1^{\varphi}  A_1 q_1^{\varphi} \widehat f^{\varphi}_{\text{even}}  \Phi \rsp +  \lsp  \widehat{\tau_{-1} n}^\varphi  \widehat f^{\varphi}_{\text{even}}  \Phi , p_1^{\varphi}  A_1   q_1^{\varphi} \widehat{ \nu}^\varphi  \widehat f^{\varphi}_{\text{odd}}   \Phi \rsp  \big\vert \nonumber\\
& \le  \no  \widehat f^{\varphi}_{\text{odd}}  \Phi \no \ \no  p_1^\varphi  A_1 \no_{op} \ \no  q_1^\varphi \widehat f^{\varphi}_{\text{even}} \Phi \no +  \no  \widehat{\tau_{-1} n}^\varphi  \widehat f^{\varphi}_{\text{even}} \Phi \no \ \no p_1^\varphi A_{1} \no_{op} \ \no  q^\varphi_1 \widehat{\nu}^\varphi \widehat f^\varphi_{\text{odd}} \Phi \no \\
& \le  \frac{\no  \widehat f_{\text{odd}}^\varphi  \Phi \no^2 }{2} + \frac{ \no  p_1^\varphi A_{1} \no^2}{2} \Big( \no  \q{1}{}  \Phi \no^2  + \frac{1}{N} \Big)\\
& \le   \no \widehat f_{\text{odd}}^\varphi \wt \Phi \no^2 +  \no  \Phi -  \wt \Phi \no^2  + C^{\varphi} \Big( \no  \q{1}{}  \Phi \no^2 + \frac{1}{N} \Big).
\end{align*}
2. Here, we recall $\widehat m^\varphi  \widehat \mu^\varphi q_1^\varphi\wt \Phi = q_1^\varphi \wt \Phi $ and use $v^2 \le C(1-\Delta)$. Then,
\begin{align*}
 \big\vert \lsp \Phi,   q_{1}^\varphi q_{2}^\varphi v_{12}   q_{1}^\varphi q_{2}^\varphi   \wt \Phi\rsp \big\vert = &   \big\vert \lsp \widehat m^\varphi  \Phi,  q_{1}^\varphi q_{2}^\varphi v_{12} q_{1}^\varphi q_{2}^\varphi \widehat \mu^\varphi \wt \Phi  \rsp \big\vert  \\
\le  &  \no  \widehat m^\varphi  q_{1}^\varphi q_{2}^\varphi \Phi \no \ \no v_{12} q_{1}^\varphi q_{2}^\varphi \widehat \mu^\varphi \wt \Phi \no \nonumber\\ 
\le &  C  \no   q_{1}^\varphi q_{2}^\varphi q_{3}^\varphi \Phi \no  \ \no \nabla_1 q_{1}^\varphi  \wt \Phi \no  \le \frac{N C \no q_{1}^\varphi q_{2}^\varphi q_{3}^\varphi  \Phi \no^2}{2} + \frac{\no \nabla_1 q_{1}^\varphi \wt \Phi \no^2}{2N}.
\end{align*}
3. Similarly as in 1, we find
\begin{align*}
& \big\vert \lsp  \Phi,  q_{1}^\varphi q_{2}^\varphi A_{12}  q_{1}^\varphi p_{2}^\varphi \wt \Phi \rsp \big\vert \\
& =   \vert \lsp  \widehat f_{\text{odd}}^\varphi  \Phi, q_{1}^\varphi q_{2}^\varphi A_{12}  q_{1}^\varphi p_{2}^\varphi \widehat f_{\text{even}}^\varphi  \wt \Phi \rsp + \lsp  \widehat f_{\text{even}}^\varphi  \Phi,  q_{1}^\varphi q_{2}^\varphi A_{12}  q_{1}^\varphi p_{2}^\varphi \widehat f_{\text{odd}}^\varphi \wt \Phi \rsp \big\vert \\
& =   \big\vert \lsp  \widehat f_{\text{odd}}^\varphi  \Phi,  \widehat \mu^\varphi  q_{1}^\varphi q_{2}^\varphi A_{12} \q{1}{} \p{2}{} \widehat { \tau_{-1} m}^\varphi  \widehat f_{\text{even}}^\varphi \wt \Phi \rsp + \lsp   \widehat f_{\text{even}}^\varphi \Phi, \widehat {\tau_{1} n}^\varphi q_{1}^\varphi q_{2}^\varphi A_{12} q_{1}^\varphi p_{2}^\varphi \widehat \nu^\varphi \widehat {f}_{\text{odd}}^\varphi  \wt \Phi \rsp \big\vert  \\
& \le   \no A_{12} p_{1}^\varphi \no \Big( \no  \widehat {\mu}^\varphi q_{1}^\varphi q_{2}^\varphi \widehat {f}^\varphi_{\text{odd}}  \Phi \no \ \no  q_1^\varphi \widehat { \tau_{-1} m}^\varphi  \wt \Phi \no  + \no  \widehat {\tau_{1} n}^\varphi  q_{1}^\varphi q_{2}^\varphi \Phi \no \ \no  q_{1}^\varphi \widehat \nu^\varphi \widehat {f}^\varphi_{\text{odd}}  \wt \Phi  \no \Big) \\
& \le   \frac{\no \widehat {f}^\varphi_{\text{odd}}  \Phi \no^2 }{2N} + \frac{\no \widehat {f}^\varphi_{\text{odd}}  \wt \Phi \no^2 }{2 N } + C^{\varphi} N \Big(  \no q_{1}^\varphi q_{2}^\varphi q_{3}^\varphi \wt \Phi \no^2 + \no q_{1}^\varphi q_{2}^\varphi q_{3}^\varphi  \Phi \no^2 + \frac{1}{N^3} \Big) \\
& \le   \frac{ \no \widehat {f}^\varphi_{\text{odd}}  \wt \Phi \no^2 }{N} + \frac{ \no \Phi - \wt \Phi \no^2 }{N} + C^{\varphi} N \Big(  \no q_{1}^\varphi q_{2}^\varphi q_{3}^\varphi \wt \Phi \no^2 + \no q_{1}^\varphi q_{2}^\varphi q_{3}^\varphi \Phi \no^2 + \frac{1}{N^3} \Big) .
\end{align*} 
The term containing the hermitian conjugate is estimated in exactly the same manner.
\end{proof}

\begin{proof}[Proof of Lemma \ref{TIME:DERIVATIVE:ALPHA}] We prove only the first part of the lemma since the second part is proved analogously. Using \eqref{TIME:DERIVATIVE:CF} and the symmetry of the wave function,
\begin{align*}
\partial_t  \lsp \Psi_t, \cf{f}{t} \Psi_t \rsp    
= &  i N \lsp \Psi_t, \big( Z^{\varphi_t}(x_1-x_2) \cf{f}{t} - \cf{f}{t} Z^{\varphi_t}(x_1-x_2)\big) \Psi_t \rsp . \nonumber
\end{align*}
Multiplying both of the $\Psi_t$ with the identity $1 = (\p{1}{t}+\q{1}{t})(\p{2}{t}+\q{2}{t})$, leads to
\begin{align}
 \partial_t  \lsp \Psi_t, \cf{f}{t} \Psi_t \rsp  =  
&   2 i N \lsp \Psi_t, \p{1}{t}  \p{2}{t} \big( Z^{\varphi_t}(x_1-x_2) \cf{f}{t} - \cf{f}{t} Z^{\varphi_t}(x_1-x_2)\Big) \q{1}{t}  \p{2}{t} \Psi_t \rsp + \text{c.c}\nonumber\\
& +  i N \lsp \Psi_t, \p{1}{t}  \p{2}{t} \big( Z^{\varphi_t}(x_1-x_2) \cf{f}{t} - \cf{f}{t} Z^{\varphi_t}(x_1-x_2)\Big)  \q{1}{t}  \q{2}{t}\Psi_t \rsp + \text{c.c} \nonumber\\
& + 2 i N \lsp \Psi_t, \q{1}{t}  \p{2}{t} \Big( Z^{\varphi_t}(x_1-x_2) \cf{f}{t} - \cf{f}{t} Z^{\varphi_t}(x_1-x_2)\big)  \q{1}{t}  \q{2}{t} \Psi_t \rsp + \text{c.c}, \nonumber
\end{align}
where $\text{c.c.}$ denotes the complex conjugate of the preceding expression (note that due to the symmetry of $\Psi_t$, all other contributions are zero).\ Applying the pull through formula \eqref{PULL:THROUGH:FORMULA} proves the first part of the lemma. 
\end{proof}
\begin{proof}[Proof of Lemma \ref{UNIFORM:ROEMISCHE:1-3}]
\textit{Term} (I): The first term is identically zero for all $n\in \mathbb N$,
\begin{align*}
  \text{(I)}_{m^n,\Psi}   = & 2N \im \vert \lsp \Psi, \p{1}{t}  \big(  \p{2}{t} v(x_1-x_2)  \p{2}{t} -  \p{2}{t} \big( v\ast \vert \varphi_t  \vert^2\big)(x_1)  \p{2}{t}\big) \q{1}{t} \Big( ( \cf{m}{t})^n - (\cfs{\tau_{-1} m}{t}{} )^n \Big) \Psi \rsp = 0,
\end{align*}
because 
\begin{align}
 \p{2}{t}  v(x_1-x_2) \p{2}{t}  =  \p{2}{t} \big(v\ast \vert \varphi_t  \vert^2 \big)(x_1) \p{2}{t} \label{pvp:MEAN:FIELD}
\end{align}
cancels exactly the mean field potential (it is this term which determines the choice of the effective potential in the Hartree equation).\\
\\
For the second and third term, we compute (using the binomial expansion for $(k-d)^l$)
\begin{align*}
(\cf{m}{t})^n - (\cf{\tau_{-d}m}{t}{})^n =&  \sum_{k=0}^N \Big[ \Big(\frac{k}{N}\Big)^n - \Big(\frac{k-d}{N}\Big)^n \Big] P_{N,k}^{\varphi_t}  = \sum_{l=0}^{n-1} \frac{C^{d,l}}{N^{n-l}} (\cf{m}{t})^l
\end{align*}
for some constants $C^{d,l}$.\\
\\
\textit{Term} (II):
It follows that (note that $\p{2}{t}\bar v_1^{\varphi_t} \q{2}{t} = 0$) 
\begin{align}
\vert \text{(II)}_{m^n,\Psi} \vert = & \big\vert \sum_{l=0}^{n-1} \frac{C^{d,l}}{N^{n-l-1}} \lsp   \Psi, \p{1}{t}  \p{2}{t}  v(x_1-x_2)  \q{1}{t}  \q{2}{t}  (\cf{m}{t})^l  \Psi \rsp \big\vert   \nonumber\\
= & \big\vert \sum_{l=0}^{n-1} \frac{C^{d,l}}{N^{n-l-1}} \lsp (\cfs{\tau_{-2}m}{t}{})^{\frac{l}{2}} \cfs{\tau_{-2}n}{t}{} \Psi, \p{1}{t}  \p{2}{t}  v(x_1-x_2)  \q{1}{t}  \q{2}{t} (\cf{m}{t})^{ \frac{l}{2}} \cf{\nu}{t}  \Psi \rsp \big\vert   \nonumber\\
\le &   \sum_{l=0}^{n-1} \frac{\vert C^{d,l} \vert }{N^{n-l-1}} \no v_{12} \p{1}{t}  \p{2}{t} \no_{op}\ \no (\cfs{\tau_{-2}m}{t}{})^{\frac{l+1}{2}} \Psi\no \ \no  \q{1}{t}  \q{2}{t} (\cf{m}{t})^{ \frac{l}{2}} \cf{\nu}{t} \Psi\no \nonumber\\
\le &   \sum_{l=0}^{n-1}  \frac{C^{\varphi_t}}{N^{n-l-1}} \sqrt{\sum_{j=0}^{l+1} {l+1 \choose j} \Big(\frac{2}{N}\Big)^{l+1-j} \lsp \Psi,   (\cfs{m}{t}{})^j \Psi \rsp }  \sqrt{\lsp \Psi,   \q{1}{t}  \q{2}{t} (\cf{m}{t})^{l} \cf{\mu}{t} \Psi\rsp } \nonumber\\
\le &  \sum_{l=0}^{n-1}  \frac{C^{\varphi_t}}{N^{n-l-1}}  \sum_{j=0}^{l+1} {l+1 \choose j} \Big(\frac{2}{N}\Big)^{l+1-j} \lsp \Psi,   (\cfs{m}{t}{})^j \Psi\rsp  +  \sum_{l=0}^{n-1}  \frac{C^{\varphi_t}}{N^{n-l-1}}  \lsp \Psi_t,     (\cf{m}{t})^{l+1}\Psi_t\rsp \nonumber\\
\le & \sum_{l=0}^{n} \frac{C^{\varphi_t} \lsp \Psi, (\cf{m}{t})^{l}\Psi\rsp}{N^{n-l}}.
\end{align}
The essential ingredient here is the symmetry of the wave function which ensures that not all mass can be located around, e.g., $x_1\approx x_2$ (for general $\Psi\in L^2(\mathbb R^{3N})$, the second term would not be necessarily small).
\\
\\
\textit{Term} (III):
Again via the pull through formula, and similarly as in $\text{(II)}$,
\begin{align*}
\vert \text{(III)}_{m^n,\Psi} \vert  = & \big\vert  \sum_{l=0}^{n-1} \frac{C^{d,l}}{N^{n-l-1}} \lsp \Psi, \q{1}{t}  \p{2}{t} \big( v(x_1-x_2) - v\ast \vert \varphi_t\vert^2 (x_1) \big) \q{1}{t} \q{2}{t}  (\cf{m}{t})^l  \Psi \rsp \big\vert \\
 = &\big\vert  \sum_{l=0}^{n-1} \frac{C^{d,l}}{N^{n-l-1}} \lsp (\cf{\tau_{-1}m}{t})^{\frac{l}{2} } \Psi, \q{1}{t}  \p{2}{t} \big( v(x_1-x_2) - v\ast \vert \varphi_t\vert^2(x_1) \big) \q{1}{t} \q{2}{t}  (\cf{m}{t})^{\frac{l}{2} }  \Psi \rsp \big\vert   \\
 \le    & \sum_{l=0}^{n-1} \frac{C^{\varphi_t}}{N^{n-l-1}} \no \q{1}{t} (\cf{\tau_{-1}m}{t})^{\frac{l}{2} } \Psi \no \ \no \q{1}{t}   (\cf{m}{t})^{\frac{l}{2}}  \Psi \no \\
\le & \sum_{l=0}^{n} \frac{C^{\varphi_t} \lsp \Psi, (\cf{m}{t})^{l}\Psi \rsp}{N^{n-l}}.
\end{align*}
\end{proof}

\begin{proof}[Proof of Lemma \ref{A:PRIORI:ESTIMATE:PSI}]\ By Lemma \ref{PROPOSITION:Q:K:M:N} for $k=0$, it is sufficient to estimate the value of $$\lsp  \Psi_t , (\cf{m}{t})^n \Psi_t \rsp, \hspace{1cm} m(k)=\frac{k}{N}.$$ To this end, we compute its time derivative and conclude via a Gr\"onwall argument.\ Recall that by Lemma \ref{TIME:DERIVATIVE:ALPHA}, we have
\begin{align}
\partial_t \lsp  \Psi_t , (\cf{m}{t})^n \Psi_t \rsp = & \text{(I)}_{m^n,\Psi_t} +\text{(II)}_{m^n,\Psi_t} +\text{(III)}_{m^n,\Psi_t}  \label{TIME:DERIVATIVE:M:TO:N},\\
\partial_t \lsp \wt \Psi_t , (\cf{m}{t})^n \wt \Psi_t \rsp = & \text{(II)}_{m^n,\wt \Psi_t}  \label{TIME:DERIVATIVE:M:TO:N:2} .
\end{align}
The terms on the r.h.s.\ have been estimated in Lemma \ref{UNIFORM:ROEMISCHE:1-3}. The remainder of the argument follows by induction. Assume that for all $k\le n-1$,
\begin{align}
\lsp \Psi_t, (\cf{m}{t})^k \Psi_t \rsp \le 
e^{C^{\varphi_t}} \sum_{l=0}^{k} \frac{C_{n,k}}{N^{l-n}}   \lsp \Psi_0, (\cf{m}{0})^{l}\Psi_0 \rsp \label{INDUKTIONS:ANFANG}.
\end{align}
By means of \eqref{TIME:DERIVATIVE:M:TO:N} and Lemma \ref{UNIFORM:ROEMISCHE:1-3},
\begin{align*}
\partial_t  \lsp \Psi_t, (\cf{m}{t})^n \Psi_t \rsp \le   C^{\varphi_t} \lsp \Psi_t, (\cf{m}{t})^n \Psi_t \rsp + e^{C^{\varphi_t}} \sum_{l=0}^{n-1} \frac{C_{n,l}}{N^{l-n}}   \lsp \Psi_0, (\cf{m}{0})^{l}\Psi_0\rsp
\end{align*}
which implies by Gr\"onwall's inequality that
\begin{align*}
\lsp \Psi_t, (\cf{m}{t})^n \Psi_t \rsp \le e^{C^{\varphi_t}} \sum_{l=0}^{n} \frac{C_{n,l}}{N^{l-n}}   \lsp \Psi_0, (\cf{m}{0})^{l}\Psi_0\rsp.
\end{align*}
The case $n=2$ follows as well from Gr\"onwall, cf.\ again Lemma \ref{UNIFORM:ROEMISCHE:1-3},
\begin{align*}
\partial_t  \lsp \Psi_t, \cf{m}{t} \Psi_t \rsp \le   C^{\varphi_t} \Big( \lsp \Psi_t, \cf{m}{t} \Psi_t \rsp + \frac{1}{N} \Big) \ \ \ \Rightarrow  \ \ \ \lsp \Psi_t, \cf{m}{t} \Psi_t \rsp \le   e^{C^{\varphi_t}} \Big( \lsp \Psi_0, \cf{m}{0} \Psi_0 \rsp + \frac{1}{N} \Big).
\end{align*}
Starting from \eqref{TIME:DERIVATIVE:M:TO:N:2}, the argument is exactly the same for $\wt \Psi_t$.
\end{proof}

\begin{proof}[Proof of Lemma \ref{LEMMA:ODD:EVEN:PART:TIME:EVOLUTION}]
The time-derivative of $\no \cfs{f}{t}{\text{odd}} \wt \Psi_t \no^2$ is given by
\begin{align*}
\partial_t \lsp \wt \Psi_t, \cfs{f}{t}{\text{odd}} \wt \Psi_t \rsp =  \text{(II)}_{f_{\text{odd}},\wt \Psi_t} = 2 N  \im \lsp \wt  \Psi_t, \p{1}{t}  \p{2}{t}  v_{12}  \q{1}{t}  \q{2}{t} \big( \cfs{f}{t}{\text{odd}} - \cfs{\tau_{-2} f}{t}{\text{odd}} \big)  \wt \Psi_t \rsp,
\end{align*}
cf.\ Lemma \ref{TIME:DERIVATIVE:ALPHA}. Recalling the definition of the shifted weight function,
\begin{align*}
\cfs{f}{t}{\text{odd}} - \cfs{\tau_{-2} f}{t}{\text{odd}} = f(1) P_{N,1}^{\varphi_t} =  P_{N,1}^{\varphi_t},
\end{align*}
and the fact that 
$$ \q{1}{t}  \q{2}{t}  P_{N,1}^{\varphi_t} = 0$$
shows that $\no \cfs{f}{t}{\text{odd}} \wt \Psi_t \no = \no \cfs{f}{0}{\text{odd}} \wt \Psi_0 \no$. A similar calculation holds for the even case.
\end{proof}

\begin{proof}[Proof of Lemma \ref{LEMMA:ENERGY:ESTIMATE:PSI:TILDE}] The proof consists of two steps. First, we show that the bad part of the kinetic energy can be bounded as follows:
\begin{align}
 \no \nabla_1 q_1 \Psi_t \no^2 \le C \no \Psi_t -\widetilde \Psi_t \no^2 + C \vert \mathcal E_{\widetilde H^t}(\Psi_t) - \mathcal E_{h^{t,\varphi_t}}(\varphi_t) \vert + C^{\varphi_t} \Big(\no \cfs{f}{0}{\text{odd}} \Psi_0 \no^2 +  \no \q{1}{t}  \Psi_t \no^2  +\frac{1}{N} \Big) \label{BAD:KIN:E:INEQUALITY}
\end{align}
where 
$$ \mathcal E_{\widetilde H^t}(\Psi) = \no \nabla_1 \Psi \no^2 + \lsp \Psi,  ( W^t_1 + \bar v_1^{\varphi_t} ) \Psi \rsp + \lsp \Psi, \wt v^t_{12} \Psi \rsp $$ denotes the energy per particle w.r.t.\ to $\wt H^t$. Here, and below, we are using the abbreviation $\wt v_{12}^t = \big( \p{1}{t}\q{2}{t} v_{12} \q{1}{t}\p{2}{t} + \p{1}{t}\p{2}{t} v_{12} \q{1}{t}\q{2}{t} \big) + \text{h.c.}$ In the second step, we use energy conservation of $H^t$ (modulo the change due to the external potential $W^t$) in order to show that the energy difference that appears on the r.h.s.\ in the above line can be approximated in terms of
\begin{align}
\vert \mathcal E_{\widetilde H^t}(\Psi_t) - \mathcal E_{h^{t,\varphi_t}}(\varphi_t) \vert  \le & \vert {\mathcal E}_{\widetilde H^t}(\Psi_t)  -  {\mathcal E}_{H^t} (\Psi_t) \vert  + \vert {\mathcal E}_{H^0}(\Psi_0)  -  {\mathcal E}_{h^{0,\varphi_0}}(\varphi_0) \vert \nonumber \\
& ~~~~~ + C^{\varphi_t} \Big(\no   \cfs{f}{0}{\text{odd}} \Psi_0 \no^2 + N  \no \q{1}{t} \q{2}{t} \Psi_t \no^2 +   \no \q{1}{t}  \Psi_t \no^2  +\frac{1}{N} \Big).\label{ENERGY:DIFFERENCE:INITIAL:TIME}
\end{align}
To obtain the first inequality, one inserts the identity $1 = (\p{1}{t} + \q{1}{t} )(\p{2}{t} + \q{2}{t} )$ on the left and right hand side of $\widetilde H_t$ in ${\mathcal E}_{\widetilde H^t}(\Psi_t)$ and extracts the bad part of the kinetic energy (the first term on the r.h.s.):
\begin{align}
 {\mathcal E}_{\widetilde H^t}(\Psi_t) - \mathcal E_{h^{t,\varphi_t}}(\varphi_t) = & \lsp i\nabla_1 \q{1}{t} \Psi_t,i\nabla_1\q{1}{t}  \Psi_t \rsp \nonumber \\
 & + \lsp  \q{1}{t} \Psi_t,  ( W^t_1 + \bar v_1^{\varphi_t} ) \q{1}{t}  \Psi_t \rsp  \label{ENERGY:ESTIMATE:PSI:TILDE:01}\\ 
& + \lsp \Psi_t, \p{1}{t}  h_1^{t,\varphi_t} \p{1}{t}  \Psi_t \rsp - \langle \varphi_t, h_1^{\varphi_t} \varphi_t \rangle \label{ENERGY:ESTIMATE:PSI:TILDE:02}\\
& + 2\re \lsp   \Psi_t, \p{1}{t}  h_1^{t,\varphi_t} \q{1}{t} \Psi_t \rsp \label{ENERGY:ESTIMATE:PSI:TILDE:03} \\
& +  \lsp  \Psi_t, \p{1}{t}\p{2}{t}  \widetilde v_{12}^t \p{1}{t} \p{2}{t}  \Psi_t \rsp \label{ENERGY:ESTIMATE:PSI:TILDE:07}\\
& +  \lsp  \Psi_t, (1-\p{1}{t}\p{2}{t}) \widetilde v_{12}^t (1- \p{1}{t} \p{2}{t})   \Psi_t \rsp \label{ENERGY:ESTIMATE:PSI:TILDE:08}\\
& + 2\re \lsp  \Psi_t, (1-\p{1}{t}\p{2}{t}) \widetilde v_{12}^t \p{1}{t}\p{2}{t} \Psi_t \rsp \label{ENERGY:ESTIMATE:PSI:TILDE:09},
\end{align}
All but the first line on the r.h.s.\ can be estimated separately:
\begin{align*}
\vert \eqref{ENERGY:ESTIMATE:PSI:TILDE:01} \vert \le  & (  \no W^t_1 \no_{\infty} + \no \bar v_1^{\varphi_t} \no_{\infty} ) \no \q{1}{t}   \Psi_t \no^2,\\
\vert \eqref{ENERGY:ESTIMATE:PSI:TILDE:02} \vert  = & \vert \lsp  \Psi_t,\big( \p{1}{t} h^{t,\varphi_t}_1 \p{1}{t}  - \langle \varphi_t, h^{t,\varphi_t}_1 \varphi_t \rangle \big)  \Psi_t \rsp \vert \le  \no h^{\varphi_t}_1 \varphi_t \no \ \no \q{1}{t}   \Psi_t \no^2, \\
\vert \eqref{ENERGY:ESTIMATE:PSI:TILDE:07} \vert =& 0, \\
\vert \eqref{ENERGY:ESTIMATE:PSI:TILDE:08}\vert =  & \vert \lsp  \Psi_t, (1-\p{1}{t}\p{2}{t})  \wt v^t_{12}(1- \p{1}{t} \p{2}{t})  \Psi_t \rsp \vert = \vert 2  \lsp  \Psi_t, \q{1}{t}\p{2}{t} v_{12} \p{1}{t} \q{2}{t}  \Psi_t \rsp \vert \\
\le & 2 \no v_{12} \p{2}{t} \no_{op} \no\q{1}{t} \Psi_t \no^2\le 2 \sqrt{\no v^2\ast \vert \varphi_t\vert^2\no_{\infty}}  \no\q{1}{t} \Psi_t \no^2,\\
\vert \eqref{ENERGY:ESTIMATE:PSI:TILDE:09} \vert = & \vert2\re \lsp  \Psi_t, \q{1}{t}\q{2}{t} v_{12} \p{1}{t}\p{2}{t}  \Psi_t \rsp \vert = \vert2\re \lsp  \Psi_t, \cf{\nu}{t}\q{1}{t}\q{2}{t} v_{12} \p{1}{t} \p{2}{t} \cfs{\tau_{-2}n}{t}{}  \Psi_t \rsp \vert\\
\le &   2 \no v_{12} \p{2}{t} \no_{op} \ \no \cfs{\tau_{-2}n}{t}{}  \Psi_t\no \ \no   \cf{\nu}{t}\q{1}{t}\q{2}{t}  \Psi_t\no \le  \sqrt{\no v^2\ast \vert \varphi_t\vert^2\no_{\infty}}  \Big( \no \q{1}{t} \Psi_t \no^2 + \frac{1}{N} \Big).
\end{align*}
In order to estimate the remaining line, we use \eqref{QAP:COMMUTATION} with $A_1 =   h_1^{t,\varphi_t}$,
\begin{align*}
\vert \eqref{ENERGY:ESTIMATE:PSI:TILDE:03}  \vert = & \vert 2\re \lsp  \Psi_t,  \p{1}{t} h_1^{t,\varphi_t} \q{1}{t}  \Psi_t \rsp\vert \le  C \no \Psi_t - \wt \Psi_t \no^2 +  C \no \cfs{f}{t}{\text{odd}} \wt \Psi_t \no^2 + C^{\varphi_t} \no \q{1}{t} \Psi_t \no^2.
\end{align*}
This completes the proof of inequality \eqref{BAD:KIN:E:INEQUALITY}. It the second step we need to estimate the energy difference on the r.h.s.\ of \eqref{BAD:KIN:E:INEQUALITY}. For that, we use the fact that only the time-dependent external potential causes a change in the energy,
\begin{align*}
\partial_t \big(  \mathcal E_{H^t}(\Psi_t) - \mathcal E_{h^{t,\varphi_t}}(\varphi_t) \big) = &  \lsp \Psi_t, \dot  W_1^t \Psi_t \rsp-  \langle \varphi_t, \dot W_1^t \varphi_t \rangle  \\
= & \lsp \Psi_t, \p{1}{t}\dot  W_1^t \p{1}{t} \Psi_t \rsp-  \langle \varphi_t, \dot W_1^t \varphi_t \rangle  \\
& + \lsp \Psi_t, \q{1}{t}\dot  W_1^t \q{1}{t} \Psi_t \rsp + 2 \re \lsp \Psi_t, \p{1}{t}\dot  W_1^t \q{1}{t} \Psi_t \rsp\\
& \le  C^{\varphi_t} \no \q{1}{t} \Psi_t \no^2 + C \no \Psi_t - \wt \Psi_t \no^2 +  C \no \cfs{f}{t}{\text{odd}} \wt \Psi_t \no^2,
\end{align*}
where we have used that $\dot W^t \in L^\infty$ and applied inequality \eqref{QAP:COMMUTATION}. Hence,
\begin{align*}
\vert  \mathcal E_{\widetilde H^t}(\Psi_t) - \mathcal E_{h^{\varphi_t}}(\varphi_t) \vert \le &  \vert  \mathcal E_{\widetilde H^t}(\Psi_t) - \mathcal E_{H^t}(\Psi_t) \vert + \vert  \mathcal E_{H^0}(\Psi_0) - \mathcal E_{h^{\varphi_0}}(\varphi_0) \vert \\
& \hspace{1.5cm} + C^{\varphi_t} \no \q{1}{t} \Psi_t \no^2 + C \no \Psi_t - \wt \Psi_t \no^2 +  C \no \cfs{f}{t}{\text{odd}} \wt \Psi_t \no^2  .
\end{align*}
By \eqref{DIFFERENCE:HAMILTONIAN}, the first term on the r.h.s.\ is given by
\begin{align*}
\vert  \mathcal E_{\widetilde H^t}(\Psi_t) - \mathcal E_{H^t}(\Psi_t) \vert & = \Big\vert 2 \lsp \Psi_t, \Big[ \Big( \q{1}{t} \q{2}{t} (v_{12}-\bar v_1^{\varphi_t} )  \q{1}{t} \p{2}{t} +  \q{1}{t} \q{2}{t} (v_{12}-\bar v_1^{\varphi_t} + \mu^{\varphi_t})  \q{1}{t} \q{2}{t} \Big) + \text{h.c.} \Big] \Psi_t \rsp \Big\vert \\
& \le C \big( \no \bar v_1^{\varphi_t} \no_{\infty} + \mu^{\varphi_t} \big) \no \q{1}{t} \Psi_t\no^2 + 2 \big\vert \lsp \Psi_t, \q{1}{t} \q{2}{t} v_{12} \q{1}{t} \q{2}{t} \Psi_t\rsp \big\vert ,
\end{align*}
and, by means of $v^2 \le C(1-\Delta)$,
\begin{align*}
\big\vert \lsp \Psi_t, \q{1}{t} \q{2}{t} v_{12} \q{1}{t} \q{2}{t} \Psi_t\rsp \big\vert \le C N \no \q{1}{t} \q{2}{t} \Psi_t \no^2 + \frac{C}{N} \no v_{12} \q{1}{t} \q{2}{t} \Psi_t \no^2 \le C N \no \q{1}{t} \q{2}{t} \Psi_t \no^2  + \frac{C}{N} \no \nabla_1 \q{1}{t} \Psi_t \no^2.
\end{align*}
This completes the proof of the lemma.
\end{proof}

\begin{proof}[Proof of Lemma \ref{LEMMA:CHI:TILDE:CHI:BOG}]  
Using the decomposition in \eqref{DECOMPOSITION:PSI:TILDE}, it can be verified by direct calculation that if $\widetilde \Psi_t$ solves the equation $ i\partial_t \widetilde \Psi_t = \widetilde H_t \widetilde \Psi_t$,
then the corresponding $(\widetilde \chi^{(k)}_t)_{k=0}^N$ are solutions to the following system of coupled equations.
\begin{align*}
i\partial_t  \wt \chi^{(0)}_t = &  \sqrt{\frac{N}{N-1}} A^{(2\to 0),t} \wt \chi^{(2)}_t  \\
i\partial_t  \wt \chi^{(1)}_t = &  \Big( h^{t,\varphi_t}  + K^{(1),t} \Big) \wt \chi^{(1)}_t + \sqrt{\frac{N-2}{N-1}} A^{(3\to 1),t}  \wt \chi^{(3)}_t,\\
  i\partial_t  \wt \chi^{(k)}_t = &   \sum_{i=1}^k  \Big( h^{t,\varphi_t}_i + \frac{N-k}{N-1} K^{(1),t}_{i} \Big) \wt \chi_t^{(k)}  \\
  &  +  \frac{\sqrt{(N-k+2)(N-k+1)}}{N-1} A^{(k-2\to k),t} \wt \chi^{(k-2)}_t  +   \frac{\sqrt{(N-k)(N-k-1)}}{N-1}A^{(k +2 \to k),t*} \wt \chi_t^{(k+2)}
\end{align*}
for all $2\le k \le N$. Here, we have introduced the abbreviations
\begin{align*}
 		A^{(k-2\to k),t} \wt \chi^{(k-2)} = & \frac{1}{2\sqrt{k(k-1)}} \sum_{1\le i<j\le k} K^{(2),t}(x_i,x_j)  \chi^{(k-2)}(x_1,...,x_k\backslash x_i \backslash x_j ) \in \mathcal H_s^{t,k} , \\  	
 		A^{(k+2\to k),t*} \chi^{(k+2)} = & \frac{\sqrt{(k+1)(k+2)}}{2} \int  \int  \overline{ K^{(2),t}(x,y) } \chi^{(k+2)}(x_1,...,x_k,x,y)  dxdy  \in \mathcal H_s^{t,k} ,
\end{align*}
which for wave functions $\phi^{(k)} \in  \mathcal H_s^{(k),t}$, $\chi^{(k-2)} \in \mathcal H_s^{(k-2),t} $ satisfy the relation
\begin{align}
\lsp \phi^{(k)}, A^{(k-2\to k),t}  \chi^{(k-2)} \rsp_{\mathcal H_s^{(k),t}} =  \lsp A^{(k\to k-2),t*}  \phi^{(k)} , \chi^{( k-2 )} \rsp_{\mathcal H_s^{(k-2),t}}. \label{ADJOINT:A:K}
\end{align}
We further set $(\widetilde \chi^{(k)}_t=0)_{k\ge N+1}$ for all $t\ge 0$.
The next step is to bound the series 
\begin{align*}
\sum_{k=0}^\infty   \no \widetilde \chi_t^{(k)} - \chi_t^{(k)} \no^2_{\mathcal H_s^{(k),t}} \le \sum_{k=0}^\infty    \no \widetilde \chi_t^{(k)} \no^2_{\mathcal H_s^{(k),t} } + \sum_{k=0}^\infty \no \chi_t^{(k)} \no^2_{\mathcal H_s^{(k),t}} <\infty
\end{align*}
which is finite since $\sum_{k=0}^\infty    \no \widetilde \chi_t^{(k)} \no^2_{\mathcal H^{(k),t}_s} = \no \wt \Psi_t \no^2 = 1$ and since the Bogoliubov hierarchy is well-posed in the sense that $\sum_{k=0}^\infty \no \chi_t^{(k)} \no^2_{\mathcal H^{(k),t}_s} <\infty$, see in \cite[Section 4.3]{Lewin:2015a}. One thus finds (note that due to finiteness of the series one can differentiate it termwise),
\begin{align}
&  \frac{1}{2}\sum_{k=0}^\infty \partial_t  \no \widetilde \chi_t^{(k)} - \chi_t^{(k)} \no^2_{\mathcal H^{(k),t}_s} \nonumber\\
= &-  \sum_{k=1}^N \im \lsp \widetilde \chi_t^{(k)} - \chi_t^{(k)},  \sum_{i=1}^k \Big( h^{t,\varphi_t}_i  + K^{(1),t}_{i} \Big) \big( \widetilde \chi_t^{(k)} - \chi_t^{(k)} \big) \rsp_{\mathcal H^{(k),t}_s } \label{DERIVATIVE:DIFFERENCE:CHI:BOG:1}  \\
& - \sum_{k=1}^{N} \im \lsp \widetilde \chi_t^{(k)} - \chi_t^{(k) }, \sum_{i=1}^k \Big( \frac{N-k}{N-1} -1 \Big)
K^{(1),t}_{i} \widetilde \chi_t^{(k)} \rsp_{\mathcal H^{(k),t}_s}  \label{DERIVATIVE:DIFFERENCE:CHI:BOG:2}\\ 
&-  \sum_{k=N+1}^\infty \im \lsp  \chi_t^{(k)}, \sum_{i=1}^k \Big( h^{t,\varphi_t}_i  + K^{(1),t}_{i}   \Big) \chi_t^{( k ) }  \rsp_{ \mathcal H^{(k),t}_s } \label{DERIVATIVE:DIFFERENCE:CHI:BOG:3} \\
& -  \sum_{k=2}^{N} \im \lsp \widetilde \chi_t^{(k)} - \chi_t^{( k ) }, {A^{(k-2\to k),t* }} \big( \widetilde \chi_t^{(k-2) } -  \chi_t^{( k-2) } \big) \rsp_{\mathcal H^{(k),t}_s} \label{DERIVATIVE:DIFFERENCE:CHI:BOG:4} \\
& -  \sum_{k=2}^{N} \im \lsp \widetilde \chi_t^{(k)} - \chi_t^{( k ) }, \Big( \frac{\sqrt{(N-k+2)(N-k+1)}}{N-1} -1 \Big)  A^{(k-2\to k),t *}	\widetilde \chi_t^{( k-2 ) }   \rsp_{\mathcal H^{(k),t}_s}  \label{DERIVATIVE:DIFFERENCE:CHI:BOG:5}\\
& -  \sum_{k=0}^{N-2} \im \lsp \widetilde \chi_t^{(k)} - \chi_t^{( k ) },  A^{(k+2\to k),t}  \big( \widetilde \chi^{ ( k+2 ) }  -  \chi_t^{ ( k+2 ) } \big) \rsp_{\mathcal H^{(k),t}_s }  \label{DERIVATIVE:DIFFERENCE:CHI:BOG:6}\\
& -  \sum_{k=0}^{N-2} \im \lsp \widetilde \chi_t^{(k)} - \chi_t^{( k )}, 
\Big(  \frac{\sqrt{(N-k)(N-k-1)}}{N-1} - 1 \Big) A^{(k+2\to k),t} \widetilde \chi_t^{( k+2 ) }  \rsp_{ \mathcal H^{(k),t}_s } \label{DERIVATIVE:DIFFERENCE:CHI:BOG:7} \\
& + \sum_{k=N-1}^{N} \im \lsp \widetilde \chi_t^{(k)} - \chi_t^{( k ) }, 
 A^{(k+2\to k ),t} \chi_t^{( k+2 ) }  \rsp_{\mathcal H^{(k),t}_s} \label{DERIVATIVE:DIFFERENCE:CHI:BOG:8} \\
& -  \sum_{k=N+1}^{\infty} \im \lsp  \chi_t^{( k )}, A^{(k-2\to k),t* } \chi_t^{( k-2 )}   \rsp_{ \mathcal H^{(k),t}_s }  \label{DERIVATIVE:DIFFERENCE:CHI:BOG:9}\\
& -  \sum_{k=N-1}^{\infty} \im \lsp \chi_t^{( k )}, A^{(k+2 \to k),t} \chi_t^{( k+2 )}   \rsp_{\ \mathcal H^{(k),t}_s}  \label{DERIVATIVE:DIFFERENCE:CHI:BOG:10}.
\end{align}
We first note that
$$ \eqref{DERIVATIVE:DIFFERENCE:CHI:BOG:1} = 0, \hspace{1cm} \eqref{DERIVATIVE:DIFFERENCE:CHI:BOG:3}=0$$
since ${K^{(1),t*}} = K^{(k),t}$. Using Cauchy Schwarz as well as the geometric mean inequality, one finds
\begin{align*}
\vert \eqref{DERIVATIVE:DIFFERENCE:CHI:BOG:2} \vert \le & \sum_{k=1}^N \no \widetilde \chi_t^{(k)} - \chi_t^{( k ) } \no^2_{\mathcal H_s^{(k),t} } + \sum_{k=1}^N \Big( 1- \frac{N-k}{N-1} \Big)^2 \lno \sum_{i=1}^k K^{(1),t}_i \rno_{op}^2 \  \no \widetilde \chi_t^k\no^2_{\mathcal H_s^{(k),t}} \\
\le & \sum_{k=0}^\infty \no \widetilde \chi_t^k - \chi_t^{(k)} \no^2_{\mathcal H_s^{(k),t}} + C^{\varphi_t}  \sum_{k=0}^N \frac{k^4}{N^2}  \no \widetilde \chi_t^{(k)}\no^2_{\mathcal H_s^{(k),t}},
\end{align*}
where we have used $\no K^{(1),t}_i \no_{op}^2 \le C^{\varphi_t}$. Recalling that $\no \widetilde \chi_t^{(k)}\no^2_{\mathcal H_s^{(k),t}} = \no P_{N,k}^{\varphi_t}\widetilde \Psi_t \no^2$, we can estimate the last factor in the second summand by means of Lemma \ref{A:PRIORI:ESTIMATE:PSI}, i.e.,
\begin{align}\label{ALPHA:ESTIMATE:FLUCTUATIONS}
\sum_{k=0}^N \frac{k^4}{N^2}  \no \widetilde \chi_t^{(k)}\no^2_{\mathcal H_s^{(k),t} } \le   N^2 \sum_{k=0}^N \frac{k^3}{N^3} \no P_{N,k}^{\varphi_t}\widetilde \Psi_t \no^2 = N^2 \lsp \wt \Psi_t, (\cf{m}{t})^3 \wt \Psi_t\rsp \le \frac{e^{C^{\varphi_t}}}{N}.
\end{align}
Recalling \eqref{ADJOINT:A:K}, and substituting the summation index,
\begin{align*}
\eqref{DERIVATIVE:DIFFERENCE:CHI:BOG:4} + \eqref{DERIVATIVE:DIFFERENCE:CHI:BOG:6} = & -  \sum_{k=0}^{N-2} \im \lsp A^{(k+2\to k),t} \big( \widetilde \chi_t^{(k+2)} - \chi_t^{ (k +2 ) }\big),   \widetilde \chi_t^{( k ) } -  \chi_t^{ ( k ) }  \rsp_{\mathcal H_s^{(k),t} }  \\
& -  \sum_{k=0}^{N-2} \im \lsp \widetilde \chi_t^{(k)} - \chi_t^{ ( k ) }, A^{(k+2\to k),t} \big( \widetilde \chi_t^{( k+2 ) }  -  \chi_t^{ ( k+2 ) } \big) \rsp_{\mathcal H_s^{(k),t} }  = 0.
\end{align*}
The same way, one finds
\begin{align*}
\eqref{DERIVATIVE:DIFFERENCE:CHI:BOG:9} + \eqref{DERIVATIVE:DIFFERENCE:CHI:BOG:10} = 0.
\end{align*}
Next, using again Cauchy Schwarz and the geometric mean inequality, and in the second step $\no {A^{(k-2\to k),t*}} \no^2_{op} \le C^{\varphi_t} k^2$ and $(\frac{\sqrt{(N-k+2)(N-k+1)}}{N-1} -1)^2 \le C \frac{k^2}{N^2}$,
\begin{align*}
\vert \eqref{DERIVATIVE:DIFFERENCE:CHI:BOG:5} \vert \le  &  \sum_{k=2}^{N} \no  \widetilde \chi_t^{(k)} - \chi_t^{( k ) } \no_{\mathcal H_s^{(k),t} }^2 + \sum_{k=2}^{N}   \  \Big( \frac{\sqrt{(N-k+2)(N-k+1)}}{N-1} -1 \Big)^2  \no {A^{(k-2\to k),t*}} \no^2_{op} \ \no \widetilde \chi_t^{( k-2 ) } \no_{\mathcal H^{(k),t}_s}^2 \\
\le &  \sum_{k=0}^{\infty} \no  \widetilde \chi_t^{(k)} - \chi_t^{(k)} \no_{\mathcal H_s^{(k),t} }^2 + C^{\varphi_t} \sum_{k=2}^{N}   \  \frac{k^4}{N^2}  \no P_{N,k-2}^{\varphi_t} \wt \Psi_t \no^2 \\
\le &  \sum_{k=0}^{\infty} \no  \widetilde \chi_t^{(k)} - \chi_t^{( k ) } \no_{\mathcal H_s^{(k),t} }^2 + C^{\varphi_t} \sum_{k=0}^{N-2}   \  \frac{(k^4+k^3+k^2+k+1)}{N^2}  \no P_{N,k}^{\varphi_t} \wt \Psi_t \no^2   \\
\le & \sum_{k=0}^{\infty} \no  \widetilde \chi_t^{(k)} - \chi_t^{(k)} \no_{\mathcal H_s^{(k),t} }^2 + \frac{e^{C^{\varphi_t}}}{N}
\end{align*}
by the same argument as in \eqref{ALPHA:ESTIMATE:FLUCTUATIONS}. Along the same steps, it follows that
\begin{align*}
\vert \eqref{DERIVATIVE:DIFFERENCE:CHI:BOG:7} \vert \le \sum_{k=0}^{\infty} \no  \widetilde \chi_t^{(k)} - \chi_t^{( k ) } \no_{\mathcal H_s^{(k),t} }^2 + \frac{e^{C^{\varphi_t}}}{N}.
\end{align*}
It remains to estimate
\begin{align*}
\eqref{DERIVATIVE:DIFFERENCE:CHI:BOG:8} \le &  C^{\varphi_t} N \Big( \no \wt  \chi_t^{( N-1 )} \no_{\mathcal H_s^{(N-1),t} } \no \chi_t^{ ( N+1 ) } \no_{\mathcal H_s^{(N+1),t} } + \no \wt \chi_t^{ ( N ) } \no_{\mathcal H_s^{(N),t} } \no \chi_t^{ ( N+2 ) } \no_{\mathcal H_s^{(N+2),t} } \Big) \\
\le & C^{\varphi_t} N^2\Big( \no \wt  \chi_t^{ ( N-1 ) }\no^2_{\mathcal H_s^{(N-1),t}}   + \no \wt \chi_t^{ ( N ) }\no^2_{\mathcal H_s^{(N),t} }  \Big) + \no \chi_t^{  ( N+1 ) } \no^2_{\mathcal H_s^{(N+1),t} } + \no \chi_t^{ ( N+2 ) } \no^2_{\mathcal H_s^{(N+2),t} }\\
\le   & C^{\varphi_t} N^2\Big( \no \wt  \chi_t^{ ( N-1 ) }\no^2_{\mathcal H_s^{(N-1),t}} + \no \wt \chi_t^{ ( N ) }\no^2_{\mathcal H_s^{(N),t} } \Big) +  \sum_{k=0}^\infty \no \wt  \chi_t^{ ( k ) } - \chi_t^{ ( k ) } \no^2_{\mathcal H_s^{(k),t} } \\
\le & \frac{e^{C^{\varphi_t}}}{N} + \sum_{k=0}^\infty \no \wt \chi_t^{ ( k ) } - \chi_t^{ ( k ) } \no^2_{\mathcal H_s^{(k),t} },
\end{align*}
where the last estimate follows from 
\begin{align*}
N^3 \Big(\no \wt  \chi_t^{( N-1 ) }\no^2_{\mathcal H_s^{(N-1),t} } + \no \wt \chi_t^{ ( N ) }\no^2_{\mathcal H^{(N),t} } \Big) \le \sum_{k=0}^{N} k^3 \no \wt \chi^{ ( k ) }_t \no_{\mathcal H_s^{(k),t} }^2  = N^3 \lsp \widetilde \Psi_t, (\cf{m}{t})^3 \widetilde \Psi_t\rsp \le e^{C^{\varphi_t}}.
\end{align*}
Hence, via Gr\"onwall,
\begin{align*}
 \sum_{k=0}^\infty \no \wt \chi_t^{(k)} - \chi_t^{( k ) } \no^2_{\mathcal H_s^{(k),t} } \le e^{C^{\varphi_t}} \Big( \sum_{k=0}^\infty \no \widetilde \chi_0^{( k ) }- \chi_0^{ ( k ) } \no^2_{\mathcal H_s^{(k),t} } + \frac{1}{N} \Big) = \frac{e^{C^{\varphi_t}}}{N}
\end{align*}
since $\widetilde \chi_0^{ (k) } = \chi_0^{ ( k ) }$ for all $k$. This completes the argument.
\end{proof}

\bigskip

\noindent{\it Acknowledgments.} We would like to thank Maximilian Jeblick, Phan Th\`{a}nh Nam and Marcin Napi\'{o}rkowski for helpful discussions.\ D.~M.\ gratefully acknowledges financial support from the German Academic Scholarship Foundation.\ S.~P.'s research has received funding from the People Programme (Marie Curie Actions) of the European Union's Seventh Framework Programme (FP7/2007-2013) under REA grant agreement n\textdegree~291734.

\end{document}